\theoremstyle{definition}
\newtheorem{defn}{\protect\definitionname}
\theoremstyle{plain}
\newtheorem{thm}{\protect\theoremname}
\theoremstyle{plain}
\newtheorem{prop}{\protect\propositionname}
\theoremstyle{plain}
\newtheorem{lem}{\protect\lemmaname}
\theoremstyle{remark}
\newtheorem{rem}{\protect\remarkname}
\author{

\IEEEauthorblockN{Junting~Chen and Urbashi~Mitra}


\IEEEauthorblockA{Ming Hsieh Department of Electrical Engineering, 
University of Southern California \\
Los Angeles, CA 90089 USA, email:\{juntingc, ubli\}@usc.edu}

\thanks{This research has been funded in part by one or more of the following grants: ONR N00014-15-1-2550, NSF CNS-1213128, NSF CCF-1718560, NSF CCF-1410009, NSF CPS-1446901, and AFOSR FA9550-12-1-0215.}
}
\newcommand{\newac}{\newacronym}
\newcommand{\ac}{\gls}
\newcommand{\Ac}{\Gls}
\providecommand{\definitionname}{Definition}
\providecommand{\lemmaname}{Lemma}
\providecommand{\propositionname}{Proposition}
\providecommand{\remarkname}{Remark}
\providecommand{\theoremname}{Theorem}
\begin{document}
\title{Unimodality-Constrained Matrix Factorization for Non-Parametric Source
Localization}

\maketitle
%
%



\ifdefined\SINGLECOLUMN
	\setkeys{Gin}{width=0.5\columnwidth}
	\newcommand{\figfontsize}{\footnotesize} 
\else
	\setkeys{Gin}{width=1.0\columnwidth}
	\newcommand{\figfontsize}{\normalsize} 
\fi
\begin{abstract}
Herein, the problem of simultaneous localization of multiple sources
given a number of energy samples at different locations is examined.
The strategies do not require knowledge of the signal propagation
models, nor do they exploit the spatial signatures of the source.
A non-parametric source localization framework based on a matrix observation
model is developed. It is shown that the source location can be estimated
by localizing the peaks of a pair of location signature vectors extracted
from the incomplete energy observation matrix. A robust peak localization
algorithm is developed and shown to decrease the source localization
\ac{mse} faster than $\mathcal{O}(1/M^{1.5})$ with $M$ samples,
when there is no measurement noise. To extract the source signature
vectors from a matrix with mixed energy from multiple sources, a \ac{umf}
problem is formulated, and two rotation techniques are developed to
solve the \ac{umf} efficiently. Our numerical experiments demonstrate
that the proposed scheme achieves similar performance as the kernel
regression baseline using only $1/5$ energy measurement samples in
detecting a single source, and the performance gain is more significant
in the cases of detecting multiple sources.
\end{abstract}
\begin{IEEEkeywords}
Source localization, unimodal, sparse signal processing, matrix completion,
non-parametric estimation
\end{IEEEkeywords}

\glsresetall

\section{Introduction}

\label{sec:intro}


Source localization is important in many domains, such as salvage,
exploration, tactical surveillance, and hazard finding. However, in
many application scenarios, it is difficult to obtain the correct
propagation parameters of the source signal for localization. For
example, in underwater localization with acoustic signals, the signal
propagation depends on the water temperature, pressure, and salinity,
which are location-dependent. In gas source localization, the gas
diffusion characteristics depends on the chemical type and the atmospheric
conditions. Therefore, model-based parametric localization methods
\cite{BecStoLi:J08,SheHu:J05,MeeMitNar:J08,LiuZakChe:J12} may not
be reliable in application scenarios with a temporal and spatial varying
nature. 

Model-free positioning schemes, such as connectivity based localizations
and weighted centroid localizations (WCL), have attracted a lot of
interest due to their simplicity in implementation and the robustness
to variations of propagation properties \cite{chen2010mobile,he2003range,wang2011weighted,ChoMit:C15,CheMit:C17a}.
However, connectivity based techniques \cite{chen2010mobile,he2003range}
can only provide coarse localization results and the performance of
WCL \cite{wang2011weighted} highly depends on the choice of parameters
and the propagation environments. Recently, machine learning techniques,
such as kernel regression and support vector machines \cite{JinSohWon:J10,KimParYooKimPar:J13}
have also been explored for localization. However, these methods usually
require a separate training phase which may not be available in practice.
Furthermore, blind deconvolution methods for source separation and
localization \cite{abadi2012blind,ahmed2014blind,li2016off,tang2013compressed,yang2016super,cichocki2006new}
usually require a sequence of measurements that convey temporal or
spatial characteristics, which is not the case in our problem.

\begin{figure}
\begin{centering}
\includegraphics[width=1\columnwidth]{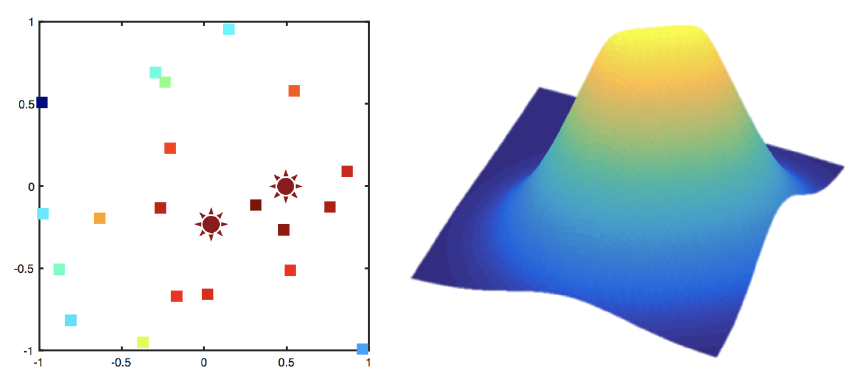}
\par\end{centering}
\caption{\label{fig:model} Left: Energy measurements at different locations
(colored-squares) to localize two sources (red stars). Right: The
underlying energy field appears as one peak when the two sources are
close to each other. }
\end{figure}
This paper studies non-parametric methods for localizing several sources
based on a few energy measurements at different sensing locations
as illustrated in Fig. \ref{fig:model}. Our previous works studied
the single source case in \cite{ChoMit:C15,ChoKumNarMit:J16}, where
a trust region was developed for targeting the source and a multi-step
exploration-exploitation strategy was developed for active search
using an underwater robot. The results were extended to the cases
of two or more sources by exploiting novel coordinate system rotation
techniques \cite{CheMit:C17a,CheMit:C17b}. However, these works were
based on a decomposable assumption on the energy propagation field
in 2D.

In this paper, we find that the decomposable assumption is \emph{not}
necessary. Specifically, we show that the source location in 2D can
be found by localizing the peaks of the left and right dominant singular
vectors of the energy matrix sampled on a discretized area. Such a
result holds \emph{universally} as long as the received signal energy
is a decreasing function of the distance from the source. With such
a theoretical guarantee, we first develop strategies to localize a
single source. The method extracts a pair of signature vectors from
a sparsely sampled observation matrix, and then, estimates the source
location by robust peak localization from the signature vectors. The
corresponding localization \ac{mse} is also analyzed. We then move
to the case of localizing two sources with equal power. The \ac{svd}
framework in our preliminary work \cite{ChoKumNarMit:J16} does not
work herein, because the singular vectors are not the desired signature
vectors of each source. We address this issue by optimally rotating
the coordinate system such that the sources are aligned with the rows
or columns of the observation matrix, and as a result, the \ac{svd}-based
method can be applied. Finally, we consider the general multi-source
case by formulating a \ac{umf} problem and solve it with projected
gradient algorithms. While a formal theoretical justification that
connects the multi-source localization with the \ac{umf} is yet to
be explored, our experiments demonstrate the robustness of the proposed
\ac{umf}-based methods.

To summarize, the following contributions are made herein: 
\begin{itemize}
\item We propose a matrix observation model for the energy field of a single
source, and prove that the left and right dominant singular vectors
are unimodal with their peaks representing the source location. 
\item We develop non-parametric localization algorithms based on sparse
matrix processing. In the single source case, we derive a localization
\ac{mse} bound and show that the \ac{mse} decreases faster than
$\mathcal{O}(1/M^{1.5})$ when there is no sampling noise.
\item For a general multi-source case, we formulate a \ac{umf} problem
and develop a projected gradient algorithm to extract the signature
vectors for localizing each of the sources. Our numerical experiments
demonstrate the robustness of the these methods in uncertain environments. 
\end{itemize}

The rest of the paper is organized as follows. Section \ref{sec:system-model}
establishes the matrix observation model. Section \ref{sec:localization-via-unimodal-symmetric}
develops a non-parametric localization estimator by establishing the
unimodal and symmetry property of the source signature vectors. Section
\ref{sec:single-source} extracts the signature vectors for the single
source scenario and analyzes the \ac{mse} localization performance.
Section \ref{sec:double-source} and \ref{sec:arbitrary-number-of-sources}
develop coordinate system rotation techniques to help extract the
signature vectors in the two source and arbitrary number of sources
cases. Numerical results are presented in Section \ref{sec:numerical}
and conclusion is given in Section \ref{sec:conclusion}. 

\emph{Notation:} Vectors are written as bold italic letters $\bm{\mathit{x}}$
and matrices as bold capital italic letters $\bm{\mathit{X}}$. Random
variables, random vectors, and random matrices are written as $\mathsf{x}$,
bold letters $\bm{\mathsf{x}}$, and bold capital letters $\bm{\mathsf{X}}$,
respectively. For a matrix $\bm{X}$, $X_{ij}$ denotes the entry
in the $i$th row and $j$th column of $\bm{X}$. For a vector $\bm{x}$,
$x_{i}$ denotes the $i$th entry of $\bm{x}$. The notation $o(x)$
means $\lim_{x\to0}o(x)/x\to0$, and $\mathcal{O}(x)$ means $\lim\sup_{x\to0}\mathcal{O}(x)/x<\infty.$

\section{System Model}

\label{sec:system-model}


Assume that there are $K$ incoherent sources in an area with radius
$L/4$. The location of source $k$ is denoted by $\bm{\mathit{s}}_{k}\in\mathbb{R}^{2}$.
The sources continuously emit signals that form an aggregated energy
field, which can be measured at $M$ different sensing locations in
the target area $\mathcal{A}$ with radius $L/2$. Let $d(\bm{\mathit{z}},\bm{\mathit{s}})=\|\bm{\mathit{z}}-\bm{s}\|_{2}$
be the distance between the sensing location at $\bm{\mathit{z}}\in\mathbb{R}^{2}$
and a source location $\bm{\mathit{s}}$. The energy measurement $\mathsf{h}^{(m)}$
at the $m$th sensing location $\bm{\mathit{z}}^{(m)}$ is given by
\begin{equation}
\mathsf{h}^{(m)}=\sum_{k=1}^{K}\alpha_{k}h(d(\bm{\mathit{z}}^{(m)},\bm{\mathit{s}}_{k}))+\mathsf{n}^{(m)}\label{eq:measurement-model}
\end{equation}
where $\alpha_{k}$ is the transmit power of source $k$ and $\mathsf{n}^{(m)}$
is the additive noise of the $m$th measurement. The measurement noise
$\mathsf{n}^{(m)}$ is modeled as a zero mean random variable with
variance $\sigma_{\text{n}}^{2}$ and bounded support $|\mathsf{n}^{(m)}|<\bar{\sigma}_{\text{n}}$.
The function $h(d)$ is a non-negative strictly decreasing function
of the distance $d$ from the source. In addition, we assume that
$h(d)$ is Lipschitz continuous, square-integrable, and concentrated
in the bounded area $\mathcal{A}$, \emph{i.e.}, $\int_{\mathbb{R}^{2}}h(d(\bm{z},\bm{s}))^{2}d\bm{z}=\int_{\mathcal{A}}h(d(\bm{z},\bm{s}))^{2}d\bm{z}=1$.
Note that, in reality, the effective energy response $h(d)$ measured
by practical devices always has a bounded support. However, neither
the source power $\alpha_{k}$, the function $h(d)$, nor the noise
distribution are known.

We propose to use a matrix observation model for non-parametric source
localization. Specifically, we discretize the $L\times L$ target
area, $\mathcal{A}$, into $N\times N$ grid points, equally spaced
with minimum distance $\frac{L}{N}$, $N\geq\sqrt{M}$. Let $\bm{\mathsf{H}}$
be the $N\times N$ observation matrix that contains the $M$ energy
measurements, \emph{i.e.}, the $(i,j)$th entry of $\bm{\mathsf{H}}$
is given by 
\begin{equation}
\mathsf{H}_{ij}=\frac{L}{N}\mathsf{h}^{(m)}\label{eq:matrix-observation-model}
\end{equation}
if the $m$th energy measurement is taken inside the $(i,j)$th grid
cell. Note that $\bm{\mathsf{H}}$ may contain missing values and
the measurement locations $\bm{z}^{(m)}$ may not be at the center
of the grid cell. Our objective herein is to localize the sources
by analyzing the incomplete matrix $\bm{\mathsf{H}}.$

\textcolor{purple}{}

\section{Localization based on Unimodality}

\label{sec:localization-via-unimodal-symmetric}

In this section, we first show that the dominant singular vectors
of the energy matrix sampled in a discretized single source energy
field are unimodal and symmetric. Then, using these properties, a
localization algorithm is developed. Finally, a matrix factorization
problem is formulated to extract the signature vectors in the case
of multiple sources and incomplete matrix observations. 

\subsection{The Unimodal Property}

Let $\bm{\mathit{H}}^{(k)}\in\mathbb{R}^{N\times N}$ be the matrix
that captures the energy contributed by source $k$ sampled at discretized
locations $\{\bm{c}_{i,j}\}$, \emph{i.e.}, the $(i,j)$th element
of $\bm{\mathit{H}}^{(k)}$ is given by 
\begin{equation}
H_{ij}^{(k)}=\frac{L}{N}\alpha_{k}h(d(\bm{\mathit{c}}_{i,j},\bm{\mathit{s}}_{k}))\label{eq:matrix-observation-model-true}
\end{equation}
in which, $\bm{\mathit{c}}_{i,j}\in\mathbb{R}^{2}$ is the center
location of the $(i,j)$th grid cell. The factor $\frac{L}{N}$ is
for normalization purposes: 
\begin{align}
\big\|\bm{H}^{(k)}\big\|_{\text{F}}^{2} & =\alpha_{k}^{2}\sum_{i,j}\Big(\frac{L}{N}\Big)^{2}h(d(\bm{\mathit{c}}_{i,j},\bm{\mathit{s}}_{k}))^{2}\nonumber \\
 & =\alpha_{k}^{2}\int_{\mathcal{A}}h(d(\bm{z},\bm{\mathit{s}}_{k}))^{2}d\bm{z}+o\Big(\frac{L^{2}}{N^{2}}\Big)\label{eq:Hk-normalization}
\end{align}
which equals to $\alpha_{k}^{2}$ up to a marginal discretization
error. In the signal model (\ref{eq:matrix-observation-model-true})
and (\ref{eq:Hk-normalization}), the physical meaning of $h(d(x,y))$
is the power density of the source signal measured at location $(x,y)$
and the entry $H_{ij}^{(k)}$ approximates the energy of the $k$th
source signal in the $(i,j)$th grid cell. 

Consider the \ac{svd} of $\bm{H}^{(k)}=\sum_{i}\sigma_{k,i}\bm{\mathit{u}}_{k,i}\bm{\mathit{v}}_{k,i}^{\text{T}}$,
where $\sigma_{k,i}$ denote the $i$th largest singular value of
$\bm{H}^{(k)}$. Then, we have the following model for the $K$ source
energy field.

\begin{defn}
[Signature vector and signature matrix]\label{def:Signature-vector}
The \emph{signature matrix} for all $K$ sources is defined as
\begin{align}
\bm{\mathit{H}}\triangleq\sum_{k=1}^{K}\bm{\mathit{H}}^{(k)} & =\sum_{k=1}^{K}\sigma_{k,1}\bm{u}_{k,1}\bm{v}_{k,1}^{\text{T}}+\sum_{k=1}^{K}\sum_{i=2}^{N}\sigma_{k,i}\bm{\mathit{u}}_{k,i}\bm{\mathit{v}}_{k,i}^{\text{T}}.\label{eq:matrix-decomposition-model}
\end{align}
In addition, the vectors $\bm{u}_{k}\triangleq\bm{u}_{k,1}$ and $\bm{v}_{k}\triangleq\bm{v}_{k,1}$,
are defined as the \emph{signature vectors} of source $k$. 
\end{defn}

Note that $\bm{u}_{k}$ and $\bm{v}_{k}$ are not singular vectors
of $\bm{H}$, but they are the dominant singular vectors of $\bm{H}^{(k)}$,
which captures the energy contribution from the $k$th source.

Accordingly, the observation matrix $\bm{\mathsf{H}}$ in (\ref{eq:matrix-observation-model})
is a noisy and incomplete sampled version of the signature matrix
$\bm{H}$.

We show that the signature vectors $\bm{u}_{k}$ and $\bm{v}_{k}$
are \emph{unimodal}.
\begin{defn}
[Unimodal] A vector $\bm{v}\in\mathbb{R}^{N}$ is unimodal if the
following is satisfied: 
\begin{align}
0\leq & v_{1}\leq v_{2}\leq\dots\leq v_{s}\label{eq:unimodal-1}\\
 & v_{s}\geq v_{s+1}\geq\dots\geq v_{N}\geq0\label{eq:unimodal-2}
\end{align}
 for some integer $1\leq s\leq N$, where $v_{i}$ is the $i$th entry
of $\bm{v}$. 
\end{defn}

Note that there could be multiple entries $v_{s}=v_{s+1}=\dots=v_{s+I}$
that are the largest. In other words, the vector has a flat peak.
This will not affect the algorithm design, nor the analytical results
in this paper. 
\begin{thm}
[Unimodal Signature Vector]\label{thm:unimodal-signature-vector}
The signature vectors $\bm{u}_{k}$ and $\bm{v}_{k}$ are unimodal.
In addition, suppose that source $k$ is located inside the $(m,n)$th
grid cell. Then the peaks of $\bm{u}_{k}$ and $\bm{v}_{k}$ are located
at the $m$th entry of $\bm{u}_{k}$ and the $n$th entry of $\bm{v}_{k}$,
respectively. 
\end{thm}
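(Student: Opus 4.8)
The plan is to exploit the fact that the matrix $\bm{H}^{(k)}$ has a special \emph{rank-structure-free} factorization in disguise: although $h(d(\bm{c}_{i,j},\bm{s}_k))$ is generally not a separable function of the grid coordinates, the matrix is totally positive (all entries positive) and its rows/columns are ordered in a way that forces the dominant singular vectors to inherit the unimodality of the underlying energy profile. First I would set up coordinates so that $\bm{s}_k$ lies in the $(m,n)$th cell, and write $H^{(k)}_{ij} = c\,h\!\left(\sqrt{a_i^2 + b_j^2}\right)$ where $a_i$ is the (signed) horizontal offset of column-center $i$ from $s_k$'s $x$-coordinate and $b_j$ the vertical offset of row-center $j$; then $|a_i|$ is unimodal in $i$ with minimum at $i=n$ (decreasing then increasing in $|a_i|$), and likewise $|b_j|$ with minimum at $j=m$. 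Because $h$ is strictly decreasing, each row of $\bm{H}^{(k)}$ is a unimodal vector peaked at column $n$, and each column is a unimodal vector peaked at row $m$.

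The key analytic step is to show that the dominant left singular vector $\bm{u}_k$ is unimodal with peak at index $m$. I would use the Perron--Frobenius / power-iteration characterization: $\bm{u}_k$ is (up to sign, and it can be taken entrywise nonnegative since $\bm{H}^{(k)}$ has all nonnegative entries) the limit of $(\bm{H}^{(k)}(\bm{H}^{(k)})^{\mathrm{T}})^t \bm{w}$ for generic nonnegative $\bm{w}$, equivalently the Perron eigenvector of the nonnegative symmetric matrix $\bm{G} \triangleq \bm{H}^{(k)}(\bm{H}^{(k)})^{\mathrm{T}}$. So it suffices to prove (i) that $\bm{G}$ maps the cone of unimodal-peaked-at-$m$ vectors into itself, and (ii) that the peak location is exactly $m$. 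For (i), note $G_{pq} = \sum_j H^{(k)}_{pj} H^{(k)}_{qj}$; I would show that the matrix $\bm{G}$ is itself ``unimodal-kernel'' in the sense that for fixed $q$, the sequence $G_{pq}$ is unimodal in $p$ with peak at $p = m$ when... — actually the cleaner route is to show $\bm{G}$ satisfies a sign-regularity (total-positivity-type) property: the columns of $\bm{H}^{(k)}$, being unimodal and centered at the \emph{same} row $m$ (a ``synchronized'' family), form a system whose Gram-type products preserve the ``peaked at $m$'' structure. Concretely I would verify that if $\bm{x}$ is nonnegative and nondecreasing up to $m$ then $\nondecreasing$ up to $m$, applying $\bm{H}^{(k)}$ transposes this to a statement about $\bm{v}$-side vectors, and applying $(\bm{H}^{(k)})^{\mathrm{T}}$ again returns a ``peaked at $m$'' vector — the monotone-rearrangement/Chebyshev-sum inequality being the workhorse for each row-by-row comparison. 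A symmetric argument on $(\bm{H}^{(k)})^{\mathrm{T}}\bm{H}^{(k)}$ handles $\bm{v}_k$ and its peak at column $n$.

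More explicitly, for the peak-location claim I would argue by a pairwise exchange: to show $u_{k,m} \geq u_{k,p}$ for any $p \neq m$, use the eigen-relation $\lambda u_{k,p} = \sum_{q} G_{pq} u_{k,q}$ and compare term by term with $\lambda u_{k,m} = \sum_q G_{mq} u_{k,q}$, reducing everything to the inequality $G_{mq} \geq G_{pq}$ for all $q$ — which in turn follows from $H^{(k)}_{mj} \geq H^{(k)}_{pj}$ for every $j$ (true because row $m$ is the closest row to the source, so $|b_m| \leq |b_p|$ hence $h(\sqrt{a_j^2+b_m^2}) \geq h(\sqrt{a_j^2+b_p^2})$) together with positivity of the $u_{k,q}$. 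The monotone (not just peak) part between indices needs the slightly stronger statement that $p \mapsto G_{pq}$ is itself unimodal with peak at $m$ for each $q$, which follows from the same distance-monotonicity plus a sum-of-products unimodality lemma. I would state and prove that small lemma (product and sum of nonnegative unimodal sequences sharing a common peak is unimodal with that peak) as the technical core.

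The main obstacle I anticipate is rigor around the non-separability: unlike the rank-one separable case, $\bm{H}^{(k)}$ has full rank in general, so I cannot simply read off $\bm{u}_k$ as a known vector, and the cone-invariance argument must genuinely use that all columns of $\bm{H}^{(k)}$ are unimodal sequences with the \emph{same} peak row $m$ (and all rows with the same peak column $n$) — a ``coherently centered'' family. Verifying that $\bm{G} = \bm{H}^{(k)}(\bm{H}^{(k)})^{\mathrm{T}}$ preserves the unimodal-peaked-at-$m$ cone is the delicate part, because a Gram matrix of unimodal vectors need not have unimodal columns unless the peaks are aligned; I expect to need the radial/Euclidean form $h(\sqrt{a_i^2+b_j^2})$ rather than merely ``each row unimodal, each column unimodal'' as an abstract hypothesis. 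A secondary subtlety is the $o(\cdot)$ discretization error and the edge effects near the boundary of $\mathcal{A}$, but since the statement is about the \emph{exact} finite matrix $\bm{H}^{(k)}$ in \eqref{eq:matrix-observation-model-true} these do not enter; and the flat-peak (non-strict) case is handled automatically since all inequalities above are non-strict.
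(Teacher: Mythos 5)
Your proposal is correct and follows essentially the same route as the paper: both form the Gram matrices $\bm{H}^{(k)}(\bm{H}^{(k)})^{\mathrm{T}}$ and $(\bm{H}^{(k)})^{\mathrm{T}}\bm{H}^{(k)}$, show their columns are unimodal with peak at the source's row/column index via the entrywise domination $H^{(k)}_{pj}\leq H^{(k)}_{p'j}$ when row $p'$ is closer to the source (your ``coherently centered'' observation is exactly the paper's key lemma), and then pass to the dominant singular vector by power iteration plus Perron--Frobenius, the unimodal-peaked-at-$m$ cone being closed and invariant. The only cosmetic difference is that the paper iterates $\bm{R}^{q}/\mathrm{tr}\{\bm{R}^{q}\}\to\bm{v}\bm{v}^{\mathrm{T}}$ and reads off unimodality from the limit's columns, whereas you iterate on a vector; the substance is identical.
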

\begin{proof}
See Appendix \ref{app:pf-unimodal-sinature-vector}.
\end{proof}

Note that such a property holds for general propagation functions
$h(d)$.

\subsection{The Symmetry Property}

It can be further shown that the signature vectors $\bm{u}_{k}$ and
$\bm{v}_{k}$ are symmetric. 

Let $\bm{u}_{k}^{N}$ and $\bm{v}_{k}^{N}$ denote the signature vectors
under the $N\times N$ grid topology. Consider a Cartesian coordinate
system $\mathcal{C}$ and denote $\bm{c}_{i,j}=(c_{i,j,1},c_{i,j,2})\in\mathbb{R}^{2}$
as the coordinates of the $(i,j)$th grid point, where every row of
grid points $\{\bm{c}_{i,1},\bm{c}_{i,2},\dots,\bm{c}_{i,N}\}$ has
the same set of $x$-coordinates $\bm{c}_{\text{X}}=[c_{\text{X},1},c_{\text{X},2},\dots,c_{\text{X},N}]$,
and every column of grids $\{\bm{c}_{1,j},\bm{c}_{2,j},\dots,\bm{c}_{N,j}\}$
has the same set of $y$-coordinates $\bm{c}_{\text{Y}}=[c_{\text{Y},1},c_{\text{Y},2},\dots,c_{\text{Y},N}]$.
Let $\bar{u}_{k}^{N}(y)$ and $\bar{v}_{k}^{N}(x)$ be linearly interpolated
functions from vectors $\bm{u}_{k}^{N}$ and $\bm{v}_{k}^{N}$, respectively.
Specifically, $\bar{u}_{k}^{N}(c_{\text{Y},i})=\sqrt{N/L}u_{k,i}^{N}$,
the $i$th entry of $\bm{u}_{k}^{N}$, and $\bar{v}_{k}^{N}(c_{\text{X},j})=\sqrt{N/L}v_{k,j}^{N}$,
the $j$th entry of $\bm{v}_{k}^{N}$. The off-grid values of $\bar{u}_{k}^{N}(y)$
and $\bar{v}_{k}^{N}(x)$ are obtained through linear interpolation.
Then, we can show the following property.

\begin{prop}
[Symmetry Property]\label{prop:symmetry-signature-vector} Suppose
that there exist squared-integrable functions $w_{k,1}(y)$ and $w_{k,2}(x)$,
such that $\bar{u}_{k}^{N}(y)$ and $\bar{v}_{k}^{N}(x)$ uniformly
converge to $w_{k,1}(y)$ and $w_{k,2}(x)$, respectively, as $N\to\infty$.
Then, $w_{k,1}(y)$ and $w_{k,2}(x)$ can be expressed as $w_{k,1}(y)=w(y-s_{k,2})$
and $w_{k,2}(x)=w(y-s_{k,1})$, where $w(x)$ is a symmetric, unimodal,
non-negative function with $w(-x)=w(x)$, and $\int_{-\infty}^{\infty}w(x)^{2}dx=1$.
\end{prop}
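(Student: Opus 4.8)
The plan is to pass to the continuum limit and exploit the circular symmetry of the single-source energy kernel. As $N\to\infty$, the rescaled matrix $\bm H^{(k)}$ behaves like a compact (Hilbert--Schmidt) integral operator $\mathcal G_k$ on $L^2(\mathbb R)$ whose kernel is the radial bump $g_k(y,x)=\alpha_k h\big(\sqrt{(x-s_{k,1})^2+(y-s_{k,2})^2}\big)$ centered at $\bm s_k$. Since the signature vectors are the dominant singular vectors of $\bm H^{(k)}$, their assumed uniform limits $w_{k,1},w_{k,2}$ should be the dominant left and right singular \emph{functions} of $\mathcal G_k$; the translation and reflection symmetries of $g_k$ will then force these two functions to be one common even, unimodal profile shifted to the source location.

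Concretely I would proceed in four steps. (i) Write the identities $\bm H^{(k)}\bm v_k=\sigma_{k,1}\bm u_k$ and $(\bm H^{(k)})^{\mathrm T}\bm u_k=\sigma_{k,1}\bm v_k$ entrywise, multiply through by $\sqrt{N/L}$ so the $u$- and $v$-entries become the sampled values of $\bar u_k^N$ and $\bar v_k^N$, and recognize the left-hand sides as Riemann sums of $\int g_k(y,x)\bar v_k^N(x)\,dx$ and $\int g_k(y,x)\bar u_k^N(y)\,dy$. Using the hypothesized uniform convergence, the Lipschitz continuity and bounded support of $h$ (the dominant singular vectors inherit that support, since a zero row of $\bm H^{(k)}$ kills the matching entry of $\bm u_k$), and convergence of the leading singular value $\sigma_{k,1}\to\sigma^\star$, pass to the limit to obtain $\sigma^\star w_{k,1}(y)=\int g_k(y,x)w_{k,2}(x)\,dx$ and $\sigma^\star w_{k,2}(x)=\int g_k(y,x)w_{k,1}(y)\,dy$, i.e.\ $(w_{k,1},w_{k,2})$ is the dominant singular pair of $\mathcal G_k$. (ii) Center the kernel: set $w_1(y)=w_{k,1}(y+s_{k,2})$, $w_2(x)=w_{k,2}(x+s_{k,1})$ and $g(y,x)=\alpha_k h(\sqrt{x^2+y^2})$; a change of variables shows $w_1,w_2$ are the dominant left/right singular functions of the operator $\mathcal G$ with kernel $g$. (iii) Symmetrize: because $g(y,x)=g(x,y)$, the operators $\mathcal G\mathcal G^{\mathrm T}$ and $\mathcal G^{\mathrm T}\mathcal G$ have the \emph{same} kernel $K(a,b)=\alpha_k^2\int h(\sqrt{a^2+t^2})h(\sqrt{b^2+t^2})\,dt$, so $w_1$ and $w_2$ both solve the same integral eigenequation $\int K(\cdot,b)\phi(b)\,db=(\sigma^\star)^2\phi$. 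Since $h>0$ near the origin, $K$ is strictly positive on (the square of) its support and hence positivity improving, so by Perron--Frobenius/Krein--Rutman its top eigenvalue is simple and $w_1=c\,w_2$; non-negativity of $w_1,w_2$, inherited from Theorem~\ref{thm:unimodal-signature-vector} through the uniform limit, and $\|\bm u_k\|_2=\|\bm v_k\|_2=1$ together with the $\sqrt{N/L}$ scaling (which makes $\int(\bar u_k^N)^2\to1$) pin down $c=1$; write $w:=w_1=w_2$. (iv) Evenness and unimodality: $K(-a,b)=K(a,b)$ because $h$ depends only on $a^2$, hence $Kw$ is even, hence $w=(\sigma^\star)^{-2}Kw$ is even, $w(-x)=w(x)$; unimodality of $w$ passes from the unimodality of $\bar u_k^N$ (Theorem~\ref{thm:unimodal-signature-vector}) under the uniform limit, and with evenness the mode sits at $0$, while $\int w^2=1$ from step (iii). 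Undoing the centering gives $w_{k,1}(y)=w(y-s_{k,2})$ and $w_{k,2}(x)=w(x-s_{k,1})$, with $w$ symmetric, unimodal, non-negative, and of unit $L^2$ norm.

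I expect step (i) to be the main obstacle, since one must transfer the finite SVD to the limiting operator strongly enough to carry the equalities through. The two delicate points are (a) convergence of the largest singular value $\sigma_{k,1}$ to the leading singular value of $\mathcal G_k$, rather than merging with the tail of the spectrum, for which one invokes operator-norm convergence of the Nystr\"om/Riemann-sum discretization of a Lipschitz, square-integrable, compactly supported kernel; and (b) convergence of the Riemann sums built from the \emph{moving} integrand $\bar v_k^N$ to the integral against its limit $w_{k,2}$, handled by writing $\bar v_k^N=w_{k,2}+(\bar v_k^N-w_{k,2})$ and combining the uniform convergence of the remainder with the uniform bound $\sum_j\frac LN h(d(\bm c_{i,j},\bm s_k))\le C$. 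A secondary point is the simplicity of the top eigenvalue of $K$ in step (iii); if one prefers not to appeal to Perron--Frobenius, one can instead show directly that $w_1$ and $w_2$ are non-negative even solutions of the same eigenproblem and compare them using the strict positivity of $K$. It is also worth noting that this symmetrization genuinely requires $N\to\infty$: at finite $N$ the grid is generally not centered on $\bm s_k$, so $\bm u_k$ and $\bm v_k$ need not coincide, which is why Theorem~\ref{thm:unimodal-signature-vector} alone does not deliver the symmetry.
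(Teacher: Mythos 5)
Your argument is correct in outline, but it reaches the conclusion by a genuinely different route than the paper. The paper stays at finite $N$: it shifts the grid by $\bm{\delta}$ so that $\bm{s}_k$ sits at the exact center of symmetry of the lattice, decomposes the shifted matrix $\tilde{\bm{H}}_1^{(k)}$ into four reflection-related blocks $\tilde{\bm{H}}_{11}^{(k)},\Pi_{\text{c}}(\tilde{\bm{H}}_{11}^{(k)}),\dots$, reads off from this block structure that the discrete dominant singular vectors have the exactly symmetric form $\frac{1}{\sqrt{2}}[\bm{u}_1^{\text{T}},\Pi_{\text{r}}(\bm{u}_1)^{\text{T}}]^{\text{T}}$, uses the bounded support of $h$ to zero-pad and the transpose-symmetry of the centered matrix to force the left and right singular vectors to share one common profile, and finally controls the grid-shift perturbation entrywise ($|H_{ij}^{(k)}-\tilde{H}_{ij}^{(k)}|\leq\alpha_k K_h L^2/(\sqrt{2}N^2)$ via Lipschitz continuity) to transfer the conclusion back to the original grid in the limit. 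You instead pass immediately to the limiting Hilbert--Schmidt operator, observe that $\mathcal{G}\mathcal{G}^{\text{T}}$ and $\mathcal{G}^{\text{T}}\mathcal{G}$ share the kernel $K(a,b)=\alpha_k^2\int h(\sqrt{a^2+t^2})h(\sqrt{b^2+t^2})dt$ because the centered kernel is symmetric in its arguments, and invoke positivity-improvement (Jentzsch/Krein--Rutman) to conclude that the two non-negative limits coincide with the unique Perron eigenfunction, with evenness falling out of $K(-a,b)=K(a,b)$. Your version makes the role of radial symmetry transparent and avoids the combinatorial block bookkeeping, at the price of concentrating all the difficulty in the discrete-to-continuum transfer of step (i) (convergence of $\sigma_{k,1}$ and of the Riemann sums against the moving integrand) and of importing spectral theory for positive integral operators; the uniform-convergence hypothesis of the proposition absorbs most of that burden, and you correctly flag the remaining delicate points, so I see no gap. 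The paper's version, by contrast, yields an exact finite-$N$ symmetry statement for the shifted grid and needs only elementary linear algebra plus the entrywise perturbation bound. Both proofs use the same normalization step ($\|\bm{u}_k\|_2=1$ plus the $\sqrt{N/L}$ scaling gives $\int w^2=1$) and both rely on the simplicity of the dominant singular value, which the paper secures via Perron--Frobenius for non-negative matrices and you secure via its infinite-dimensional analogue.
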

\begin{proof}
See Appendix \ref{app:pf-homogenous}. 
\end{proof}

Proposition \ref{prop:symmetry-signature-vector} suggests a method
to find the peaks of the signature vectors $\bm{u}_{k}$ and $\bm{v}_{k}$
using the symmetry property.

\subsection{A Location Estimator}

We first establish a general property of a unimodal symmetric function
$w(x)$.
\begin{lem}
[Monotone property] \label{lem:Monotonicity} Suppose that the non-negative
function $w(x)$ is unimodal and symmetric about $x=0$. Then, the
autocorrelation function 
\begin{equation}
\tau(t)=\int_{-\infty}^{\infty}w(x)w(x-t)dx\label{eq:autocorrelation-function}
\end{equation}
is non-negative and symmetric about $t=0$. In addition, $\tau(t)$
is strictly decreasing in $t>0$.
\end{lem}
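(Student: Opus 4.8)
The plan is to dispose of the two easy assertions immediately and spend the effort on strict monotonicity. Non-negativity of $\tau$ is clear because the integrand $w(x)w(x-t)$ in \eqref{eq:autocorrelation-function} is a product of non-negative functions. For symmetry, the change of variables $y=x-t$ gives $\tau(-t)=\int_{-\infty}^{\infty}w(x)w(x+t)\,dx=\int_{-\infty}^{\infty}w(y)w(y-t)\,dy=\tau(t)$; equivalently, evenness of $w$ lets me write $w(x-t)=w(t-x)$, so that $\tau=w*w$ is a self-convolution and is even on that account. Both facts are one line each.

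For monotonicity I would first run the computation assuming $w$ is differentiable (recall $h$, and hence $w$, is Lipschitz, so $w$ is differentiable a.e.\ with $w'\in L^{\infty}$, $w'\ge 0$ on $(-\infty,0)$, $w'\le 0$ on $(0,\infty)$, and $w'$ odd). Differentiating \eqref{eq:autocorrelation-function} under the integral sign, substituting $y=x-t$, splitting the integral at $y=0$, and folding the negative half onto the positive one via $w'(-y)=-w'(y)$ yields
\begin{equation}
\tau'(t)=\int_{0}^{\infty}w'(y)\,\bigl[w(|t-y|)-w(t+y)\bigr]\,dy .\label{eq:tauprime}
\end{equation}
For $t>0$ and $y>0$ one has $|t-y|<t+y$, and since $w$ is non-increasing on $[0,\infty)$ this forces $w(|t-y|)-w(t+y)\ge 0$; combined with $w'(y)\le 0$, the integrand in \eqref{eq:tauprime} is $\le 0$, so $\tau$ is non-increasing on $(0,\infty)$. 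To promote this to \emph{strict} decrease I would invoke that $h$ is \emph{strictly} decreasing, hence (through the construction in Proposition~\ref{prop:symmetry-signature-vector}) so is $w$: then $w'(y)<0$ for a.e.\ $y>0$ and $w(|t-y|)>w(t+y)$ for every $y>0$, so the integrand in \eqref{eq:tauprime} is strictly negative on a set of positive measure and $\tau'(t)<0$ for all $t>0$. Continuity of $\tau$ then gives $\tau(t_{1})>\tau(t_{2})$ for all $0\le t_{1}<t_{2}$.

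If one prefers not to assume differentiability, the same conclusion follows from a layer-cake representation: write $w(x)=\int_{0}^{\infty}\mathbf{1}\{|x|\le r(\lambda)\}\,d\lambda$ with $r(\lambda)=\sup\{x:w(x)>\lambda\}$, so that Fubini gives $\tau(t)=\int_{0}^{\infty}\!\!\int_{0}^{\infty}g_{r(\lambda),r(\mu)}(t)\,d\lambda\,d\mu$, where $g_{a,b}(t)=\bigl|(-a,a)\cap(t-b,t+b)\bigr|=\min\bigl\{2\min(a,b),\,(a+b-t)^{+}\bigr\}$ is the overlap length of two symmetric intervals; each $g_{a,b}(\cdot)$ is non-increasing on $[0,\infty)$ and strictly decreasing on $(|a-b|,a+b)$, and because $w>0$ everywhere the radii $r(\lambda)$ grow unboundedly as $\lambda\to0^{+}$, so for any $0\le t_{1}<t_{2}$ there is a positive-measure set of $(\lambda,\mu)$ on which $(|a-b|,a+b)$ meets $(t_{1},t_{2})$, which after integration gives $\tau(t_{1})>\tau(t_{2})$. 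The one genuinely delicate point, common to both routes, is the strictness: it hinges entirely on $h$ being strictly decreasing (so that $w>0$ and $w'<0$ a.e.\ on $(0,\infty)$), which is precisely what the model assumes, so no extra hypothesis is needed; the remaining items — differentiation under the integral sign, absolute continuity of $\tau=w*w$, and the elementary interval-overlap identity — are routine and I would defer them to the appendix.
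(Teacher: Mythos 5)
The paper itself offers no proof of Lemma \ref{lem:Monotonicity} (the proof is declared ``easily derived'' and omitted), so there is nothing to match your argument against: you have supplied a genuine proof where the paper has none, and both of your routes are sound. The differentiation route is the shorter one; your displayed formula $\tau'(t)=\int_{0}^{\infty}w'(y)\bigl[w(|t-y|)-w(t+y)\bigr]\,dy$ is correctly derived (the fold $y\mapsto-y$ together with oddness of $w'$ checks out), and the sign analysis for $t>0$ is right. The layer-cake route is the more robust of the two, since it needs no differentiability and makes visible exactly where strictness comes from (overlap of superlevel intervals); if only one argument is kept, keep that one. One incidental slip: the substitution taking $\int w(x)w(x+t)\,dx$ to $\int w(y)w(y-t)\,dy$ is $y=x+t$, not $y=x-t$; the conclusion is unaffected.

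Two points need tightening. First, ``$w$ strictly decreasing implies $w'<0$ a.e.'' is not true as stated (a strictly decreasing function can have vanishing derivative almost everywhere); but since $w$ is Lipschitz, hence absolutely continuous, strict decrease forces $w'<0$ on a positive-measure subset of every subinterval of $(0,\infty)$, which is all your integral needs --- phrase it that way. Second, and more substantively, the global claim ``strictly decreasing for all $t>0$'' sits in tension with the paper's own assumption that $h$ (hence $w$) is supported in the bounded region $\mathcal{A}$: if $w$ vanishes outside $[-R,R]$ then $\tau\equiv0$ on $[2R,\infty)$ and cannot be strictly decreasing there. Your strictness argument correctly isolates the hypothesis that actually drives the conclusion ($w>0$, equivalently $t$ inside the support of $\tau$), so the honest statement --- and the only one used by the estimator in \eqref{eq:location-estimator}, which operates near the maximum at $t=0$ --- is that $\tau$ is strictly decreasing on $(0,T)$ with $T=2\sup\{x:w(x)>0\}$. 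This is a defect of the lemma (and of the paper's mutually inconsistent assumptions that $h$ is strictly decreasing yet compactly supported) rather than of your proof, but it should be flagged rather than dismissed with ``no extra hypothesis is needed.''
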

\begin{proof}
The result can be easily derived using the unimodal and symmetry property
of $w(x)$. The details are omitted here due to page limit.
\end{proof}

From Lemma \ref{lem:Monotonicity}, $\tau(t)$ is maximized as $t=0$.
As a result, the non-negative, unimodal, and symmetric function $w(y-s_{1,1})$
from Proposition \ref{prop:symmetry-signature-vector} has the following
autocorrelation function 
\begin{align}
 & \int_{-\infty}^{\infty}w(y-s_{1,1})w(-y+t-s_{1,1})dy\label{eq:reflected-correaltion-w}\\
 & \qquad=\int_{-\infty}^{\infty}w(y-s_{1,1})w(y-t+s_{1,1})dy\label{eq:uu-first-equality}\\
 & \qquad=\int_{-\infty}^{\infty}w(z)w(z+2s_{1,1}-t)dz\label{eq:uu-2nd-equality}\\
 & \qquad=\tau(t-2s_{1,1})\nonumber 
\end{align}
which is maximized at $t=2s_{1,1}$, where the first equality (\ref{eq:uu-first-equality})
is due to symmetry $w(y)=w(-y)$, and the second equality (\ref{eq:uu-2nd-equality})
is from the change of variable $z=y-s_{1,1}$. 

Given a vector $\bm{\mathsf{v}}\in\mathbb{R}^{N}$ and the corresponding
$N$-point coordinates $\bm{c}_{\text{X}}=[c_{\text{X},1},c_{\text{X},2},\dots,c_{\text{X},N}]$,
let $\bar{v}(x)$ be an interpolation function such that $\bar{v}(x)=\sqrt{N/L}\mathsf{v}_{i}$
for $x=c_{\text{X},i}$, $1\leq i\leq N$, and $\bar{v}(x)$ is equivalent
to a linear interpolation at off-grid locations. Define the \emph{reflected
correlation function} as 
\begin{equation}
R(t;\bm{\mathsf{v}},\bm{c}_{\text{X}})=\int_{-\infty}^{\infty}\bar{v}(x)\bar{v}(-x+t)dx.\label{eq:reflected-correlation}
\end{equation}
Then, an estimate of the point of symmetry for $\bm{\mathsf{v}}$
in a continuous interval can be obtained as 
\begin{equation}
\hat{\mathsf{s}}(\bm{\mathsf{v}})=\frac{1}{2}\underset{t\in\mathbb{R}}{\text{argmax}}\,R(t;\bm{\mathsf{v}},\bm{c}_{\text{X}}).\label{eq:location-estimator}
\end{equation}

Note that, the estimator (\ref{eq:location-estimator}) aggregates
the contributions from all measurements, including those far away
from the source. 

As a result, if one can obtain estimates $\hat{\bm{\mathsf{u}}}_{k}$
and $\hat{\bm{\mathsf{v}}}_{k}$ of the signature vectors $\bm{u}_{k}$
and $\bm{v}_{k}$ from the observation matrix $\bm{\mathsf{H}}$,
the estimate of the $k$th the source location can be computed as
$\hat{\bm{\mathsf{s}}}_{k}=(\hat{\mathsf{s}}(\hat{\bm{\mathsf{\bm{v}}}}_{k}),\hat{\mathsf{s}}(\hat{\bm{\mathsf{u}}}_{k}))$
according to the symmetry property in Proposition \ref{prop:symmetry-signature-vector}.

\subsection{Problem Formulation for Extracting Signature Vectors}

There are two remaining issues: First, one needs to extract $K$ pairs
of signature vectors from the incomplete noisy observation matrix
$\bm{\mathsf{H}}$. Second, one needs to find the best coordinate
system $\mathcal{C}$ for grid points $\bm{c}_{\text{X}}\times\bm{c}_{\text{Y}}$
that define the observation matrix $\bm{\mathsf{H}}=\bm{\mathsf{H}}(\theta)$,
since $\bm{\mathsf{H}}(\theta)$ is variant according to the rotation
$\theta$ of the coordinate system. 

We answer these two questions by proposing a {\em \ac{umf}} problem
specified as follows. 

Denote by $\mathcal{U}_{s}^{N}$ the cone specified by the unimodal
constraints (\ref{eq:unimodal-1})--(\ref{eq:unimodal-2}) for a
fixed $s$. Denote $\mathcal{U}^{N}=\bigcup_{s=1}^{N}\mathcal{U}_{s}^{N}$
as the non-negative unimodal cone, and $\mathcal{U}^{N\times K}$
as the set of $N\times K$ real matrices where all the columns are
in $\mathcal{U}^{N}$. Let $\bm{U},\bm{V}\in\mathbb{R}^{N\times K}$
be the matrices that each contains $K$ pairs of signature vectors
$\{\bm{u}_{k},\bm{v}_{k}\}$ to be determined. Let $\bm{W}\in\mathbb{R}^{N\times N}$
be an indicator matrix that describes the sampling strategy, where
$W_{ij}=1$, if $(i,j)\in\Omega$, and $W_{ij}=0$, otherwise, where
$\Omega$ denotes the set of entries that are assigned values based
on (\ref{eq:matrix-observation-model}), $|\Omega|=M$. 

The source signature vectors can be extracted as the solution to the
following problem: 
\begin{align}
\mathscr{P}1:\quad\underset{\bm{U},\bm{V}}{\text{minimize}} & \quad\big\|\bm{W}\varodot\big(\bm{\mathsf{H}}-\bm{U}\bm{V}^{\text{T}}\big)\big\|_{F}^{2}\label{eq:problem-matrix-fact-unimodal}\\
\text{subject to} & \quad\bm{U}\in\mathcal{U}^{N\times K},\bm{V}\in\mathcal{U}^{N\times K}\label{eq:problem-matrix-fact-unimodal-constraint}
\end{align}
where $\varodot$ denotes the Hadamard product, i.e., $\bm{W}\varodot\bm{\mathsf{H}}$
is an $N\times N$ matrix computed entry-by-entry with $\big[\bm{W}\varodot\bm{\mathsf{H}}\big]_{ij}=W_{ij}\mathsf{H}_{ij}$.

In the remaining part of this paper, we will discuss the cases of
single source, two sources, and arbitrary number of sources, and the
corresponding methods to solve $\mathscr{P}1$ or relaxed versions
of it.

\section{Special Case I: Single Source}

\label{sec:single-source}

In the single source case, we first show that a relaxation of $\mathcal{\mathscr{P}}1$
can be easily solved by a matrix completion problem followed by \ac{svd}.
We then analyze the squared error bound of the source localization.

\subsection{Solution via Nuclear Norm Minimization}

Dropping the unimodal constraints (\ref{eq:problem-matrix-fact-unimodal-constraint})
and applying a convex relaxation on the objective function, $\mathscr{P}1$
can be relaxed to a classical rank-$K$ matrix completion problem
via nuclear norm minimization. It has been shown in the sparse signal
processing literature that, under some mild regularization conditions
on the low rank matrix $\bm{H}$, the matrix $\bm{H}$ can be recovered,
with a high probability, from the sparse and noisy observation $\bm{W}\varodot\bm{\mathsf{H}}$
\cite{CanRec:J12,CanPla:J10}. Specifically, the noisy recovery of
$\bm{H}$ can be obtained as a solution, $\hat{\bm{X}}$, to the following
convex optimization problem \cite{ChoKumNarMit:J16,CanPla:J10}: 
\begin{align}
\mathscr{P}2:\quad\underset{\bm{X}}{\text{minimize}} & \quad\|\bm{X}\|_{*}\label{eq:matrix-completion}\\
\text{subject to} & \quad\sum_{(i,j)\in\Omega}\big|X_{ij}-\mathsf{H}_{ij}\big|^{2}\leq\epsilon^{2}\nonumber 
\end{align}
where $\|\bm{X}\|_{*}$ denotes the nuclear norm of $\bm{X}$ (\emph{i.e.},
the sum of the singular values of $\bm{X}$), and $\epsilon^{2}$
is a small parameter (depending on the observation noise \cite{CanPla:J10})
for the tolerance of the observation noise in $\bm{\mathsf{\bm{H}}}$.

Note that under exact recovery $\hat{\bm{\mathsf{X}}}=\bm{H}$, the
signature vectors are exactly the dominant singular vectors of $\hat{\bm{\mathsf{X}}}$,
and the unimodal constraints (\ref{eq:problem-matrix-fact-unimodal-constraint})
are then automatically satisfied. As a result, an efficient approximate
solution to $\mathscr{P}1$ can be obtained from the dominant singular
vectors $\hat{\bm{\mathsf{u}}}_{1}$ and $\hat{\bm{\mathsf{v}}}_{1}$
of $\hat{\bm{\mathsf{X}}}$ as the solution to $\mathscr{P}2$. 

\subsection{Squared Error Bound}

\label{subsec:squared-error-bound}

Let $\bm{\mathsf{e}}_{1}^{N}=\hat{\bm{\mathsf{v}}}_{1}^{N}-\bm{v}_{1}^{N}$
be the error vector, where the superscript $N$ explicitly indicates
that the analysis is performed under an $N\times N$ grid topology.
Let $\bar{v}_{1}^{N}(x)$ be the linear interpolation function of
$\hat{\bm{\mathsf{v}}}_{1}^{N}$ attached with coordinates $\bm{c}_{\text{X}}^{N}$.
Define $\bar{e}_{1}^{N}(x)=\bar{v}_{1}^{N}(x)-w_{1}(x)$ where $w_{1}(x)\triangleq w(x-s_{1,1})$
is the limiting function defined in Proposition \ref{prop:symmetry-signature-vector}.
Define $E^{N}(s)\triangleq\int_{-\infty}^{\infty}\bar{e}_{1}^{N}(x)\bar{e}_{1}^{N}(-x+s)dx$.
Suppose that the following regularity conditions are satisfied 
\begin{equation}
\underset{N\to\infty}{\text{limsup}}\;\frac{\Big|\int_{-\infty}^{\infty}w_{1}(x)\bar{e}_{1}^{N}(-x+s)dx\Big|}{\big|(\bm{v}_{1}^{N})^{\text{T}}\bm{\mathsf{e}}_{1}^{N}\big|}\leq\frac{C_{e}}{2}<\infty\label{eq:regularity-condition}
\end{equation}
\begin{equation}
\lim_{N\to\infty}\frac{E^{N}(s)}{\big|(\bm{v}_{1}^{N})^{\text{T}}\bm{\mathsf{e}}_{1}^{N}\big|}=0\label{eq:regularity-condition-2}
\end{equation}
with probability 1, for any $-\frac{L}{2}\leq s\leq\frac{L}{2}$ and
any rotation of the coordinate system $\mathcal{C}$.
\begin{rem}
[Interpretation of (\ref{eq:regularity-condition})--(\ref{eq:regularity-condition-2})]
The value $C_{e}$ depends on the distributions of the sample locations
and sample noise, as well as the energy field function $h(d)$. An
intuitive explanation of the boundedness of $C_{e}$ and the diminishing
value of $E^{N}(s)/\big|(\bm{v}_{1}^{N})^{\text{T}}\bm{\mathsf{e}}_{1}^{N}\big|$
as $N\to\infty$ is that the observation noise is uncorrelated and
identically distributed if sampled at the opposite locations symmetric
about the source. For example, in our numerical experiment using uniformly
random sampling corrupted by Gaussian noise in an underwater acoustic
environment (see Section VII for the empirical propagation model),
properties (\ref{eq:regularity-condition})\textendash (\ref{eq:regularity-condition-2})
are observed. 
\end{rem}
In addition, let $\kappa^{N}=(\sigma_{k,1}^{N}-\sigma_{k,2}^{N})/\alpha_{1}$
be the normalized singular value gap from the model (\ref{eq:matrix-decomposition-model})
for the single source $k=1$, and suppose that the gap is non-diminishing,
\emph{i.e.}, $\kappa\triangleq\lim\inf_{N\to\infty}\kappa^{N}>0$.
Furthermore, suppose that for the autocorrelation function $\tau(t)$,
the first and the second-order derivatives, $\tau'(t)$ and $\tau''(t)$,
exist and are continuous at $t=0$. Then, for a small $\nu>0$, there
exists $a_{\nu}>0$, such that $\tau(t)\leq\tau(0)+\tau'(0)t+\frac{1}{2}(\tau''(0)+\nu)t^{2}$
for all $t\in[0,a_{\nu}]$. We have the following theorems to characterize
the asymptotic behavior of the squared estimation error $\|\hat{\bm{\mathsf{s}}}_{1}-\bm{s}_{1}\|_{2}^{2}$.

\subsubsection{Conservative construction}

Choose $N=\sqrt{M}$ and suppose that all the entries of $\bm{\mathsf{H}}$
are observed. As a result, the signature vectors $\hat{\bm{\mathsf{u}}}_{1}$
and $\hat{\bm{\mathsf{v}}}_{1}$ can be directly extracted as the
dominant left and right singular vectors of $\bm{\mathsf{H}}$. 
\begin{thm}
[Squared Error Bound under the Conservative Construction]\label{thm:squared-error-bound-full}
Under the condition of Proposition \ref{prop:symmetry-signature-vector},
for asymptotically large $M$, 
\begin{equation}
\|\hat{\bm{\mathsf{s}}}_{1}-\bm{s}_{1}\|_{2}^{2}\leq C_{1}\frac{L^{4}}{M^{1.5}}+C_{2}\frac{L^{2}}{\sqrt{M}}\frac{\bar{\sigma}_{\text{n}}^{2}}{\alpha_{1}^{2}}\label{eq:squared-error-bound-full}
\end{equation}
with probability at least $1-2e^{-C_{3}N}$, where $C_{1}=\frac{3C_{0}C_{e}K_{h}^{2}}{-\kappa^{2}(\tau''(0)+\nu)}$,
$C_{2}=\frac{6C_{0}C_{e}}{-\kappa^{2}(\tau''(0)+\nu)}$, $K_{h}$
is the Lipschitz parameter of $h(d)$, and $C_{0}$ and $C_{3}$ are
constants. 
\end{thm}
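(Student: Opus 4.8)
The plan is to track how the maximizer of the reflected correlation $R(t;\cdot)$ in (\ref{eq:reflected-correlation}) moves when its vector argument is perturbed from the true signature vector $\bm{v}_1^N$ to the \ac{svd}-based estimate $\hat{\bm{\mathsf{v}}}_1^N$, and then to bound that vector perturbation by a Wedin-type estimate. Since $\hat{\bm{\mathsf{s}}}_1=(\hat{\mathsf{s}}(\hat{\bm{\mathsf{v}}}_1),\hat{\mathsf{s}}(\hat{\bm{\mathsf{u}}}_1))$, we have $\|\hat{\bm{\mathsf{s}}}_1-\bm{s}_1\|_2^2=(\hat{\mathsf{s}}(\hat{\bm{\mathsf{v}}}_1)-s_{1,1})^2+(\hat{\mathsf{s}}(\hat{\bm{\mathsf{u}}}_1)-s_{1,2})^2$, so it is enough to bound the first summand, the second being identical by symmetry. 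First I would let $\hat{t}$ maximize $R(t;\hat{\bm{\mathsf{v}}}_1^N,\bm{c}_{\text{X}}^N)$ and recall, from the computation (\ref{eq:reflected-correaltion-w})--(\ref{eq:uu-2nd-equality}) together with Lemma \ref{lem:Monotonicity}, that the reflected correlation of the limiting profile $w_1(x)=w(x-s_{1,1})$ equals $\tau(t-2s_{1,1})$ and is uniquely maximized at $t^\star=2s_{1,1}$, so that with $\delta\triangleq\hat{t}-t^\star$ one has $(\hat{\mathsf{s}}(\hat{\bm{\mathsf{v}}}_1)-s_{1,1})^2=\tfrac14\delta^2$. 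Writing $\bar{v}_1^N=w_1+\bar{e}_1^N$ as in Section \ref{subsec:squared-error-bound} and expanding (\ref{eq:reflected-correlation}) gives the exact identity
\begin{align*}
R(t;\hat{\bm{\mathsf{v}}}_1^N,\bm{c}_{\text{X}}^N)=\tau(t-2s_{1,1})+2\!\int_{-\infty}^{\infty}\!w_1(x)\,\bar{e}_1^N(-x+t)\,dx+E^N(t),
\end{align*}
where the two cross terms coincide after the substitution $x\mapsto t-x$ and $E^N$ is as defined above (\ref{eq:regularity-condition}); denote the last two terms by $g(t)$.

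Next I would turn the maximizer shift into a bound on $\delta^2$. Optimality of $\hat{t}$ gives $\tau(\delta)+g(\hat{t})\ge\tau(0)+g(t^\star)$. Since $\tau$ is even, $\tau'(0)=0$, and since $\tau(0)$ is a strict maximum, $\tau''(0)<0$, hence $\tau''(0)+\nu<0$ for $\nu$ small; once $N$ is large enough that $|\delta|\le a_\nu$ (legitimate because $\|\hat{\bm{\mathsf{v}}}_1^N-\bm{v}_1^N\|_2\to0$ forces $\delta\to0$), the quadratic bound stated before the theorem gives $\tau(0)-\tau(\delta)\ge-\tfrac12(\tau''(0)+\nu)\delta^2>0$. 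Because the source lies within radius $L/4$, both $t^\star$ and (for $N$ large) $\hat{t}$ lie in $[-\tfrac{L}{2},\tfrac{L}{2}]$, so conditions (\ref{eq:regularity-condition})--(\ref{eq:regularity-condition-2}) bound $|g(\hat{t})|$ and $|g(t^\star)|$ by a fixed multiple of $\big|(\bm{v}_1^N)^{\text{T}}\bm{\mathsf{e}}_1^N\big|$; combining, $\delta^2\le c_0\,C_e\,\big|(\bm{v}_1^N)^{\text{T}}\bm{\mathsf{e}}_1^N\big|/(-(\tau''(0)+\nu))$ for an absolute constant $c_0$. I would then invoke the geometry of unit vectors: fixing the sign of $\hat{\bm{\mathsf{v}}}_1^N$ so that $(\bm{v}_1^N)^{\text{T}}\hat{\bm{\mathsf{v}}}_1^N\ge0$ and using $\|\bm{v}_1^N\|_2=\|\hat{\bm{\mathsf{v}}}_1^N\|_2=1$, one gets $\big|(\bm{v}_1^N)^{\text{T}}\bm{\mathsf{e}}_1^N\big|=1-(\bm{v}_1^N)^{\text{T}}\hat{\bm{\mathsf{v}}}_1^N=\tfrac12\|\hat{\bm{\mathsf{v}}}_1^N-\bm{v}_1^N\|_2^2$, which reduces everything to a bound on the dominant singular-vector error.

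For that last step I would split $\bm{\mathsf{H}}=\bm{H}^{(1)}+\bm{D}+\bm{E}$: under the conservative construction $N=\sqrt{M}$ and each of the $M$ cells carries one (possibly off-center) sample, so $D_{ij}=\frac{L}{N}\alpha_1\big(h(d(\bm{z}^{(m)},\bm{s}_1))-h(d(\bm{c}_{i,j},\bm{s}_1))\big)$ and $E_{ij}=\frac{L}{N}\mathsf{n}^{(m)}$. Lipschitz continuity of $h$ gives $|D_{ij}|=\mathcal{O}(\alpha_1 K_h L^2/N^2)$, while $|E_{ij}|\le\frac{L}{N}\bar{\sigma}_{\text{n}}$ and $\mathrm{Var}(E_{ij})=\frac{L^2}{N^2}\sigma_{\text{n}}^2$. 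Treating the sample offsets and the noise as independent across cells (and absorbing a lower-order mean component of $\bm{D}$), standard spectral-norm concentration for matrices with independent, bounded, mean-zero entries --- e.g.\ via Talagrand's convex concentration inequality together with the classical $\mathbb{E}\|\cdot\|_2=\mathcal{O}(b\sqrt{N})$ bound --- yields, with probability at least $1-e^{-C_3 N}$ for each, $\|\bm{D}\|_2=\mathcal{O}(\alpha_1 K_h L^2/N^{1.5})$ and $\|\bm{E}\|_2=\mathcal{O}(L\bar{\sigma}_{\text{n}}/\sqrt{N})$. With $\|\bm{H}^{(1)}\|_{\text{F}}=\alpha_1+o(1)$ and the non-vanishing gap $\sigma_{1,1}^N-\sigma_{1,2}^N\ge\tfrac12\kappa\alpha_1$ for large $N$, Wedin's $\sin\Theta$ theorem yields $\|\hat{\bm{\mathsf{v}}}_1^N-\bm{v}_1^N\|_2\le C(\|\bm{D}\|_2+\|\bm{E}\|_2)/(\kappa\alpha_1)$, hence $\|\hat{\bm{\mathsf{v}}}_1^N-\bm{v}_1^N\|_2^2\le(C'/\kappa^2)\big(K_h^2 L^4/N^3+L^2\bar{\sigma}_{\text{n}}^2/(N\alpha_1^2)\big)$. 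Substituting $N=\sqrt{M}$ (so $N^{-3}=M^{-1.5}$, $N^{-1}=M^{-1/2}$), chaining this with the previous two paragraphs, adding the identical bound for the $\hat{\bm{\mathsf{u}}}_1$-component, and union-bounding the two concentration events produces (\ref{eq:squared-error-bound-full}) with probability at least $1-2e^{-C_3 N}$, the stated $C_1,C_2$ emerging once the absolute constants ($c_0$, $C'$, and the Wedin and geometry factors) are collected into $C_0$.

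The hard part will be the $N^{1.5}$ rather than $N$ in the bound on $\|\bm{D}\|_2$: a crude $\|\bm{D}\|_2\le\|\bm{D}\|_{\text{F}}$ only yields $\mathcal{O}(L^4/M)$, so one must genuinely exploit the (near-)independence of the per-cell sample offsets and argue that any deterministic bias component is of strictly lower order --- which is where extra regularity beyond plain Lipschitzness, or the precise form of conditions (\ref{eq:regularity-condition})--(\ref{eq:regularity-condition-2}), must be used --- in order to reach the advertised $\mathcal{O}(L^4/M^{1.5})$ term; a sharp spectral-norm bound on $\bm{E}$ is likewise what delivers both the $1/\sqrt{M}$ noise scaling and the $e^{-C_3 N}$ tail. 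Everything else --- the exact expansion of $R$, the quadratic comparison through $\tau$, and the unit-vector identity --- is routine once those regularity conditions are granted; discharging them in a concrete sampling model would itself be the remaining substantial work.
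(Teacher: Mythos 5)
Your proposal follows essentially the same route as the paper's proof: expand the reflected correlation into $\tau(t-2s_{1,1})$ plus cross and error terms, use optimality of $\hat{t}$ together with the quadratic bound on $\tau$ and the regularity conditions (\ref{eq:regularity-condition})--(\ref{eq:regularity-condition-2}) to reduce the localization error to $|(\bm{v}_{1}^{N})^{\text{T}}\bm{\mathsf{e}}_{1}^{N}|=\tfrac{1}{2}\|\bm{\mathsf{e}}_{1}^{N}\|_{2}^{2}$, bound that by a dominant-singular-vector perturbation theorem (you cite Wedin, the paper cites Vu---the same estimate), and control the spectral norm of the sampling-error matrix by concentration for independent mean-zero bounded entries to get the crucial $\mathcal{O}(\sqrt{N})$ rather than $\mathcal{O}(N)$ scaling before substituting $N=\sqrt{M}$. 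The only cosmetic difference is that you concentrate the discretization matrix $\bm{D}$ and the noise matrix $\bm{E}$ separately and union-bound, whereas the paper folds both into a single sub-Gaussian entry with combined moment $\bar{\omega}$ and applies one Rudelson--Vershynin bound; the zero-mean requirement on the per-cell sampling offsets that you flag as the ``hard part'' is likewise assumed rather than proved in the paper.
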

\begin{proof}
See Appendix \ref{app:pf-thm-localization-error-bound}.
\end{proof}
\begin{rem}
The constants $C_{1}$ and $C_{2}$ show that the energy decay rate
of $h(d)$ versus $d$ may play a complicated role in the localization
performance. A large decay rate, corresponding to a large Lipschitz
parameter $K_{h}$, may harm the performance, because it leads to
high discretization error. On the other hand, a small decay rate for
$h(d)$ may result in performance less tolerant to noise, because
the autocorrelation function $\tau(t)$ is less sharp. The key message
of Theorem \ref{thm:squared-error-bound-full} is to establish the
worst case performance scaling law, which will be discussed later.
\end{rem}

\subsubsection{Aggressive construction}

Choose the number of sensors $M$ and the matrix dimension $N$, such
that $M$ is the smallest integer that satisfies $M\geq CN(\log N)^{2}$.
The signature vectors $\hat{\bm{\mathsf{u}}}_{1}$ and $\hat{\bm{\mathsf{v}}}_{1}$
are extracted from $\hat{\bm{\mathsf{X}}}$, the solution to $\mathscr{P}2$.
\begin{thm}
[Squared Error Bound under the Aggressive Construction]\label{thm:squared-error-bound-partial}
Suppose that the sampling error of $\bm{\mathsf{H}}$ is bounded as
$\sum_{(i,j)}\big|\mathsf{H}_{ij}-H_{ij}\big|^{2}\leq\epsilon^{2}$,
where $\epsilon$ is the parameter used in $\mathscr{P}2$. Then,
under the condition of Proposition \ref{prop:symmetry-signature-vector},
there exists some $0<\alpha<\frac{1}{2}$, such that for asymptotically
large $M$ and high enough \ac{snr} $\alpha_{1}/\bar{\sigma}_{\text{n}}^{2}$,
\begin{equation}
\|\hat{\bm{\mathsf{s}}}_{1}-\bm{s}_{1}\|_{2}^{2}\leq C_{3}\frac{L^{4}}{M^{2-\alpha}}+C_{4}\frac{L^{3}}{M^{1-\alpha}}\frac{\bar{\sigma}_{\text{n}}}{\alpha_{1}}+C_{5}L^{2}\frac{\bar{\sigma}_{\text{n}}^{2}}{\alpha_{1}^{2}}\label{eq:squared-error-bound-partial}
\end{equation}
with probability $1-\mathcal{O}(N^{-\beta})$, where $C_{3}=\frac{C_{0}'C_{e}K_{h}^{2}}{-2\kappa^{2}(\tau''(0)+\nu)}$,
$C_{4}=\frac{\sqrt{2}C_{0}'C_{e}K_{h}}{-\kappa^{2}(\tau''(0)+\nu)}$,
$C_{5}=\frac{C_{0}'C_{e}}{-\kappa^{2}(\tau''(0)+\nu)}$, and $C_{0}'$
and $\beta$ are constants.
\end{thm}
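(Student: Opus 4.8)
The plan is to follow the architecture of the proof of Theorem~\ref{thm:squared-error-bound-full}, replacing the elementary perturbation of a plain \ac{svd} by the perturbation of the \ac{svd} of the matrix-completion estimate $\hat{\bm{\mathsf{X}}}$. The argument decomposes into three linked bounds: (i) control of the recovery error $\|\hat{\bm{\mathsf{X}}}-\bm{H}\|_{\mathrm{F}}$ of the program $\mathscr{P}2$; (ii) passage from $\|\hat{\bm{\mathsf{X}}}-\bm{H}\|_{\mathrm{F}}$ to the signature-vector errors $\bm{\mathsf{e}}_{1}^{N}=\hat{\bm{\mathsf{v}}}_{1}^{N}-\bm{v}_{1}^{N}$ and $\hat{\bm{\mathsf{u}}}_{1}^{N}-\bm{u}_{1}^{N}$; and (iii) passage from the signature-vector errors to the coordinate errors $|\hat{\mathsf{s}}(\hat{\bm{\mathsf{v}}}_{1})-s_{1,1}|$ and $|\hat{\mathsf{s}}(\hat{\bm{\mathsf{u}}}_{1})-s_{1,2}|$. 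Squaring and adding the two coordinate errors in (iii) yields (\ref{eq:squared-error-bound-partial}).

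For step~(i) I would invoke a noisy matrix-completion guarantee (e.g.\ \cite{CanPla:J10}) in its \emph{oversampled} form, valid when $M\gtrsim N\,\mathrm{polylog}(N)$: the per-entry mean-squared recovery error $\tfrac{1}{N^{2}}\|\hat{\bm{\mathsf{X}}}-\bm{H}\|_{\mathrm{F}}^{2}$ is, with probability $1-\mathcal{O}(N^{-\beta})$, at most $\mathcal{O}\!\big(\tfrac{NK\log N}{M}\big)$ times the per-entry variance of the in-support perturbation $\mathsf{H}_{ij}-H_{ij}$. That perturbation splits into a discretization piece, of per-entry size $\mathcal{O}\!\big(\tfrac{L}{N}\alpha_{1}K_{h}\cdot\tfrac{L}{N}\big)$ by Lipschitz continuity of $h$ and the $\mathcal{O}(L/N)$ in-cell displacement of each sample, and a noise piece of per-entry size $\tfrac{L}{N}\bar{\sigma}_{\mathrm{n}}$. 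Inserting $M\ge CN(\log N)^{2}$ collapses the amplification factor to $\mathcal{O}(\mathrm{polylog})$ and gives $\|\hat{\bm{\mathsf{X}}}-\bm{H}\|_{\mathrm{F}}^{2}/\alpha_{1}^{2}\lesssim \tfrac{L^{4}K_{h}^{2}}{N^{2}}+\tfrac{L^{2}\bar{\sigma}_{\mathrm{n}}^{2}}{\alpha_{1}^{2}}$ up to polylog and absolute constants. Absorbing the remaining $\mathrm{polylog}(M)$ into $M^{\alpha}$ for arbitrarily small $\alpha\in(0,\tfrac12)$ and using $N=\Theta\!\big(M/(\log M)^{2}\big)$ turns $1/N^{2}$ into $\mathcal{O}(M^{-(2-\alpha)})$, which is the source of the first term $C_{3}L^{4}/M^{2-\alpha}$; the noise piece supplies the non-vanishing floor $C_{5}L^{2}\bar{\sigma}_{\mathrm{n}}^{2}/\alpha_{1}^{2}$, and the cross term (bounded via AM--GM by the geometric mean of the other two, hence $\lesssim M^{-(1-\alpha)}$) gives $C_{4}L^{3}\bar{\sigma}_{\mathrm{n}}/(M^{1-\alpha}\alpha_{1})$. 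The polynomial failure probability and the ``sufficiently high \ac{snr}'' hypothesis both originate here, the latter so the perturbed singular gap stays bounded away from $0$ and the completion guarantee remains applicable.

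Step~(ii) is a standard singular-subspace perturbation bound (Wedin's $\sin\Theta$ theorem): since the normalized gap obeys $\sigma_{1,1}^{N}-\sigma_{1,2}^{N}\ge(\kappa-o(1))\alpha_{1}$ with $\kappa>0$, and $\|\hat{\bm{\mathsf{X}}}-\bm{H}\|_{\mathrm{op}}\le\|\hat{\bm{\mathsf{X}}}-\bm{H}\|_{\mathrm{F}}$ is driven below $\tfrac12\kappa\alpha_{1}$ at high \ac{snr}, both $\|\bm{\mathsf{e}}_{1}^{N}\|^{2}$ and $\|\hat{\bm{\mathsf{u}}}_{1}^{N}-\bm{u}_{1}^{N}\|^{2}$ are $\le \tfrac{8}{\kappa^{2}\alpha_{1}^{2}}\|\hat{\bm{\mathsf{X}}}-\bm{H}\|_{\mathrm{F}}^{2}$, which explains the $1/\kappa^{2}$ in $C_{3},C_{4},C_{5}$. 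Step~(iii) is essentially verbatim from the proof of Theorem~\ref{thm:squared-error-bound-full}: by Proposition~\ref{prop:symmetry-signature-vector} write $\bar{v}_{1}^{N}=w_{1}+\bar{e}_{1}^{N}$ with $w_{1}(x)=w(x-s_{1,1})$, expand $R(t;\hat{\bm{\mathsf{v}}}_{1},\bm{c}_{\text{X}})=\tau(t-2s_{1,1})+(\text{two cross terms})+E^{N}(\cdot)$, bound the cross terms by $\tfrac{C_{e}}{2}\big|(\bm{v}_{1}^{N})^{\mathrm{T}}\bm{\mathsf{e}}_{1}^{N}\big|$ via (\ref{eq:regularity-condition}) and $E^{N}(\cdot)$ via (\ref{eq:regularity-condition-2}), then combine $R(\hat{t})\ge R(2s_{1,1})$ with $\tau'(0)=0$ (evenness and differentiability), $\tau''(0)<0$ (strict concavity at the maximum, cf.\ Lemma~\ref{lem:Monotonicity}), and $\tau(t)\le\tau(0)+\tfrac12(\tau''(0)+\nu)t^{2}$ on $[0,a_{\nu}]$ to obtain $|\hat{\mathsf{s}}(\hat{\bm{\mathsf{v}}}_{1})-s_{1,1}|^{2}\lesssim \tfrac{C_{e}\,|(\bm{v}_{1}^{N})^{\mathrm{T}}\bm{\mathsf{e}}_{1}^{N}|}{-(\tau''(0)+\nu)}$. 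Since $|(\bm{v}_{1}^{N})^{\mathrm{T}}\bm{\mathsf{e}}_{1}^{N}|=1-\cos\angle(\bm{v}_{1}^{N},\hat{\bm{\mathsf{v}}}_{1}^{N})=\tfrac12\|\bm{\mathsf{e}}_{1}^{N}\|^{2}$ for unit singular vectors, inserting the bound of step~(ii) and adding the analogous $\bm{\mathsf{u}}$ estimate closes the proof.

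The main obstacle is step~(i): one must feed the \emph{correct} effective noise level into the completion guarantee and track its amplification precisely enough that all three terms of (\ref{eq:squared-error-bound-partial}) emerge with the stated powers of $M$ and, crucially, with an $\mathcal{O}(1)$ (not growing) noise floor --- which forces the use of the oversampled/per-entry form of the completion bound rather than its worst-case Frobenius form. Two subsidiary points demand care: for a single source the target $\bm{H}=\bm{H}^{(1)}$ is only \emph{approximately} rank-$K$, so the residual $\sum_{i\ge2}\sigma_{1,i}\bm{u}_{1,i}\bm{v}_{1,i}^{\mathrm{T}}$ must be folded into the effective perturbation (or handled through an approximate-low-rank version of the guarantee); and the incoherence hypotheses of the completion theorem must be verified for the energy matrix $\bm{H}$, which is exactly where the structural assumptions on $h(d)$ (Lipschitz, square-integrable, concentrated on $\mathcal{A}$) are consumed. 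Everything downstream --- steps~(ii) and (iii) --- is a direct rerun of the proof of Theorem~\ref{thm:squared-error-bound-full}.
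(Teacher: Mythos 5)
Your proposal reproduces the paper's architecture exactly in its second and third stages: the singular-vector perturbation with gap $\kappa\alpha_{1}$, the identity $|(\bm{v}_{1}^{N})^{\text{T}}\bm{\mathsf{e}}_{1}^{N}|=\tfrac12\|\bm{\mathsf{e}}_{1}^{N}\|^{2}$, the reflected-correlation expansion with the regularity conditions (\ref{eq:regularity-condition})--(\ref{eq:regularity-condition-2}), the quadratic bound $\tau(t)\le\tau(0)+\tfrac12(\tau''(0)+\nu)t^{2}$, and the conversion $N\mapsto M^{1-\alpha},M^{2-\alpha}$ -- all of this matches the appendix essentially verbatim. Where you genuinely diverge is the first stage, which is also where the real difficulty sits. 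The paper does \emph{not} use an oracle-rate per-entry completion bound: it applies the Cand\`es--Plan Frobenius bound of Lemma 5 (which, with $\epsilon=\sqrt{M}\bar{\varepsilon}$, yields $\|\hat{\bm{\mathsf{X}}}-\bm{H}\|_{\text{F}}^{2}$ \emph{growing} like $N$ times the target level), then converts the resulting per-entry average error $\bar{\eta}/N$ into a sub-Gaussian moment for the entries of the residual and invokes the Rudelson--Vershynin spectral-norm concentration to gain back a factor of $N$ in operator norm, landing on $\sigma(\hat{\bm{\mathsf{E}}})^{2}\lesssim N^{2}\bar{\varepsilon}^{2}=\mathcal{O}(1)\cdot\alpha_{1}^{2}L^{2}(\cdot)$; you instead posit an oversampled per-entry MSE guarantee of order $NK\log N/M$ times the noise variance and then pass to operator norm via $\|\cdot\|_{\text{op}}\le\|\cdot\|_{\text{F}}$. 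Both routes arrive at the same three-term bound with an $\mathcal{O}(1)$ noise floor, and you are right that the raw Frobenius form of Cand\`es--Plan alone cannot deliver that floor. The trade-off: your route rests on a per-entry oracle bound that is standard for penalized trace-regression estimators but is not actually established in \cite{CanPla:J10} for the constrained program $\mathscr{P}2$ as formulated, so that lemma would need to be supplied or the estimator changed; the paper's route rests instead on the unproven modeling step that the completion residual has independent sub-Gaussian entries. Neither is airtight, and your two flagged caveats (the target $\bm{H}^{(1)}$ being only approximately rank one, and the unverified incoherence of $\bm{H}$) are real issues the paper also leaves unaddressed. Your attribution of the high-SNR hypothesis is slightly off -- in the paper it is needed so that $\phi_{2}$ is small enough for $\tau^{-1}(1-\phi_{2})\le a_{\nu}$, i.e.\ for the quadratic expansion of $\tau$ to apply, not primarily to protect the singular gap -- but this does not affect the structure of the argument.
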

\begin{proof}
See Appendix \ref{app:pf-thm-localization-error-bound}.
\end{proof}

\subsubsection{Discussions}

We draw the following observations.

\emph{Performance Scaling Law:} Theorems \ref{thm:squared-error-bound-full}
and \ref{thm:squared-error-bound-partial} reveal the error exponent
as the number of sensors increases. As a performance benchmark, for
a naive non-parametric peak localization scheme that estimates the
source location directly from the position of the measurement sample
that achieves the highest energy, the localization squared error decreases
as $\mathcal{O}(1/M)$, whereas, the squared error of the proposed
schemes decreases faster than $\mathcal{O}(1/M^{1.5})$ in high \ac{snr}
$\alpha_{1}/\bar{\sigma}_{\text{n}}^{2}\gg1$, order-wise faster than
the naive scheme, as will be demonstrated by our numerical results
in Section \ref{sec:numerical}. 

\emph{Sparsity and Noise Suppression Tradeoff:} While the aggressive
construction strategy achieves higher error decay rate in terms of
the number of samples $M$ under high $\mathsf{SNR}\triangleq\alpha_{1}/\bar{\sigma}_{\text{n}}^{2}\gg1$,
the aggressive construction scheme is less tolerant to measurement
noise as observed from the last terms in (\ref{eq:squared-error-bound-full})
and (\ref{eq:squared-error-bound-partial}). Specifically, in low
$\mathsf{SNR}$, the squared error bound of the aggressive construction
strategy scales as $\mathcal{O}(\frac{1}{\mathsf{SNR}})$, whereas,
it scales as $\mathcal{O}(\frac{1}{N\mathsf{SNR}})$ for the conservative
construction strategy. 

\emph{Impact from the Propagation Law:} The parameter $0<\kappa\leq1$,
as appear in the coefficients of the bound, captures how precisely
the outer product $\bm{u}_{k}\bm{v}_{k}^{\text{T}}$ of the signature
vectors may approximate the energy field matrix $\bm{H}^{(k)}$ for
a single source. In the special case when $h(d)$ is fully decomposable,
\emph{i.e.}, $h(d(x,y))=h_{1}(x)h_{2}(y)$, we have $\kappa=1$ leading
to a small error bound.\footnote{The only example for this special case is the Gaussian function, \emph{i.e.},
$h(d(x,y))=ae^{-b(x^{2}+y^{2})}=ae^{-bx^{2}}e^{-by^{2}}$.} For most practical propagation models we have tested (\emph{e.g.},
propagations of radio signals over the air, acoustic signals in the
water, etc.), $\kappa$ is close to $1$. 

\section{Special Case II: Two Sources }

\label{sec:double-source}

In the case of two sources, the \ac{svd} may not extract the desired
signature vectors from $\bm{H}$ in (\ref{eq:matrix-decomposition-model}),
because the signature vectors $\bm{u}_{1}$ and $\bm{u}_{2}$ from
different sources are not necessarily orthogonal. However, it turns
out that by choosing an appropriate coordinate system $\mathcal{C}$,
the corresponding \ac{umf} problem $\mathscr{P}1$ can be easily
solved (under some mild conditions). In this section, we propose rotation
techniques to select the best coordinate system for extracting the
source signature vectors. 

\subsection{Optimal Rotation of the Coordinate System}

Suppose that we fix the origin at the center of the target area and
rotate the coordinate system such that the two sources are aligned,
\ac{wlog}, on the $y$ axis. Then, the source locations satisfy $s_{1,1}=s_{2,1}$,
and the signature vectors follow $\bm{v}_{1}=\bm{v}_{2}$. This approximately
yields a rank-1 model $\bm{H}=\bm{H}^{(1)}+\bm{H}^{(2)}\approx\big(\alpha_{1}\bm{u}_{1}+\alpha_{2}\bm{u}_{2}\big)\bm{v}_{1}^{\text{T}}$
by ignoring the minor components in (\ref{eq:matrix-decomposition-model}).

As a result, an algorithm can be designed as follows. First, extract
the vectors $\hat{\alpha}_{1}\hat{\bm{\mathsf{u}}}_{1}+\hat{\alpha}_{2}\hat{\bm{\mathsf{u}}}_{2}$
and $\hat{\bm{\mathsf{v}}}_{1}$ by solving the matrix completion
problem $\mathscr{P}2$ followed by the \ac{svd} as developed in
Section \ref{sec:single-source}. Second, obtain $\hat{\bm{\mathsf{u}}}_{1}$
and $\hat{\bm{\mathsf{u}}}_{2}$ from the composite vector $\hat{\alpha}_{1}\hat{\bm{\mathsf{u}}}_{1}+\hat{\alpha}_{2}\hat{\bm{\mathsf{u}}}_{2}$. 

The remaining challenge is to find optimal rotated coordinate system
for source alignment from the $M$ measurement samples. Note that
the source topology is not known.

Denote $\bm{\mathsf{H}}(\theta)$ as the observation matrix constructed
in coordinate system $\mathcal{C}_{\theta}$ with $\theta$ degrees
of rotation to reference coordinate system $\mathcal{C}$. The desired
rotation $\theta$ can be obtained as 
\begin{equation}
\mathscr{P}3:\quad\underset{\theta\in[0,\frac{\pi}{2}]}{\text{maximize}}\quad\rho(\theta)\triangleq\frac{\sigma_{1}^{2}(\bm{\mathsf{H}}(\theta))}{\sum_{k=1}^{N}\sigma_{k}^{2}(\bm{\mathsf{H}}(\theta))}\label{eq:rho-function-1}
\end{equation}
where $\sigma_{k}(\bm{\mathsf{H}})$ is defined as the $k$th largest
singular value of $\hat{\bm{\mathsf{X}}}(\bm{\mathsf{H}})$, the solution
to the matrix completion problem $\mathscr{P}2$ based on $\bm{\mathsf{H}}$. 

Note that $\rho(\theta)\leq1$ for all $\theta\in[0,\frac{\pi}{2}]$.
In addition, $\rho(\theta^{*})=1$, when $\bm{\mathsf{H}}(\theta)$
becomes a rank-1 matrix where the sources are aligned with one of
the axes, which gives an intuitive justification for $\mathscr{P}3$.

However, an exhaustive search for the solution $\theta^{*}$ is computationally
expensive, since it requires performing \ac{svd} on $\bm{\mathsf{H}}(\theta)$
for each $\theta$. Yet, it can be shown that $\rho(\theta)$ has
a nice locally unimodal property that enables an efficient solution.

Let $\bm{H}(\theta)$ be the signature matrix defined in (\ref{eq:matrix-decomposition-model})
under the coordinate system $\mathcal{C}_{\theta}$. 
\begin{thm}
[Property of $\rho(\theta)$]\label{thm:Unique-local-maximum} Assume
that the two sources have equal transmission power $\alpha_{1}=\alpha_{2}$.
In addition, suppose that $\bm{H}(\theta)$ is at most rank-2 and
can be perfectly recovered from $\bm{\mathsf{H}}(\theta)$, \emph{i.e.},
$\hat{\bm{\mathsf{X}}}(\bm{\mathsf{H}}(\theta))=\bm{H}(\theta)$.
Consider that $N$ is large enough. Then, $\rho(\theta)$ is periodic,
i.e., $\rho(\theta)=\rho(\theta+\frac{\pi}{2})$. In addition, $\rho(\theta)$
is strictly increasing over $(\theta^{*}-\frac{\pi}{4},\theta^{*})$
and strictly decreasing over $(\theta^{*},\theta^{*}+\frac{\pi}{4})$,
if the energy field satisfies 
\begin{equation}
s\cdot\tau'(t)>t\cdot\tau'(s)\label{eq:correlation-condition}
\end{equation}
for all $0<s<t$, where $\tau'(t)\triangleq\frac{d}{dt}\tau(t)$ and
$\theta^{*}$ is the maximizer of $\rho(\theta)$. 
\end{thm}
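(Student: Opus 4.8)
\emph{Proof proposal.} The plan is to derive an explicit closed form for $\rho(\theta)$ in terms of the overlaps between the shifted copies of the universal profile $w$ from Proposition \ref{prop:symmetry-signature-vector}, and then to analyze that elementary function.

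Since $\alpha_1=\alpha_2=\alpha$, and by hypothesis $\sigma_k(\bm{\mathsf{H}}(\theta))=\sigma_k(\hat{\bm{\mathsf{X}}}(\bm{\mathsf{H}}(\theta)))=\sigma_k(\bm{H}(\theta))$ with $\bm{H}(\theta)$ of rank at most two, I would first argue that for large $N$ the model (\ref{eq:matrix-decomposition-model}) collapses to $\bm{H}(\theta)=\alpha\big(\bm{u}_1\bm{v}_1^{\text{T}}+\bm{u}_2\bm{v}_2^{\text{T}}\big)$, where $\bm{u}_k,\bm{v}_k$ are the unit-norm signature vectors of source $k$ in the coordinate system $\mathcal{C}_\theta$ (with $\|\bm{u}_k\|_2=\|\bm{v}_k\|_2=1$ and $\sigma_{k,1}=\alpha$ coming from (\ref{eq:Hk-normalization})). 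Put $p=p(\theta)=\bm{v}_1^{\text{T}}\bm{v}_2$ and $q=q(\theta)=\bm{u}_1^{\text{T}}\bm{u}_2$. Invoking the interpolation/limit machinery of Proposition \ref{prop:symmetry-signature-vector}, as $N\to\infty$ we have $p(\theta)\to\tau(\Delta_x(\theta))$ and $q(\theta)\to\tau(\Delta_y(\theta))$, where $\tau$ is the autocorrelation (\ref{eq:autocorrelation-function}) and $(\Delta_x(\theta),\Delta_y(\theta))=\big(D\cos(\phi_0-\theta),\,D\sin(\phi_0-\theta)\big)$ are the coordinates of $\bm{s}_1-\bm{s}_2$ in $\mathcal{C}_\theta$, $D=\|\bm{s}_1-\bm{s}_2\|_2>0$, and $\phi_0$ the bearing of $\bm{s}_1-\bm{s}_2$ in $\mathcal{C}$. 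Representing $\bm{H}(\theta)^{\text{T}}\bm{H}(\theta)$ on $\mathrm{span}\{\bm{v}_1,\bm{v}_2\}$ by the symmetric matrix $\alpha^2\begin{pmatrix}1+pq & p+q\\ p+q & 1+pq\end{pmatrix}$, its eigenvalues are $\alpha^2(1\pm p)(1\pm q)$; hence $\sigma_1^2=\alpha^2(1+p)(1+q)\ge\sigma_2^2=\alpha^2(1-p)(1-q)$ (the cross-check $\sigma_1^2+\sigma_2^2=\|\bm{H}(\theta)\|_{\text{F}}^2=2\alpha^2(1+pq)$ holds), so that
\[
\rho(\theta)=\frac{(1+p(\theta))(1+q(\theta))}{2\big(1+p(\theta)q(\theta)\big)}.
\]

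Periodicity is then immediate: rotating by $\pi/2$ interchanges the coordinate axes up to sign, sending $(\Delta_x,\Delta_y)\mapsto(\pm\Delta_y,\mp\Delta_x)$, which (since $\tau$ is even, Lemma \ref{lem:Monotonicity}) amounts to $p\leftrightarrow q$; as the displayed formula is symmetric in $p,q$, $\rho(\theta+\tfrac{\pi}{2})=\rho(\theta)$. For the maximizer, $\tau\ge0$ gives $\rho\le1$, with equality iff $(1-p)(1-q)=0$. Moreover condition (\ref{eq:correlation-condition}) forces $\tau'<0$ on $(0,\infty)$ — were $\tau'(s_0)=0$ for some $s_0>0$, taking $s=s_0<t$ in (\ref{eq:correlation-condition}) would give $\tau'(t)>0$, contradicting that $\tau$ is decreasing — so $0<\tau(t)<1$ for $t>0$ and $\rho(\theta)=1$ exactly when $\Delta_x(\theta)=0$ or $\Delta_y(\theta)=0$, i.e. when the sources lie on a coordinate axis. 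Such a $\theta$ always exists, so $\rho(\theta^*)=1$; \ac{wlog} $\Delta_x(\theta^*)=0$, whence $|\Delta_x(\theta^*+\psi)|=D|\sin\psi|$ and $|\Delta_y(\theta^*+\psi)|=D|\cos\psi|$.

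It remains to prove strict monotonicity. Fix $\psi=\theta-\theta^*\in(0,\tfrac{\pi}{4})$ and set $b=D\sin\psi$, $a=D\cos\psi$, so $0<b<a$, $p=\tau(b)$, $q=\tau(a)$, $p>q$ (as $\tau$ is strictly decreasing), hence $0<1-p^2<1-q^2$. From $\partial\rho/\partial p=(1-q^2)/[2(1+pq)^2]$, $\partial\rho/\partial q=(1-p^2)/[2(1+pq)^2]$, $dp/d\psi=a\,\tau'(b)$ and $dq/d\psi=-b\,\tau'(a)$, one gets
\[
\frac{d\rho}{d\psi}=\frac{(1-q^2)\,a\,\tau'(b)-(1-p^2)\,b\,\tau'(a)}{2\,(1+pq)^2}.
\]
Using $\tau'<0$, $0<1-p^2<1-q^2$, and condition (\ref{eq:correlation-condition}) with $s=b<t=a$ (which gives $a\,\tau'(b)<b\,\tau'(a)<0$), the numerator satisfies $(1-q^2)a\,\tau'(b)<(1-p^2)a\,\tau'(b)<(1-p^2)b\,\tau'(a)$, so $d\rho/d\psi<0$ on $(0,\tfrac{\pi}{4})$. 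Finally $\rho(\theta^*+\psi)$ depends on $\psi$ only through $|\sin\psi|,|\cos\psi|$, hence is even about $\theta^*$, which yields strict increase on $(\theta^*-\tfrac{\pi}{4},\theta^*)$ and strict decrease on $(\theta^*,\theta^*+\tfrac{\pi}{4})$.

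The main obstacle is the first step: making rigorous, for finite but large $N$, that the discrete singular values of $\bm{H}(\theta)$ are governed by the continuum overlaps $\tau(\Delta_x(\theta)),\tau(\Delta_y(\theta))$ — i.e. that $\bm{v}_1^{\text{T}}\bm{v}_2\to\tau(\Delta_x(\theta))$ and $\bm{u}_1^{\text{T}}\bm{u}_2\to\tau(\Delta_y(\theta))$ uniformly in $\theta$ — so that the structural hypothesis (\ref{eq:correlation-condition}) on $\tau$ genuinely controls $\rho(\theta)$. This is where the ``$N$ large enough'' and perfect-recovery assumptions are consumed; everything afterward is the elementary $2\times2$ spectral identity and a short sign chase.
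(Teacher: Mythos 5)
Your proposal is correct and follows essentially the same route as the paper's proof: both reduce the rank-2 \ac{svd} to the sum/difference vectors $\bm{u}_1\pm\bm{u}_2$, $\bm{v}_1\pm\bm{v}_2$, pass to the continuum overlaps $\tau(D\cos(\cdot))$, $\tau(D\sin(\cdot))$ via Proposition \ref{prop:symmetry-signature-vector}, and finish with the same sign chase on the derivative using condition (\ref{eq:correlation-condition}) — your closed form $\rho=(1+p)(1+q)/[2(1+pq)]$ is exactly $1/(1+\mu)$ for the paper's ratio $\mu=\sigma_2^2/\sigma_1^2$. The only (welcome) additions are that you make the periodicity argument and the identification of $\theta^*$ with the axis-alignment angle explicit, and you candidly flag the finite-$N$-to-continuum step, which the paper likewise handles only by appeal to the uniform convergence in Proposition \ref{prop:symmetry-signature-vector}.
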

\begin{proof}
See Appendix \ref{app:pf-thm-unimodal-local-maximum}. 
\end{proof}

The result in Theorem \ref{thm:Unique-local-maximum} confirms that
the function $\rho(\theta)$ has a unique local maximum within a $\frac{\pi}{2}$-window
under a mild condition, in the ideal case of perfect recovery $\hat{\bm{\mathsf{X}}}(\bm{\mathsf{H}}(\theta))=\bm{H}(\theta)$.
The property motivates a simple bisection search algorithm to efficiently
search for the globally optimal solution, $\theta^{*}$, to $\mathscr{P}3$. 

Note that condition (\ref{eq:correlation-condition}) can be satisfied
by a variety of energy fields. For example, for Laplacian field $h(x,y)=\gamma e^{-\gamma|x|-\gamma|y|}$,
we have the autocorrelation function $\tau(t)=(1+\gamma t)e^{-\gamma t}$,
and its derivative $\tau'(t)=-\gamma^{2}te^{-\gamma t}$; for Gaussian
field $h(d)=\sqrt{2\gamma/\pi}e^{-\gamma d^{2}}$, where $d^{2}=x^{2}+y^{2}$,
we have $\tau(t)=e^{-\gamma t^{2}/2}$, and $\tau'(t)=-\gamma te^{-\gamma t^{2}/2}$.
In both cases, condition (\ref{eq:correlation-condition}) is satisfied.

\subsection{Source Separation}

%


\Ac{wlog}, suppose that the signature matrix has a decomposition
form $\bm{H}\approx\big(\alpha_{1}\bm{u}_{1}+\alpha_{2}\bm{u}_{2}\big)\bm{v}_{1}^{\text{T}}$
after the optimal coordinate system rotation from solving $\mathscr{P}3$.
In addition, assume the equal power case $\alpha_{1}=\alpha_{2}$.
Then, using a technique similar to that in Section \ref{sec:single-source},
we can estimate the $x$ coordinates from 
\begin{equation}
\hat{\mathsf{s}}_{1,1}=\hat{\mathsf{s}}_{2,1}=\frac{1}{2}\underset{t\in\mathbb{R}}{\text{argmax}}\,R(t;\hat{\bm{\mathsf{v}}}_{1},\bm{c}_{\text{X}}).\label{eq:location-estimator-two-source-y}
\end{equation}
In addition, the $y$ coordinates can be estimated as $\hat{\mathsf{s}}_{1,2},\hat{\mathsf{s}}_{2,2}=\hat{\mathsf{c}}\pm r$,
where 
\begin{align}
\hat{\mathsf{c}}=\frac{1}{2}\underset{t\in\mathbb{R}}{\text{argmax}}\,R(t;\hat{\bm{\mathsf{u}}}_{1},\bm{c}_{\text{Y}}),\quad\hat{\mathsf{r}} & =\underset{r\geq0}{\text{argmax}}\quad Q(r;\hat{\bm{\mathsf{u}}}_{1},\hat{\bm{\mathsf{v}}}_{1})\label{eq:location-estimator-two-source-d}
\end{align}
and
\[
Q(r;\hat{\bm{\mathsf{u}}}_{1},\hat{\bm{\mathsf{v}}}_{1})\triangleq\frac{1}{2}\int_{-\infty}^{\infty}\hat{u}(x)\Big(\hat{v}(x-\hat{\mathsf{c}}-r)+\hat{v}(x-\hat{\mathsf{c}}+r)\Big)dx
\]
in which, $\hat{u}(x)$ is a linear interpolation of $\hat{\bm{\mathsf{u}}}_{1}$
(similar to $\bar{v}(x)$ in (\ref{eq:reflected-correlation})) and
$\hat{v}(x)$ is from linearly interpolating $\hat{\bm{\mathsf{v}}}_{1}$.
Using calculations similar to (\ref{eq:reflected-correaltion-w})\textendash (\ref{eq:uu-2nd-equality}),
it can be shown that $Q(r;\bm{u}_{1}+\bm{u}_{2},\bm{v}_{1})$ is maximized
at $r^{*}=(s_{1,2}-s_{2,2})/2$ if the interpolation is perfect, \emph{i.e.},
$\hat{u}(x)=w(x-s_{1,2})+w(x-s_{2,2})$ and $\hat{v}(x-s_{1,1})=w(x-s_{1,1})$.

As a remark, a simple peak finding solution may not work as well as
the estimator from (\ref{eq:location-estimator-two-source-y})\textendash (\ref{eq:location-estimator-two-source-d})
because, first, peak finding mostly depends on the measurement around
the centroid $\hat{\mathsf{c}}$ and the information from other measurements
may not be fully exploited, and second, when two sources are close
to each other, their aggregate energy field will appear as one peak,
which does not correspond to the desired source location. On the other
hand, the proposed procedure (\ref{eq:location-estimator-two-source-y})\textendash (\ref{eq:location-estimator-two-source-d})
can resolve these issues. 

\section{General Case: Arbitrary Number of Sources}

\label{sec:arbitrary-number-of-sources}

In the case of an arbitrary number of sources, we first study a general
algorithm framework to solve $\mathscr{P}1$. We then discuss efficient
approximations for fast implementation of the algorithm. Finally,
an optimization of the coordinate system $\mathcal{C}_{\theta}$ is
studied to enhance the convergence of the algorithm. 

\subsection{The Gradient Projection}

Let $f(\bm{U},\bm{V})=\big\|\bm{W}\varodot\big(\bm{\mathsf{H}}-\bm{U}\bm{V}^{\text{T}}\big)\big\|_{F}^{2}$.
With some algebra and matrix calculus, it can be shown that the gradients
of $f$ are
\begin{align*}
\frac{\partial}{\partial\bm{U}}f & =-2(\bm{W}\varodot\bm{\mathsf{H}})\bm{V}+2(\bm{W}\varodot(\bm{U}\bm{V}^{\text{T}}))\bm{V}\\
\frac{\partial}{\partial\bm{V}}f & =-2(\bm{W}^{\text{T}}\varodot\bm{\mathsf{H}}^{\text{T}})\bm{U}+2(\bm{W}^{\text{T}}\varodot(\bm{V}\bm{U}^{\text{T}}))\bm{U}
\end{align*}
and the iteration of the projected gradient algorithm can be computed
as 
\begin{align}
\bm{U}(t+1) & =\mathcal{P}_{\mathcal{U}}\Big\{\bm{U}(t)-\mu_{t}\frac{\partial}{\partial\bm{U}}f(\bm{U}(t),\bm{V}(t))\Big\}\label{eq:gradient-iteration-U}\\
\bm{V}(t+1) & =\mathcal{P}_{\mathcal{U}}\Big\{\bm{V}(t)-\nu_{t}\frac{\partial}{\partial\bm{V}}f(\bm{U}(t+1),\bm{V}(t))\Big\}\label{eq:gradient-iteration-V}
\end{align}
where $\mathcal{P}_{\mathcal{U}}\{\cdot\}$ is a projection operator
to project any $N\times K$ matrix onto the unimodal cone $\mathcal{U}^{N\times K}$,
and the step size $\mu_{t}$ and $\nu_{t}$ are chosen to ensure the
decrease of the objective function $f$ (for example, via a backtracking
line search \cite{Bertsekas:1999bs,Boyd:2004kx}). 

\subsection{Fast Unimodal Projection}

The projection $\mathcal{P}_{\mathcal{U}}\{\bm{X}\}$ onto the unimodal
cone is formally defined as the solution that minimizes $\|\bm{X}-\bm{Y}\|_{\text{F}}$
over $\bm{Y}\in\mathcal{U}^{N\times K}$. Due to the property of the
Frobenius norm, the projection can be computed column-by-column. 

While it is not straight-forward to efficiently project onto the convex
set $\mathcal{U}_{s}^{N}$ (specified by constraints (\ref{eq:unimodal-1})\textendash (\ref{eq:unimodal-2}))
as it may seem to be, it is relatively easier to compute the projection
onto an \emph{isotonic cone}, where an isotonic sequence is defined
as a non-increasing (or non-decreasing) sequence. Recently, a fast
algorithm for exact isotonic projection was developed in \cite{nemeth:J10},
which finds the solution within $N-1$ steps. With such a tool, a
fast approximate algorithm to compute $\mathcal{P}_{\mathcal{U}}\{\bm{X}\}$
can be described as follows. 

\uline{Fast unimodal projection:}

\begin{enumerate}
\item For each odd index $s$, compute the isotonic projection for $\bm{x}_{k}$,
the $k$th column of $\bm{X}$, to form an ascending branch $y_{1},y_{2},\dots,y_{s}$
and a descending branch $y_{s+1},y_{s+2},\dots,y_{N}$, respectively,
using the exact isotonic projection algorithm in \cite{nemeth:J10}.
\item Construct $\bm{y}^{(s)}:=(y_{1},y_{2,}\dots,y_{s},y_{s+1},\dots,y_{N})$,
and repeat from Step 1) to compute a series of projections $\bm{y}^{(s)}$
for $s=1,3,5,\dots$.
\item Choose the solution $\bm{y}^{(s_{*})}$ that minimizes $\|\bm{y}^{(s)}-\bm{x}_{k}\|$
over $s=1,3,5,\dots$.
\end{enumerate}
Since Step 1) has complexity at most $N$, the overall complexity
is at most $\frac{1}{2}N^{2}$. 

Note that minimizing $\|\bm{x}-\bm{y}\|_{2}^{2}$ subject to the unimodal
constraint $\bm{y}\in\mathcal{U}^{N}=\bigcup_{s=1}^{N}\mathcal{U}_{s}^{N}$
is equivalent to solving a series of minimization problems, each under
constraint $\bm{y}\in\mathcal{U}_{s}^{N}\cup\mathcal{U}_{s+1}^{N}$
for $s=1,3,5,\dots$. In addition, each subproblem is equivalent to
minimizing $\|\bm{x}_{1:s}-\bm{y}'\|_{2}^{2}+\|\bm{x}_{s+1:N}-\bm{y}''\|_{2}^{2}$
subject to $0\leq y_{1}'\leq y_{2}'\leq\dots\leq y_{s}'$ and $y_{1}''\geq y_{2}''\geq\dots\geq y_{N-s}''$,
where $\bm{x}_{a:b}=(x_{a},x_{a+1},\dots,x_{b})$ represents a vector
that is extracted from the $a$th entry to the $b$th entry of $\bm{x}$.
Therefore, the above unimodal projection is exact.

\subsection{Local Convergence Analysis}


We frame the analysis according to the following two observations.
First, when there is no sampling noise, the unimodal constraint $(\bm{U},\bm{V})\in\mathcal{U}^{N\times K}\times\mathcal{U}^{N\times K}$
is not active at the globally optimal point $\hat{\bm{\mathsf{X}}}=(\hat{\bm{\mathsf{U}}},\hat{\bm{\mathsf{V}}})$.
Then, we can remove the projection in (\ref{eq:gradient-iteration-U})
and (\ref{eq:gradient-iteration-V}) and analyze the local behavior
of an unconstrained algorithm approaches $\hat{\bm{\mathsf{X}}}$.
The goal is to discover any factor that possibly harms the convergence
and determines methods that will improve the performance. 

Second, note that the function is bi-convex. Therefore, we can study
partial convergence, where the convergence of the variable $\bm{U}$
is analyzed while fixing the other variable $\bm{V}$ to be in the
neighborhood of $\hat{\bm{\mathsf{V}}}$. As a result, the unconstrained
algorithm trajectory for $\bm{U}$ will converge to a unique solution.
Analyzing the asymptotic convergence rate of $\bm{U}$ may help us
understand the factors that affect algorithm convergence. 

Denote $g(\bm{X})=[\big(\partial f/\partial\bm{U})^{\text{T}}\;\big(\partial f/\partial\bm{V})^{\text{T}}]^{\text{T}}$
as the gradient function of $f(\bm{X})$, where $\bm{X}=(\bm{U},\bm{V})$.
Suppose $\bm{X}(0)$ is sufficiently close to $\hat{\bm{\mathsf{X}}}$,
such that the unimodal constraints are not active. As a continuous
counter-part to the discrete iteration (\ref{eq:gradient-iteration-U})\textendash (\ref{eq:gradient-iteration-V}),
the continuous algorithm trajectory $\bm{X}(t)$ can be given as
\begin{equation}
\frac{d}{dt}\bm{X}(t)=-g(\bm{X}(t)).\label{eq:continuous-algorithm}
\end{equation}

Let $\mathcal{E}(\bm{\mathsf{X}}_{e}(t))=\frac{1}{2}\|\bm{\mathsf{X}}_{e}(t)\|_{F}^{2}$
be the normed error function for the convergence error $\bm{\mathsf{X}}_{e}(t)\triangleq\bm{X}(t)-\hat{\bm{\mathsf{X}}}$.
Let $\bm{\mathsf{U}}_{e}=\bm{U}-\hat{\bm{\mathsf{U}}}$ and $\bm{\mathsf{V}}_{e}=\bm{V}-\hat{\bm{\mathsf{V}}}$
with the time index $t$ dropped for notational brevity. The following
result suggests that if either $\bm{\mathsf{U}}_{e}$ or $\bm{\mathsf{V}}_{e}$
is much smaller than the other variable, then the algorithm trajectory
$\bm{X}(t)$ converges exponentially to $\hat{\bm{\mathsf{X}}}$. 
\begin{prop}
[Partial convergence]\label{prop:Partial-convergence} Assume perfect
sampling $\bm{\mathsf{H}}=\bm{H}$ and $\bm{H}$ in (\ref{eq:matrix-decomposition-model})
has rank $K$. Suppose that the algorithm initialization $\bm{X}(0)$
is in the neighborhood of the optimal solution $\hat{\bm{\mathsf{X}}}$
to $\mathscr{P}1$. Then the following holds 
\begin{equation}
\frac{d}{dt}\mathcal{E}(\bm{\mathsf{X}}_{e})\leq-2\lambda_{K}(\hat{\bm{\mathsf{V}}}^{\text{T}}\hat{\bm{\mathsf{V}}})\|\bm{\mathsf{U}}_{e}\|_{\text{F}}^{2}+o\Big(\|\bm{\mathsf{U}}_{e}\|_{\text{F}}^{2}\Big)\label{eq:error-convergence-rate-U}
\end{equation}
for $\|\bm{\mathsf{V}}_{e}\|_{\text{F}}=o\big(\|\bm{\mathsf{U}}_{e}\|_{\text{F}}\big)$,
where $\lambda_{K}(\mathbf{A})$ denotes the smallest eigenvalue of
$\mathbf{A}$. Moreover, 
\begin{equation}
\frac{d}{dt}\mathcal{E}(\bm{\mathsf{X}}_{e})\leq-2\lambda_{K}(\hat{\bm{\mathsf{U}}}^{\text{T}}\hat{\bm{\mathsf{U}}})\|\bm{\mathsf{V}}_{e}\|_{\text{F}}^{2}+o\Big(\|\bm{\mathsf{V}}_{e}\|_{\text{F}}^{2}\Big)\label{eq:error-convergence-rate-V}
\end{equation}
for $\|\bm{\mathsf{U}}_{e}\|_{\text{F}}=o\big(\|\bm{\mathsf{V}}_{e}\|_{\text{F}}\big)$.
\end{prop}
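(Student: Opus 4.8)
\emph{Proof outline.} The plan is to linearise the gradient flow \eqref{eq:continuous-algorithm} about the minimiser and read off the decay rate of $\mathcal{E}$ from the Hessian of $f$ at $\hat{\bm{\mathsf{X}}}$, specialised to the asymmetric error regime. First, note that since $\bm{H}$ in \eqref{eq:matrix-decomposition-model} has rank $K$ and, by Theorem~\ref{thm:unimodal-signature-vector}, its signature vectors are already unimodal, the minimum value of $f$ is $0$, attained at $\hat{\bm{\mathsf{X}}}=(\hat{\bm{\mathsf{U}}},\hat{\bm{\mathsf{V}}})$ with $\hat{\bm{\mathsf{U}}}\hat{\bm{\mathsf{V}}}^{\text{T}}=\bm{H}$; under the perfect-sampling hypothesis $\bm{\mathsf{H}}=\bm{H}$ every grid cell is measured, so $\bm{W}$ is the all-ones matrix and $f(\bm{U},\bm{V})=\|\bm{H}-\bm{U}\bm{V}^{\text{T}}\|_{\text{F}}^{2}$. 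In particular $\hat{\bm{\mathsf{U}}}$ and $\hat{\bm{\mathsf{V}}}$ have full column rank (otherwise their product could not have rank $K$), whence $\lambda_{K}(\hat{\bm{\mathsf{V}}}^{\text{T}}\hat{\bm{\mathsf{V}}})>0$, and $g(\hat{\bm{\mathsf{X}}})=\nabla f(\hat{\bm{\mathsf{X}}})=\bm 0$ because the minimiser is interior (the unimodal constraints being inactive there). Since $\hat{\bm{\mathsf{X}}}$ is constant, $\frac{d}{dt}\mathcal{E}(\bm{\mathsf{X}}_{e})=\langle\bm{\mathsf{X}}_{e},\dot{\bm{X}}\rangle=-\langle\bm{\mathsf{X}}_{e},g(\bm{X})\rangle$, and a Taylor expansion of the polynomial $g$ about $\hat{\bm{\mathsf{X}}}$, using $g(\hat{\bm{\mathsf{X}}})=\bm 0$, turns this into $\frac{d}{dt}\mathcal{E}(\bm{\mathsf{X}}_{e})=-\langle\bm{\mathsf{X}}_{e},\nabla^{2}f(\hat{\bm{\mathsf{X}}})[\bm{\mathsf{X}}_{e}]\rangle+o(\|\bm{\mathsf{X}}_{e}\|_{\text{F}}^{2})$.

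Next I would make the Hessian quadratic form explicit. Substituting $\bm{U}=\hat{\bm{\mathsf{U}}}+\bm{\mathsf{U}}_{e}$ and $\bm{V}=\hat{\bm{\mathsf{V}}}+\bm{\mathsf{V}}_{e}$ and cancelling $\hat{\bm{\mathsf{U}}}\hat{\bm{\mathsf{V}}}^{\text{T}}=\bm{H}$ gives $f=\|\bm{\mathsf{U}}_{e}\hat{\bm{\mathsf{V}}}^{\text{T}}+\hat{\bm{\mathsf{U}}}\bm{\mathsf{V}}_{e}^{\text{T}}+\bm{\mathsf{U}}_{e}\bm{\mathsf{V}}_{e}^{\text{T}}\|_{\text{F}}^{2}$; since the $\bm{\mathsf{U}}_{e}\bm{\mathsf{V}}_{e}^{\text{T}}$ term only enters at order $\|\bm{\mathsf{X}}_{e}\|_{\text{F}}^{3}$, comparing $f(\hat{\bm{\mathsf{X}}}+\bm{\mathsf{X}}_{e})=\|\bm{\mathsf{U}}_{e}\hat{\bm{\mathsf{V}}}^{\text{T}}+\hat{\bm{\mathsf{U}}}\bm{\mathsf{V}}_{e}^{\text{T}}\|_{\text{F}}^{2}+o(\|\bm{\mathsf{X}}_{e}\|_{\text{F}}^{2})$ with the Taylor expansion $\tfrac12\langle\bm{\mathsf{X}}_{e},\nabla^{2}f(\hat{\bm{\mathsf{X}}})[\bm{\mathsf{X}}_{e}]\rangle+o(\|\bm{\mathsf{X}}_{e}\|_{\text{F}}^{2})$ identifies the exact quadratic form, so
\[
\frac{d}{dt}\mathcal{E}(\bm{\mathsf{X}}_{e})=-2\big\|\bm{\mathsf{U}}_{e}\hat{\bm{\mathsf{V}}}^{\text{T}}+\hat{\bm{\mathsf{U}}}\bm{\mathsf{V}}_{e}^{\text{T}}\big\|_{\text{F}}^{2}+o\big(\|\bm{\mathsf{X}}_{e}\|_{\text{F}}^{2}\big).
\]
In the regime $\|\bm{\mathsf{V}}_{e}\|_{\text{F}}=o(\|\bm{\mathsf{U}}_{e}\|_{\text{F}})$ one has $\|\hat{\bm{\mathsf{U}}}\bm{\mathsf{V}}_{e}^{\text{T}}\|_{\text{F}}\le\|\hat{\bm{\mathsf{U}}}\|_{2}\|\bm{\mathsf{V}}_{e}\|_{\text{F}}=o(\|\bm{\mathsf{U}}_{e}\|_{\text{F}})$, and the psd trace bound $\operatorname{tr}(\bm A\bm B)\ge\lambda_{K}(\bm B)\operatorname{tr}(\bm A)$ for $K\times K$ psd $\bm A,\bm B$ gives $\|\bm{\mathsf{U}}_{e}\hat{\bm{\mathsf{V}}}^{\text{T}}\|_{\text{F}}^{2}=\operatorname{tr}(\bm{\mathsf{U}}_{e}^{\text{T}}\bm{\mathsf{U}}_{e}\,\hat{\bm{\mathsf{V}}}^{\text{T}}\hat{\bm{\mathsf{V}}})\ge\lambda_{K}(\hat{\bm{\mathsf{V}}}^{\text{T}}\hat{\bm{\mathsf{V}}})\|\bm{\mathsf{U}}_{e}\|_{\text{F}}^{2}$; by the reverse triangle inequality these combine to $\|\bm{\mathsf{U}}_{e}\hat{\bm{\mathsf{V}}}^{\text{T}}+\hat{\bm{\mathsf{U}}}\bm{\mathsf{V}}_{e}^{\text{T}}\|_{\text{F}}^{2}\ge\lambda_{K}(\hat{\bm{\mathsf{V}}}^{\text{T}}\hat{\bm{\mathsf{V}}})\|\bm{\mathsf{U}}_{e}\|_{\text{F}}^{2}+o(\|\bm{\mathsf{U}}_{e}\|_{\text{F}}^{2})$. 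Plugging this into the display, and using $\|\bm{\mathsf{X}}_{e}\|_{\text{F}}^{2}=\|\bm{\mathsf{U}}_{e}\|_{\text{F}}^{2}(1+o(1))$ in this regime, yields \eqref{eq:error-convergence-rate-U}; \eqref{eq:error-convergence-rate-V} follows by exchanging the roles of $\bm{U}$ and $\bm{V}$. One should also verify that the neighbourhood hypothesis on $\bm{X}(0)$ together with the strict decrease just established keeps $\bm{\mathsf{X}}_{e}(t)$ small and the unimodal constraints inactive along the whole trajectory, so that \eqref{eq:continuous-algorithm} is the governing dynamics.

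The main obstacle is that $\nabla^{2}f(\hat{\bm{\mathsf{X}}})$ is degenerate: the entire $GL(K)$ orbit $(\hat{\bm{\mathsf{U}}}\bm G,\hat{\bm{\mathsf{V}}}\bm G^{-\text{T}})$ consists of minimisers, and along its tangent gauge directions $(\hat{\bm{\mathsf{U}}}\bm A,-\hat{\bm{\mathsf{V}}}\bm A^{\text{T}})$ one has $\bm{\mathsf{U}}_{e}\hat{\bm{\mathsf{V}}}^{\text{T}}+\hat{\bm{\mathsf{U}}}\bm{\mathsf{V}}_{e}^{\text{T}}=\bm 0$, so no bound $\frac{d}{dt}\mathcal{E}\le-c\|\bm{\mathsf{X}}_{e}\|_{\text{F}}^{2}$ can hold unconditionally. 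The asymmetric scaling hypothesis is exactly what removes this difficulty: because $\hat{\bm{\mathsf{U}}}$ and $\hat{\bm{\mathsf{V}}}$ have full column rank, along a gauge direction $\|\bm{\mathsf{U}}_{e}\|_{\text{F}}\asymp\|\bm A\|_{\text{F}}\asymp\|\bm{\mathsf{V}}_{e}\|_{\text{F}}$, so $\|\bm{\mathsf{V}}_{e}\|_{\text{F}}=o(\|\bm{\mathsf{U}}_{e}\|_{\text{F}})$ keeps $\bm{\mathsf{X}}_{e}$ away from the degenerate subspace and restores coercivity of the Hessian in $\bm{\mathsf{U}}_{e}$. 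Turning that coercivity into the explicit constant $\lambda_{K}(\hat{\bm{\mathsf{V}}}^{\text{T}}\hat{\bm{\mathsf{V}}})$ via the trace inequality is the one genuine computation; the rest is remainder bookkeeping.
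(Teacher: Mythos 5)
Your proof is correct and follows essentially the same route as the paper's: linearize the gradient flow about $\hat{\bm{\mathsf{X}}}$, reduce $\frac{d}{dt}\mathcal{E}$ to the Hessian quadratic form, use the regime $\|\bm{\mathsf{V}}_{e}\|_{\text{F}}=o(\|\bm{\mathsf{U}}_{e}\|_{\text{F}})$ to discard the cross terms, and apply the bound $\operatorname{tr}(\bm{\mathsf{U}}_{e}^{\text{T}}\bm{\mathsf{U}}_{e}\,\hat{\bm{\mathsf{V}}}^{\text{T}}\hat{\bm{\mathsf{V}}})\ge\lambda_{K}(\hat{\bm{\mathsf{V}}}^{\text{T}}\hat{\bm{\mathsf{V}}})\|\bm{\mathsf{U}}_{e}\|_{\text{F}}^{2}$. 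The only (cosmetic) difference is that you read the quadratic form off the expansion of $f$ itself as $\|\bm{\mathsf{U}}_{e}\hat{\bm{\mathsf{V}}}^{\text{T}}+\hat{\bm{\mathsf{U}}}\bm{\mathsf{V}}_{e}^{\text{T}}\|_{\text{F}}^{2}$ and control the cross term with a reverse triangle inequality, whereas the paper writes out the directional Hessian of the gradient as four trace terms and drops the cross traces directly; your added remarks on the $GL(K)$ gauge degeneracy correctly explain why the asymmetric hypothesis is needed.
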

\begin{proof}
See Appendix \ref{app:pf-partial-convergence}.
\end{proof}

Proposition \ref{prop:Partial-convergence} shows that the rate of
convergence depends on the eigenvalues of $\hat{\bm{\mathsf{V}}}^{\text{T}}\hat{\bm{\mathsf{V}}}$
and $\hat{\bm{\mathsf{U}}}^{\text{T}}\hat{\bm{\mathsf{U}}}$, where
$\hat{\bm{\mathsf{V}}}$ and $\hat{\bm{\mathsf{U}}}$ carry the location
signatures of the source. Specifically, if the sources are aligned
with either the $x$ axis or the $y$ axis, then either $\hat{\bm{\mathsf{U}}}$
or $\hat{\bm{\mathsf{V}}}$ tends to have identical columns, which
leads to rank deficiency of matrices $\hat{\bm{\mathsf{U}}}^{\text{T}}\hat{\bm{\mathsf{U}}}$
or $\hat{\bm{\mathsf{V}}}^{\text{T}}\hat{\bm{\mathsf{V}}}$, corresponding
to small eigenvalues $\lambda_{K}$ and hence slow convergence. This
result suggests gradient type algorithms work better when sources
are well-separated on both axes.

\subsection{Rotation for Convergence Improvement}

\label{subsec:rotation-technique}

According to Proposition \ref{prop:Partial-convergence}, we may need
to establish a coordinate system such that the sources are well separated
on both axes. However, the challenge is that we have \emph{no} prior
knowledge of the source locations. 

Recall that $\bm{\mathsf{H}}(\theta)$ denotes the observation matrix
constructed in coordinate system $\mathcal{C}_{\theta}$ with $\theta$
degrees of rotation with respect to the reference coordinate system
$\mathcal{C}$. Similar to $\mathscr{P}3$, the desired rotation $\theta$
can be obtained as 
\begin{equation}
\mathscr{P}4:\quad\underset{\theta\in[0,\frac{\pi}{2}]}{\text{minimize}}\quad\rho(\theta)\triangleq\frac{\sigma_{1}^{2}(\bm{\mathsf{H}}(\theta))}{\sum_{k=1}^{N}\sigma_{k}^{2}(\bm{\mathsf{H}}(\theta))}\label{eq:rho-function-1-1}
\end{equation}
where $\sigma_{k}(\bm{\mathsf{H}})$ is defined as the $k$th largest
singular value of $\hat{\bm{\mathsf{X}}}(\bm{\mathsf{H}})$, the solution
to the matrix completion problem $\mathscr{P}2$ based on $\bm{\mathsf{H}}$. 

While problem $\mathscr{P}3$ is to align the sources with one of
the axes, problem $\mathscr{P}4$ tries to avoid alignments with any
axes. 


\begin{figure}
\begin{centering}
\psfragscanon
\psfrag{Rooted MSE}[][][0.7]{Root MSE [km]}
\psfrag{number of sensors, M}[][][0.7]{Number of sensors, $M$}
\psfrag{0}[][][0.7]{0}
\psfrag{20}[][][0.7]{20}
\psfrag{40}[][][0.7]{40}
\psfrag{60}[][][0.7]{60}
\psfrag{80}[][][0.7]{80}
\psfrag{120}[][][0.7]{120}
\psfrag{140}[][][0.7]{140}
\psfrag{0.1}[][][0.7]{0.1}
\psfrag{0.2}[][][0.7]{0.2}
\psfrag{0.3}[][][0.7]{0.3}
\psfrag{100}[][][0.7]{100}
\psfrag{Simple UMF}[Bl][Bl][0.65]{Simple UMF}
\psfrag{Completed UMF}[Bl][Bl][0.65]{Complete UMF}
\psfrag{Rotated UMF}[Bl][Bl][0.65]{Rotated UMF}
\includegraphics[width=1\columnwidth]{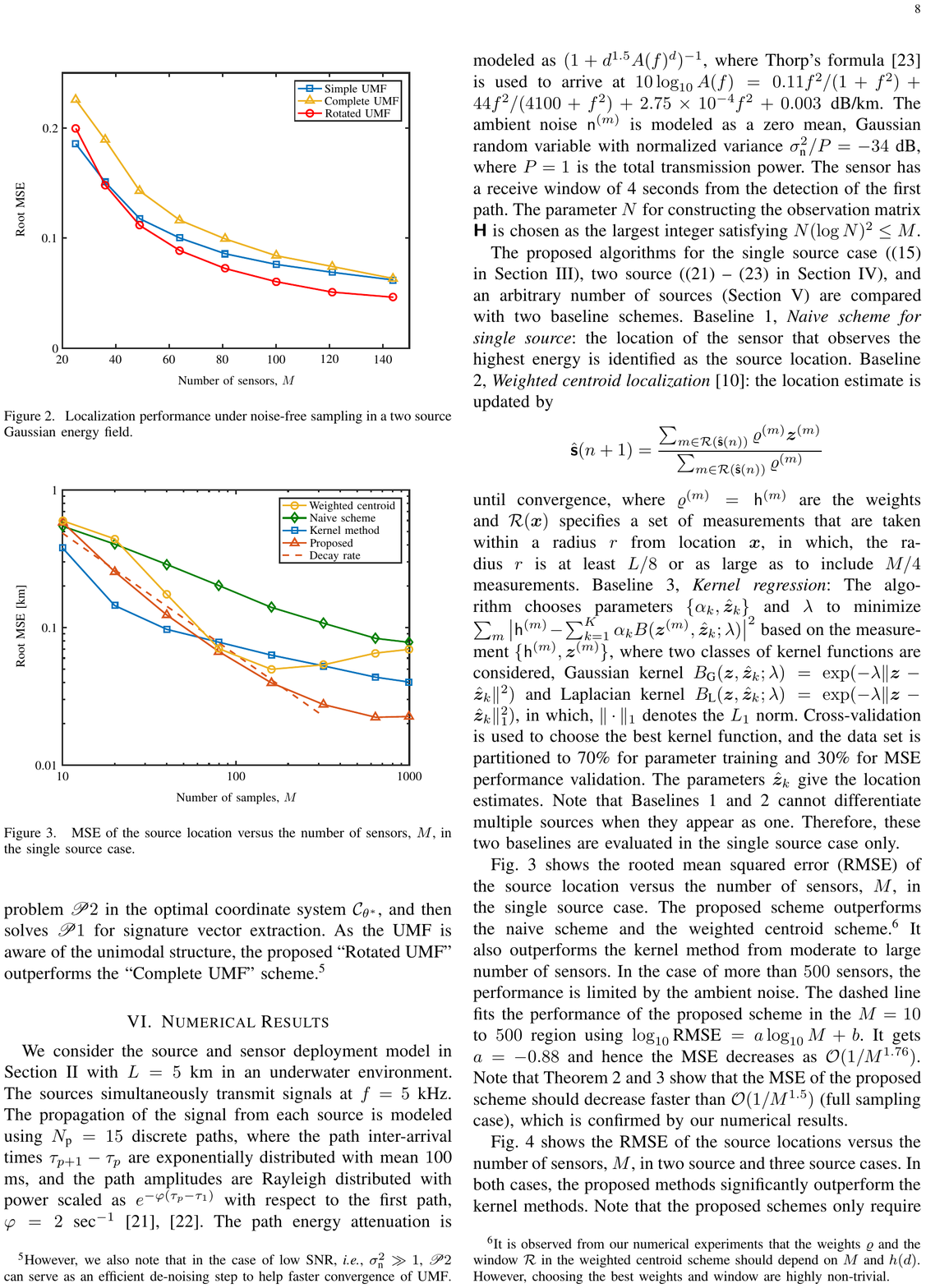}
\par\end{centering}
\caption{\label{fig:map} Localization performance under noise-free sampling
in a two source Gaussian energy field.}
\end{figure}
Fig.~\ref{fig:map} demonstrates the performance of the \ac{umf}
with optimal coordinate system rotation under noise-free sampling
$\sigma_{\text{n}}^{2}=0$, where $L=1$ and the energy field is given
by $h(d)=\sqrt{2\gamma/\pi}\exp(-\gamma d^{2})$ with $\lambda=20$.
The observation matrices constructed with dimension $N$ satisfy $N^{2}/2\approx M$.
There are two key observations: (i) the coordination system rotation
does improve the convergence as demonstrated by the comparison between
scheme ``Rotated UMF'', which solves $\mathscr{P}1$ in the optimal
coordinate system $\mathcal{C}_{\theta^{*}}$ with $\theta^{*}$ solved
from $\mathscr{P}4$, and scheme ``Simple UMF'', which solves $\mathscr{P}1$
in a fixed coordinate system $\mathcal{C}$. (ii) \Ac{umf} performs
better in the recovery of sparse unimodal structures as compared to
conventional sparse matrix completion methods, scheme ``Complete
UMF'', which first solves the matrix completion problem $\mathscr{P}2$
in the optimal coordinate system $\mathcal{C}_{\theta^{*}}$, and
then solves $\mathscr{P}1$ for signature vector extraction. As the
\ac{umf} is aware of the unimodal structure, the proposed ``Rotated
UMF'' outperforms the ``Complete UMF'' scheme.\footnote{However, we also note that in the case of low \ac{snr}, \emph{i.e.},
$\sigma_{\text{n}}^{2}\gg1$, $\mathscr{P}2$ can serve as an efficient
de-noising step to help faster convergence of \ac{umf}. }

\section{Numerical Results}

\label{sec:numerical}

\begin{figure}
\begin{centering}
{
\psfragscanon
\psfrag{MSE}[][][0.7]{Root MSE [km]}
\psfrag{M}[][][0.7]{Number of samples, $M$}
\psfrag{0.1}[][][0.7]{0.1}
\psfrag{0.01}[][][0.7]{0.01}
\psfrag{10}[][][0.7]{10}
\psfrag{100}[][][0.7]{100}
\psfrag{1000}[][][0.7]{1000}
\psfrag{Naive scheme}[Bl][Bl][0.7]{Naive scheme}
\psfrag{Kernel method}[Bl][Bl][0.7]{Kernel method}
\psfrag{Proposed}[Bl][Bl][0.7]{Proposed}
\psfrag{WCL, Square, M/4}[Bl][Bl][0.6]{WCL, $M_\text{r}=M/4$}
\psfrag{WCL, Square, M/2}[Bl][Bl][0.6]{WCL, $M_\text{r}=M/2$}
\psfrag{Decay rate}[Bl][Bl][0.65]{Decay rate}\includegraphics[width=1\columnwidth]{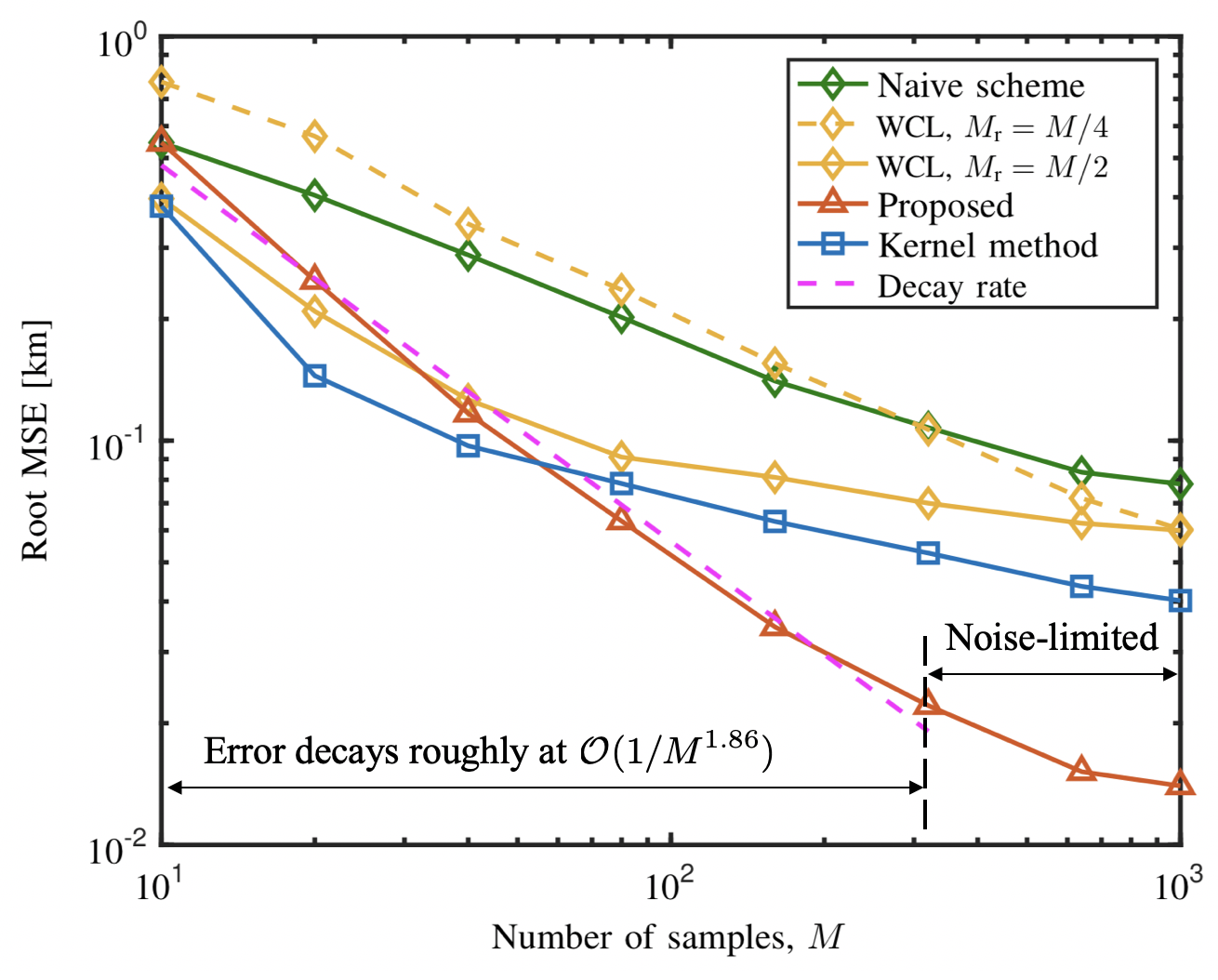}}
\par\end{centering}
\caption{\label{fig:mse-1} \ac{mse} of the source location versus the number
of sensors, $M$, in the single source case.}
\end{figure}
We consider the source and sensor deployment model in Section \ref{sec:system-model}
with $L=5$ km in an underwater environment. The sources simultaneously
transmit signals at $f=5$ kHz. The propagation of the signal from
each source is modeled using $N_{\text{p}}=15$ discrete paths, where
the path inter-arrival times $\tau_{p+1}-\tau_{p}$ are exponentially
distributed with mean $1$00 ms, and the path amplitudes are Rayleigh
distributed with power scaled as $e^{-\varphi(\tau_{p}-\tau_{1})}$
with respect to the first path, $\varphi=2$~$\text{sec}^{-1}$ \cite{beygi2015multi,brekhovskikh2003fundamentals}.
The path energy attenuation is modeled as $(1+d^{1.5}A(f)^{d})^{-1}$,
where Thorp's formula \cite{brekhovskikh2003fundamentals} is used
to arrive at $10\log_{10}A(f)=0.11f^{2}/(1+f^{2})+44f^{2}/(4100+f^{2})+2.75\times10^{-4}f^{2}+0.003$
dB/km. The ambient noise $\mathsf{n}^{(m)}$ is modeled as a zero
mean, Gaussian random variable with normalized variance $\sigma_{\text{n}}^{2}/P=-34$
dB, where $P=1$ is the total transmission power. The sensor has a
receive window of 4 seconds from the detection of the first path.
The parameter $N$ for constructing the observation matrix $\bm{\mathsf{H}}$
is chosen as the largest integer satisfying $1.5N(\log N)^{2}\leq M$. 

The proposed algorithms for the single source case ((\ref{eq:location-estimator})
in Section \ref{sec:single-source}), two source ((\ref{eq:location-estimator-two-source-y})
\textendash{} (\ref{eq:location-estimator-two-source-d}) in Section
\ref{sec:double-source}), and an arbitrary number of sources (Section
\ref{sec:arbitrary-number-of-sources}) is evaluated. Specifically,
the projected gradient algorithm in (\ref{eq:gradient-iteration-U})\textendash (\ref{eq:gradient-iteration-V})
are initialized with random vectors (entries independently and uniformly
distributed in $[0,1]$) projected on the unimodal cone. The algorithm
was run 5 times, each with independent initializations and a maximum
of 200 iterations. The solution (out of the 5) that yields the smallest
objective function value was selected.

Three baseline schemes are evaluated. Baseline 1, \emph{Naive scheme
for single source}: the location of the sensor that observes the highest
energy is identified as the source location. Baseline 2, \emph{Weighted
centroid localization} \cite{wang2011weighted}: the location estimate
is updated by 
\[
\hat{\bm{\mathsf{s}}}(n+1)=\frac{\sum_{m\in\mathcal{R}(\hat{\bm{\mathsf{s}}}(n))}\varrho^{(m)}\bm{z}^{(m)}}{\sum_{m\in\mathcal{R}(\hat{\bm{\mathsf{s}}}(n))}\varrho^{(m)}}
\]
until convergence, where $\varrho^{(m)}=(\mathsf{h}^{(m)})^{2}$ are
the squared-weights and $\mathcal{R}(\bm{x})$ specifies a set of
measurements that are taken within a radius $r$ from location $\bm{x}$,
in which, the radius $r$ is at least $L/8$ or as large as to include
exactly $M_{\text{r}}\in\{M/4,M/2\}$ measurements. Baseline 3, \emph{Kernel
regression}: The algorithm chooses parameters $\{\alpha_{k},\hat{\bm{z}}_{k}\}$
and $\lambda$ to minimize $\sum_{m}\big|\mathsf{h}^{(m)}-\sum_{k=1}^{K}\alpha_{k}B(\bm{z}^{(m)},\hat{\bm{z}}_{k};\lambda)\big|^{2}$
based on the measurement $\{\mathsf{h}^{(m)},\bm{z}^{(m)}\}$, where
two classes of kernel functions are considered, Gaussian kernel $B_{\text{G}}(\bm{z},\hat{\bm{z}}_{k};\lambda)=\exp(-\lambda\|\bm{z}-\hat{\bm{z}}_{k}\|^{2})$
and Laplacian kernel $B_{\text{L}}(\bm{z},\hat{\bm{z}}_{k};\lambda)=\exp(-\lambda\|\bm{z}-\hat{\bm{z}}_{k}\|_{1})$,
in which, $\|\cdot\|_{1}$ denotes the $L_{1}$ norm. Such a least-squares
problem is solved 5 times with randomized initializations. Cross-validation
is used to choose the best kernel function, and the data set is partitioned
to 70\% for parameter training and 30\% for \ac{mse} performance
validation. The parameters $\hat{\bm{z}}_{k}$ give the location estimates.
Note that Baselines 1 and 2 cannot differentiate multiple sources
when the two sources are close to each other. Therefore, these two
baselines are evaluated in the single source case only. 

Fig.~\ref{fig:mse-1} shows the \ac{rmse} of the source location
versus the number of sensors, $M$, in the single source case. The
proposed scheme outperforms the naive scheme and the weighted centroid
scheme.\footnote{It is observed from our numerical experiments that the weights $\varrho$
and the window $\mathcal{R}$ in the weighted centroid scheme should
depend on $M$ and $h(d)$. However, choosing the best weights and
window are highly non-trivial.} It also outperforms the kernel method from moderate to large number
of sensors. The dashed line with slope $-0.93$ shows that the error
decay rate of the proposed algorithm in the $M=10$ to $500$ region
is roughly $\mathcal{O}(1/M^{1.86})$. Note that Theorem \ref{thm:squared-error-bound-full}
and \ref{thm:squared-error-bound-partial} show that the \ac{mse}
of the proposed scheme should decrease faster than $\mathcal{O}(1/M^{1.5})$,
and hence is confirmed by our numerical results. In the case of more
than $500$ sensors, the error decay rate is limited by the ambient
noise. This phenomenon also matches with the second and third terms
of (\ref{eq:squared-error-bound-partial}) in Theorem \ref{thm:squared-error-bound-partial}.

\begin{figure}
\begin{centering}
{
\psfragscanon
\psfrag{Rooted MSE}[][][0.7]{Root MSE [km]}
\psfrag{M sensors}[][][0.7]{Number of sensors, $M$}
\psfrag{1}[][][0.7]{1}
\psfrag{0}[][][0.7]{0}
\psfrag{0.2}[][][0.7]{0.2}
\psfrag{0.4}[][][0.7]{0.4}
\psfrag{0.6}[][][0.7]{0.6}
\psfrag{0.8}[][][0.7]{0.8}
\psfrag{10}[][][0.7]{10}
\psfrag{100}[][][0.7]{100}
\psfrag{1000}[][][0.7]{1000}
\psfrag{Kernel method, K = 3XX}[Bl][Bl][0.62]{Kernel method, $K= 3$}
\psfrag{Kernel method, K = 2}[Bl][Bl][0.62]{Kernel method, $K= 2$}
\psfrag{Proposed, K = 2}[Bl][Bl][0.62]{Proposed, $K=2$}
\psfrag{Proposed, K = 3}[Bl][Bl][0.62]{Proposed, $K=3$}
\psfrag{Proposed, K = 4}[Bl][Bl][0.62]{Proposed, $K=4$}\includegraphics[width=1\columnwidth]{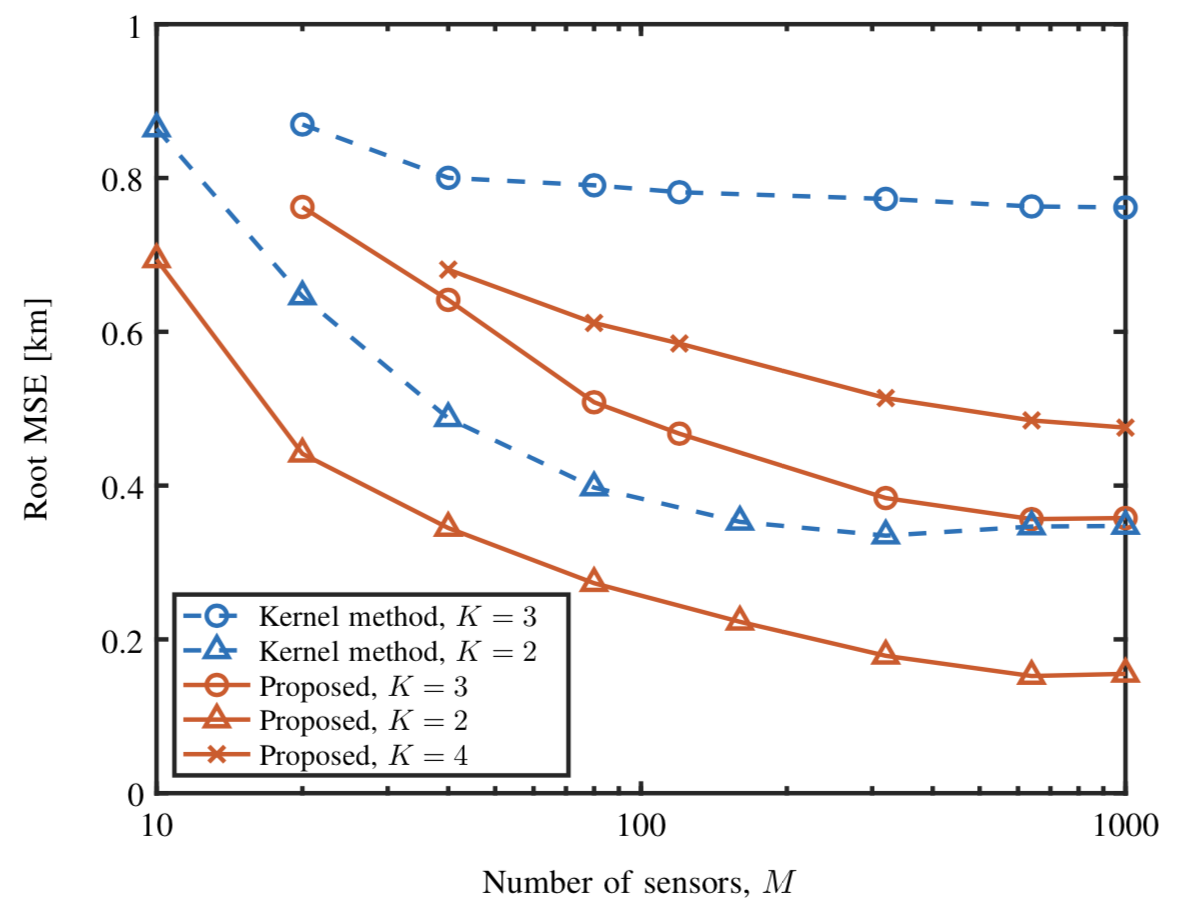}}
\par\end{centering}
\caption{\label{fig:mse-2-3} \ac{mse} of the source location versus the number
of sensors, $M$, in two to four source cases. }
\end{figure}
Fig.~\ref{fig:mse-2-3} shows the \ac{rmse} of the source locations
versus the number of sensors, $M$, in two to four source cases. The
proposed methods significantly outperform the kernel methods. Note
that the proposed schemes only require the generic property that the
source energy field $h(d)$ is non-negative and strictly decreasing
in the distance $d$ to the source. Although the kernel functions
also capture such property, the kernel methods suffer from parameter
estimation error when fitting the data to inaccurate parametric models.

\section{Conclusions}

\label{sec:conclusion}

This paper developed non-parametric algorithms for localizing multiple
sources based on a moderate number of energy measurements from different
locations. A matrix observation model was proposed and it is proven
that the dominant singular vectors of the matrix for a single source
are unimodal and symmetric. A non-parametric source localization algorithm
exploiting the unimodal and symmetry property was developed and shown
to decrease the localization \ac{mse} faster than $\mathcal{O}(1/M^{1.5})$
using $M$ sensors in the single source case with noiseless measurements.
In the two source case, the source locations can be found by choosing
the optimal rotation of the coordinate system such that the normalized
dominant singular value of the sample matrix is maximized. In the
case of arbitrary number of sources, we localize the sources by solving
a \ac{umf} problem in an optimally rotated coordinate system. Our
numerical experiments demonstrate that the proposed scheme achieves
similar performance as the baselines using no more than $1/5$ measurement
samples. 

\appendices
%


\section{Proof of Theorem \ref{thm:unimodal-signature-vector}}

\label{app:pf-unimodal-sinature-vector}

As the signature vectors $\bm{u}_{k}$ and $\bm{v}_{k}$ correspond
to the dominant singular vectors $\bm{u}_{k,1}$ and $\bm{v}_{k,1}$
of $\bm{H}^{(k)}$, we can focus on only one source and drop the source
index $k$ here for brevity. Specifically, from (\ref{eq:matrix-decomposition-model}),
we write $\bm{H}=\alpha\bm{u}\bm{v}^{\text{T}}+\sum_{i=2}^{N}\lambda_{i}\bm{u}_{i}\bm{v}_{i}^{\text{T}}$
(for the $k$th source, where $k=1$), in which $\bm{u}$ and $\bm{u}_{i}$
are the left singular vectors of $\bm{H}$, $\bm{v}$ and $\bm{v}_{i}$
are the right singular vectors, and $\alpha$ is the largest singular
value, $\alpha>\lambda_{i}$, $i=2,3,\dots,N$. 

Let $\bm{R}=\bm{H}^{\text{T}}\bm{H}$. Then, the $(i,j)$th entry
of $\bm{R}$ is given by $R_{ij}=\bm{h}_{i}^{\text{T}}\bm{h}_{j}$,
where $\bm{h}_{i}$ is the $i$th column of $\bm{H}$. In the following
lemma, we show that the columns of $\bm{R}$ are unimodal.
\begin{lem}
\label{lem:pf-unimodal} Suppose that the source is located at the
$(m,n)$th grid cell centered at $\bm{c}_{m,n}$. Then, for each column
of $\bm{R}$, the entries $R_{ij}$ are increasing, $R_{ij}<R_{i+1,j}$,
if $i<n$, and they are decreasing, $R_{ij}>R_{i+1,j}$, if $i\geq n$. 
\end{lem}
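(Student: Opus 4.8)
The plan is to work directly from the closed form $H_{\ell i}=\tfrac{L}{N}\alpha\,h\big(d(\bm{c}_{\ell,i},\bm{s})\big)$ of the single-source matrix, where $\bm{s}=(s_{1},s_{2})$ denotes the source location. First I would observe that a non-negative strictly decreasing function $h$ on $[0,\infty)$ is in fact strictly positive everywhere (if $h(d_{0})=0$ then $h(d_{0}+1)<0$, contradicting non-negativity), so every entry of $\bm{H}$ is positive. Writing $d_{\ell i}\triangleq d(\bm{c}_{\ell,i},\bm{s})$, we have
\[
R_{ij}=\bm{h}_{i}^{\text{T}}\bm{h}_{j}=\Big(\tfrac{L\alpha}{N}\Big)^{2}\sum_{\ell=1}^{N}h(d_{\ell i})\,h(d_{\ell j}),
\]
so the statement reduces entirely to a monotonicity property of the single index $i$ inside $h(d_{\ell i})$.

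The crucial point is that $d_{\ell i}^{2}=(c_{\text{X},i}-s_{1})^{2}+(c_{\text{Y},\ell}-s_{2})^{2}$, so $d_{\ell i}$ depends on the column index $i$ only through $|c_{\text{X},i}-s_{1}|$, and this dependence is common to all rows $\ell$. Since the source lies in the $(m,n)$th cell, whose $x$-extent is centered at $c_{\text{X},n}$ with half-width $\tfrac{L}{2N}$, one gets $c_{\text{X},i}<s_{1}$ for $i\le n-1$, $c_{\text{X},i}>s_{1}$ for $i\ge n+1$, and $|c_{\text{X},n}-s_{1}|\le\tfrac{L}{2N}\le|c_{\text{X},n\pm1}-s_{1}|$. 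Hence $i\mapsto|c_{\text{X},i}-s_{1}|$ is strictly decreasing on $\{1,\dots,n\}$ and strictly increasing on $\{n,\dots,N\}$, and, because $h$ is strictly decreasing, for every fixed $\ell$ the sequence $i\mapsto h(d_{\ell i})$ is strictly increasing on $\{1,\dots,n\}$, strictly decreasing on $\{n,\dots,N\}$, and everywhere strictly positive.

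The conclusion then follows termwise: for $i\le n-1$,
\[
R_{i+1,j}-R_{ij}=\Big(\tfrac{L\alpha}{N}\Big)^{2}\sum_{\ell=1}^{N}\big[h(d_{\ell,i+1})-h(d_{\ell,i})\big]\,h(d_{\ell,j}),
\]
in which every bracket is strictly positive and every weight $h(d_{\ell,j})$ is positive, so $R_{i+1,j}>R_{ij}$; for $i\ge n$ the brackets become strictly negative, giving $R_{i+1,j}<R_{ij}$. This is exactly the claimed unimodality of each column of $\bm{R}$ with mode at row $n$. The step I expect to need the most care is pinning down the location of the mode relative to the discretization — i.e.\ the chain $|c_{\text{X},n}-s_{1}|\le\tfrac{L}{2N}\le|c_{\text{X},n\pm1}-s_{1}|$ — together with the degenerate case in which the source sits exactly on the grid line bordering cell $n$: there one of the strict inequalities collapses and the column of $\bm{R}$ acquires a flat peak. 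I would dispose of this either by adopting the half-open-cell convention so that cell membership is unambiguous and strictness is restored, or by simply accepting the flat peak, which is harmless in the sense already noted after the definition of unimodality.
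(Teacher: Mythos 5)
Your proof is correct and follows essentially the same route as the paper's: reduce $R_{i+1,j}-R_{ij}$ to a termwise comparison over rows, using that the distance to the source depends on the column index only through $|c_{\text{X},i}-s_{1}|$, which is minimized at $i=n$, so that $h(d_{\ell i})$ is unimodal in $i$ with peak at $n$. You are in fact slightly more careful than the paper, which asserts the strict inequalities $d(\bm{c}_{p,i},\bm{s})>d(\bm{c}_{p,i+1},\bm{s})$ and the strictness of the resulting sum without pinning down the half-width chain or the positivity of the weights $H_{pj}$.
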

\begin{proof}
Since the source location $\bm{s}$ is inside the $(m,n)$th grid
cell centered at $\bm{c}_{m,n}$, we have $d(\bm{c}_{p,i},\bm{s})>d(\bm{c}_{p,i+1},\bm{s})\geq d(\bm{c}_{p,n},\bm{s})$,
for $i<n$ and all $p=1,2,\dots,N$. Similarly, $d(\bm{c}_{p,i},\bm{s})<d(\bm{c}_{p,i+1},\bm{s})$,
for $i\geq n$ and all $p$. Recall that $H_{ij}=h(d(\bm{c}_{i,j},\bm{s}))$
and $h(d)$ is a non-negative decreasing function. We thus have
\begin{align*}
R_{ij}=\bm{h}_{i}^{\text{T}}\bm{h}_{j} & =\sum_{p=1}^{N}H_{pi}H_{pj}\\
 & <\sum_{p=1}^{N}H_{p,i+1}H_{pj}=\bm{h}_{i+1}^{\text{T}}\bm{h}_{j}=R_{i+1,j}
\end{align*}
for $i<n$. Similarly, we can show that $R_{ij}=\bm{h}_{i}^{\text{T}}\bm{h}_{j}>\bm{h}_{i+1}^{\text{T}}\bm{h}_{j}=R_{i+1,j}$,
for $i\geq n$. 
\end{proof}

Under the condition of Lemma \ref{lem:pf-unimodal}, if we raise $\bm{R}$
to the power of $q$, the columns of $\bm{R}^{q}$ are unimodal, with
their peaks at the $n$th entry. Specifically, define $\bm{R}^{(q)}\triangleq\bm{R}^{q}/\text{tr}\big\{\bm{R}^{q}\big\}$.
We show, in the following lemma, that the columns of $\bm{R}^{(q)}$
are unimodal. 
\begin{lem}
\label{lem:pf-unimodal-p} Let $R_{ij}^{(q)}$ be the $(i,j)$th entry
of $\bm{R}^{(q)}$. Then, $R_{ij}^{(q)}<R_{i+1,j}^{(q)}$, if $i<n$,
and $R_{ij}^{(q)}>R_{i+1,j}^{(q)}$, if $i\geq n$. 
\end{lem}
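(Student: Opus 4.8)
The plan is to prove Lemma \ref{lem:pf-unimodal-p} by a short direct computation that propagates the monotonicity of Lemma \ref{lem:pf-unimodal} through the matrix power $\bm{R}^{q}$. Two elementary facts about $\bm{R}=\bm{H}^{\mathrm{T}}\bm{H}$ are all that I would need. First, every entry of $\bm{R}$ is non-negative, since $R_{ip}=\sum_{l}H_{li}H_{lp}$ and each $H_{li}\ge 0$; consequently every power $\bm{R}^{q}$ also has non-negative entries. Second, Lemma \ref{lem:pf-unimodal} says that for \emph{every} fixed column index $p$ the row-indexed sequence $R_{1p},R_{2p},\dots,R_{Np}$ satisfies $R_{ip}<R_{i+1,p}$ for $i<n$ and $R_{ip}>R_{i+1,p}$ for $i\ge n$, uniformly in $p$.

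Given these, I would write $\bm{R}^{q}=\bm{R}\,\bm{R}^{q-1}$ and fix a column index $j$. For $i<n$,
\[
(\bm{R}^{q})_{ij}=\sum_{p=1}^{N}R_{ip}\,(\bm{R}^{q-1})_{pj}\;\le\;\sum_{p=1}^{N}R_{i+1,p}\,(\bm{R}^{q-1})_{pj}=(\bm{R}^{q})_{i+1,j},
\]
term by term, using the uniform inequality $R_{ip}<R_{i+1,p}$ together with $(\bm{R}^{q-1})_{pj}\ge 0$. This inequality is in fact strict, because column $j$ of $\bm{R}^{q-1}$ has at least one positive entry: since $(\bm{R}^{r})_{jj}\ge R_{jj}\,(\bm{R}^{r-1})_{jj}$ and $R_{jj}=\|\bm{h}_{j}\|_{2}^{2}>0$ (as $h(d)>0$), induction gives $(\bm{R}^{q-1})_{jj}\ge R_{jj}^{\,q-1}>0$. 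The identical computation with the reversed inequality from Lemma \ref{lem:pf-unimodal} yields $(\bm{R}^{q})_{ij}>(\bm{R}^{q})_{i+1,j}$ for $i\ge n$. Finally, $\operatorname{tr}\{\bm{R}^{q}\}=\sum_{j}(\bm{R}^{q})_{jj}\ge\sum_{j}R_{jj}^{\,q}>0$, so dividing by this positive scalar leaves all strict inequalities intact; since this holds for every column $j$, the columns of $\bm{R}^{(q)}=\bm{R}^{q}/\operatorname{tr}\{\bm{R}^{q}\}$ are unimodal with a strict peak at the $n$th entry, which is exactly the claim.

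I do not anticipate a real obstacle; the only points that need a moment's care are the strictness of the inequalities — secured by the positivity of a single diagonal entry such as $(\bm{R}^{q-1})_{jj}$ — and the positivity of the normalizing trace, so that it cannot reverse or annihilate the ordering. An equivalent route is a one-line induction on $q$: the base case $q=1$ is Lemma \ref{lem:pf-unimodal} itself, and the inductive step is the displayed inequality with $\bm{R}^{q-1}$ replaced by $\bm{R}^{q}$, using $\bm{R}^{q+1}=\bm{R}\,\bm{R}^{q}$ and the non-negativity of $\bm{R}^{q}$. This lemma is the engine for the subsequent limiting argument $q\to\infty$: $\bm{R}^{(q)}$ converges to $\bm{v}\bm{v}^{\mathrm{T}}$ (the spectral projector onto the dominant eigenvector of $\bm{R}$, which is simple because $\alpha>\lambda_{i}$), and since the set of non-negative vectors that are unimodal and peaked at index $n$ is closed, unimodality of $\bm{v}=\bm{v}_{k,1}$ — hence of the right signature vector — follows; the argument for $\bm{u}_{k}$ is symmetric via $\bm{H}\bm{H}^{\mathrm{T}}$.
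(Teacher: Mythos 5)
Your proof is correct and follows essentially the same route as the paper's: both expand the entry of a matrix power as $\sum_{p}(\cdot)_{ip}(\cdot)_{pj}$ and compare term by term using non-negativity, the only cosmetic difference being that you load the monotonicity of Lemma \ref{lem:pf-unimodal} onto the factor $\bm{R}$ directly (so only non-negativity of $\bm{R}^{q-1}$ is needed), whereas the paper runs an induction that places the monotonicity on $\bm{R}^{(q)}$ via its symmetry. Your added care about strictness (via $(\bm{R}^{q-1})_{jj}>0$) and the positivity of the normalizing trace is a welcome tightening of details the paper leaves implicit.
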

\begin{proof}
First, it can be easily verified that the result holds for $q=1$
according to Lemma \ref{lem:pf-unimodal}. Then, suppose that the
result holds for some $q\geq1$. Note that $\bm{R}^{(q+1)}=\text{tr}\big\{\bm{R}^{(q)}\big\}\bm{R}^{(q)}\bm{R}/\text{tr}\big\{\bm{R}^{(q+1)}\big\}$
and that $\bm{R}^{(q)}$ is symmetric. We have 
\begin{align*}
R_{ij}^{(q+1)} & =\frac{\text{tr}\big\{\bm{R}^{q}\big\}}{\text{tr}\big\{\bm{R}^{q+1}\big\}}\sum_{p=1}^{N}R_{pi}^{(q)}R_{pj}\\
 & <\frac{\text{tr}\big\{\bm{R}^{q}\big\}}{\text{tr}\big\{\bm{R}^{q+1}\big\}}\sum_{p=1}^{N}R_{p,i+1}^{(q)}R_{pj}=R_{i+1,j}^{(q+1)}
\end{align*}
for $i<n$. Similarly, we can show that $R_{ij}^{(q+1)}>R_{i+1,j}^{(q+1)}$,
for $i\geq n$. Therefore, by deduction, the result holds for all
$q\geq1$. 
\end{proof}

Let $\bm{R}^{\infty}\triangleq\lim_{q\to\infty}\bm{R}^{q}/\text{tr}\big\{\bm{R}^{q}\big\}$.
It can be shown that the limit exists and equals $\bm{R}^{\infty}=\bm{v}\bm{v}^{\text{T}}$.
To see this, we can easily compute $\bm{R}^{q}=\alpha^{2q}\bm{v}\bm{v}^{\text{T}}+\sum_{i=2}^{N}\lambda_{i}^{2q}\bm{v}_{i}\bm{v}_{i}^{\text{T}}$
and the normalization term $\text{tr}\big\{\bm{R}^{q}\big\}=(\alpha^{2q}+\sum_{i=2}^{N}\lambda_{i}^{2q})$.
By the Perron-Frobenius theorem, the dominant eigenvalue of a non-negative
matrix has multiplicity $1$, \emph{i.e.}, $\alpha=\lambda_{1}>|\lambda_{i}|$
for all $i\geq2$. Hence, $(\lambda_{i}/\alpha)^{2q}\to0$ as $q\to\infty$,
for $i=2,3,\dots,N$, leading to $\bm{R}^{\infty}$ being rank-1. 

On the other hand, from Lemma \ref{lem:pf-unimodal-p}, each column
of $\bm{R}^{\infty}$ is unimodal. Note that the $i$th column of
$\bm{R}^{\infty}$ can be written as $v_{i}\bm{v}$, where $v_{i}$
is the $i$th entry of $\bm{v}$. We therefore confirm that $\bm{v}$
is unimodal, with its $n$th entry being the peak.

Similarly, by constructing $\bm{Q}=\bm{H}\bm{H}^{\text{T}}$, we can
also show that $\bm{u}$ is unimodal with its $m$th entry being the
peak. \hfill\IEEEQED

\section{Proof of Proposition \ref{prop:symmetry-signature-vector}}

\label{app:pf-homogenous}

Consider uniformly shifting the discretization grid points $\{\bm{c}_{i,j}\}$
to such a position that the source location $\bm{s}_{k}$ is the center
of a rectangle formed by the four nearest grid points. Specifically,
the new grid points are given by $\tilde{\bm{c}}_{i,j}=\bm{c}_{i,j}+\bm{\delta}$,
where $\bm{\delta}=\bm{s}_{k}-\bm{c}_{m,n}-(\frac{L}{2N},\frac{L}{2N})$,
$m=\sup\{i:c_{\text{X},i}\leq s_{k,1}\}$, and $n=\sup\{j:c_{\text{Y},j}\leq s_{k,2}\}$.
One can verify that $\bm{s}_{k}$ is at the center of the rectangle
formed by the four nearest grid points $\tilde{\bm{c}}_{m,n}$, $\tilde{\bm{c}}_{m+1,n},$$\tilde{\bm{c}}_{m,n+1}$,
and $\tilde{\bm{c}}_{m+1,n+1}$.

\subsection{The Symmetry Property}

Let $\tilde{\bm{H}}^{(k)}$ be the \emph{virtual} signature matrix
defined on the shifted grid points $\{\tilde{\bm{c}}_{i,j}\}$ using
(\ref{eq:matrix-observation-model-true}). As illustrated in Fig.
\ref{fig:h-block}, consider a sub-block $\tilde{\bm{H}}_{1}^{(k)}$
that contains the $(m+1-J_{1})$th to $(m+J_{1})$th rows and the
$(n+1-J_{2})$th to $(n+J_{2})$th columns of $\tilde{\bm{H}}^{(k)}$,
where $J_{1}=\min\{m,\frac{N}{2}\}$ and $J_{2}=\min\{n,\frac{N}{2}\}$.
Further divide the $2J_{1}\times2J_{2}$ sub-block $\tilde{\bm{H}}_{1}^{(k)}$
into four $J_{1}\times J_{2}$ sub-blocks. Since the locations $\tilde{\bm{c}}_{i,j}$
that correspond to the entries of $\tilde{\bm{H}}_{1}^{(k)}$ are
symmetric about the source $\bm{s}_{k}$, one can verify that $\tilde{\bm{H}}_{1}^{(k)}$
has the following structure 
\begin{equation}
\tilde{\bm{H}}_{1}^{(k)}=\left[\begin{array}{cc}
\tilde{\bm{H}}_{11}^{(k)} & \Pi_{\text{c}}(\tilde{\bm{H}}_{11}^{(k)})\\
\Pi_{\text{r}}(\tilde{\bm{H}}_{11}^{(k)}) & \Pi_{\text{r}}\big(\Pi_{\text{c}}(\tilde{\bm{H}}_{11}^{(k)})\big)
\end{array}\right]\label{eq:H1}
\end{equation}
where the operators $\Pi_{\text{c}}(\bm{A})$ reverses the columns
of $\bm{A}$ and $\Pi_{\text{r}}(\bm{A})$ reverses the rows of $\bm{A}$. 

\begin{figure}
\begin{centering}
\includegraphics{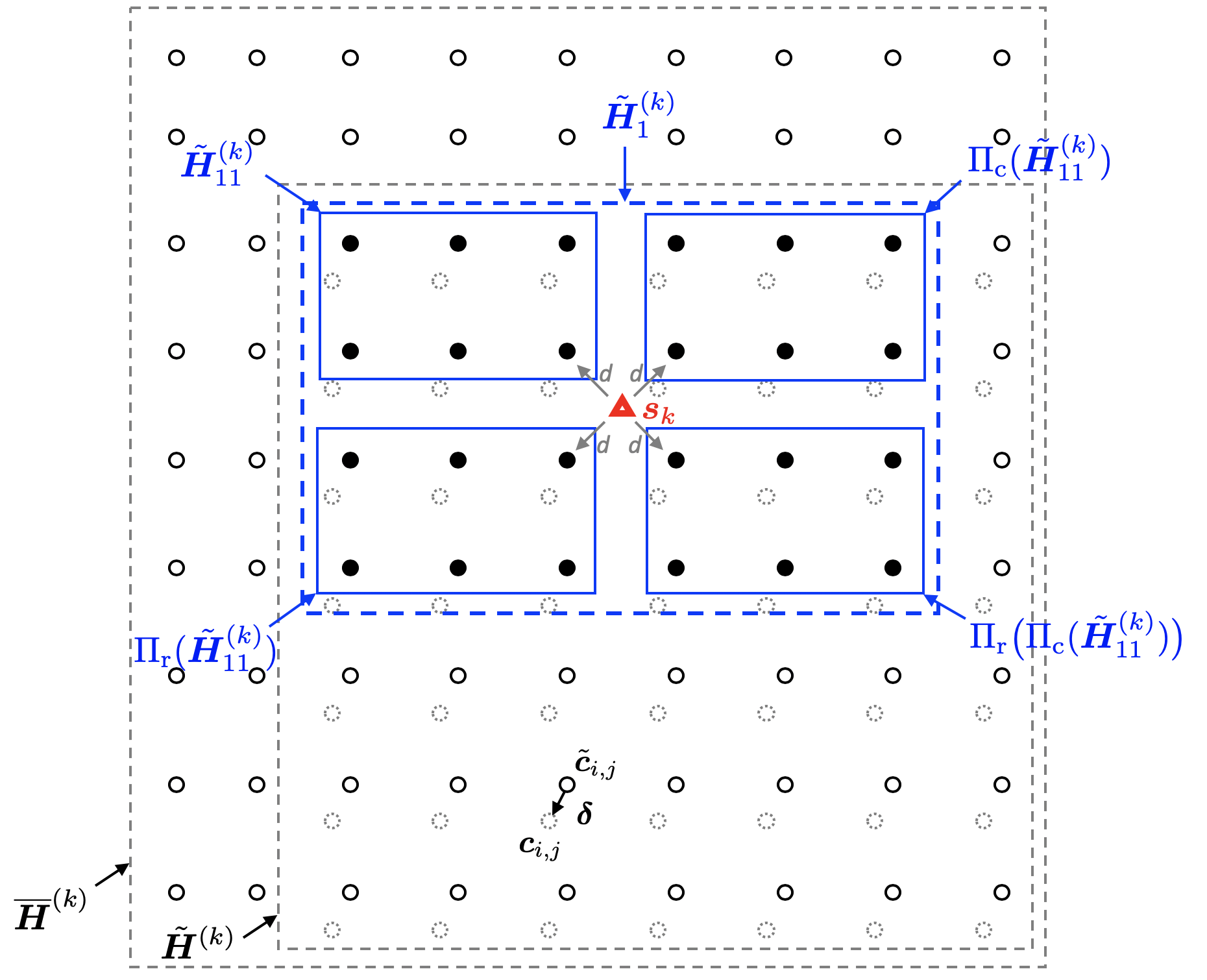}
\par\end{centering}
\caption{\label{fig:h-block} Construction of sub-blocks of $\tilde{\bm{H}}^{(k)}$
for the proof of Proposition \ref{prop:symmetry-signature-vector}.}
\end{figure}
\begin{lem}
Let $\tilde{\bm{H}}_{11}^{(k)}=\bm{U}\bm{\Sigma}\bm{V}^{\text{T}}$
be the \ac{svd} of $\tilde{\bm{H}}_{11}^{(k)}$. Then, 
\begin{equation}
\tilde{\bm{H}}_{1}^{(k)}=\left[\begin{array}{c}
\bm{U}\\
\Pi_{\text{r}}(\bm{U})
\end{array}\right]\bm{\Sigma}\left[\begin{array}{cc}
\bm{V}^{\text{T}} & \Pi_{\text{r}}(\bm{V})^{\text{T}}\end{array}\right]\label{eq:app-pf-symmetric-svd}
\end{equation}
where, on the right hand side, the left and the right matrices are
semi-orthogonal, respectively. 
\end{lem}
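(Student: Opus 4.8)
The plan is to strip the block identity (\ref{eq:H1}) down to a single matrix factorization by encoding the row- and column-reversals as permutation matrices, and then substitute the \ac{svd} of $\tilde{\bm{H}}_{11}^{(k)}$. First I would introduce the two anti-identity (reversal) permutation matrices $\bm{P}\in\mathbb{R}^{J_{1}\times J_{1}}$ and $\bm{Q}\in\mathbb{R}^{J_{2}\times J_{2}}$, so that for any $J_{1}\times J_{2}$ matrix $\bm{A}$ one has $\Pi_{\text{r}}(\bm{A})=\bm{P}\bm{A}$ and $\Pi_{\text{c}}(\bm{A})=\bm{A}\bm{Q}$. Both are symmetric, orthogonal, and involutory, i.e. $\bm{P}^{\text{T}}=\bm{P}$, $\bm{P}^{\text{T}}\bm{P}=\bm{I}$, and likewise for $\bm{Q}$. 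With this notation, (\ref{eq:H1}) rewrites as
\[
\tilde{\bm{H}}_{1}^{(k)}=\begin{bmatrix}\tilde{\bm{H}}_{11}^{(k)} & \tilde{\bm{H}}_{11}^{(k)}\bm{Q}\\ \bm{P}\tilde{\bm{H}}_{11}^{(k)} & \bm{P}\tilde{\bm{H}}_{11}^{(k)}\bm{Q}\end{bmatrix}=\begin{bmatrix}\bm{I}\\ \bm{P}\end{bmatrix}\tilde{\bm{H}}_{11}^{(k)}\begin{bmatrix}\bm{I} & \bm{Q}\end{bmatrix},
\]
which is the key structural observation: the sign-symmetric block matrix is just a rank-preserving two-sided product of $\tilde{\bm{H}}_{11}^{(k)}$.

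Next I would substitute $\tilde{\bm{H}}_{11}^{(k)}=\bm{U}\bm{\Sigma}\bm{V}^{\text{T}}$ and push the permutations through the outer factors, obtaining $\tilde{\bm{H}}_{1}^{(k)}=\begin{bmatrix}\bm{U}\\ \bm{P}\bm{U}\end{bmatrix}\bm{\Sigma}\begin{bmatrix}\bm{V}^{\text{T}} & \bm{V}^{\text{T}}\bm{Q}\end{bmatrix}$, and then identify $\bm{P}\bm{U}=\Pi_{\text{r}}(\bm{U})$ and $\bm{V}^{\text{T}}\bm{Q}=(\bm{Q}\bm{V})^{\text{T}}=\Pi_{\text{r}}(\bm{V})^{\text{T}}$. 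This recovers exactly (\ref{eq:app-pf-symmetric-svd}). The one point worth flagging in this step is the second identification: reversing the \emph{columns} of $\tilde{\bm{H}}_{11}^{(k)}$ surfaces as reversing the \emph{rows} of $\bm{V}$ because of the transpose, and the sizes match precisely because both $\bm{Q}$ and the row-reversal acting on $\bm{V}$ operate on the $J_{2}$-index; one must keep $\bm{P}$ (size $J_{1}$) and $\bm{Q}$ (size $J_{2}$) carefully distinct and correctly placed.

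Finally I would verify the semi-orthogonality claim for the two outer factors. Using $\bm{U}^{\text{T}}\bm{U}=\bm{I}$ and $\bm{P}^{\text{T}}\bm{P}=\bm{I}$,
\[
\begin{bmatrix}\bm{U}\\ \bm{P}\bm{U}\end{bmatrix}^{\text{T}}\begin{bmatrix}\bm{U}\\ \bm{P}\bm{U}\end{bmatrix}=\bm{U}^{\text{T}}\bm{U}+\bm{U}^{\text{T}}\bm{P}^{\text{T}}\bm{P}\bm{U}=2\bm{I},
\]
and symmetrically $\begin{bmatrix}\bm{V}^{\text{T}} & \bm{V}^{\text{T}}\bm{Q}\end{bmatrix}\begin{bmatrix}\bm{V}^{\text{T}} & \bm{V}^{\text{T}}\bm{Q}\end{bmatrix}^{\text{T}}=2\bm{I}$, so each outer factor is $\sqrt{2}$ times a matrix with orthonormal columns (respectively rows); absorbing the scalar into $\bm{\Sigma}$ makes (\ref{eq:app-pf-symmetric-svd}) an honest \ac{svd} of $\tilde{\bm{H}}_{1}^{(k)}$, which is all that is needed downstream (in particular it shows the dominant singular vectors of $\tilde{\bm{H}}_{1}^{(k)}$ inherit the reflection symmetry of $\bm{U}$ and $\bm{V}$). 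There is no genuine obstacle here: the entire argument is the "sandwich" factorization above plus bookkeeping of permutation matrices; the only care points are the transpose bookkeeping noted above and remembering the $\sqrt{2}$ normalization when calling the factors semi-orthogonal.
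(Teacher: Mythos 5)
Your proof is correct and takes essentially the same route as the paper: the paper verifies the factorization $\tilde{\bm{H}}_{1}^{(k)}=\bigl[\bm{U};\,\Pi_{\text{r}}(\bm{U})\bigr]\bm{\Sigma}\bigl[\bm{V}^{\text{T}}\;\;\Pi_{\text{r}}(\bm{V})^{\text{T}}\bigr]$ block-by-block with an explicit index computation and then checks semi-orthogonality exactly as you do. Your packaging of the reversals as the sandwich $\bigl[\bm{I};\,\bm{P}\bigr]\tilde{\bm{H}}_{11}^{(k)}\bigl[\bm{I}\;\;\bm{Q}\bigr]$ with involutory permutation matrices is only a notational streamlining of the same argument, and your remark about the $\sqrt{2}$ normalization matches the paper's subsequent use of $\tfrac{1}{\sqrt{2}}[\bm{u}_{1}^{\text{T}},\;\Pi_{\text{r}}(\bm{u}_{1})^{\text{T}}]^{\text{T}}$ as the unit-norm dominant singular vector.
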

\begin{proof}
First, one can verify that the left-bottom sub-block of $\tilde{\bm{H}}_{1}^{(k)}$
in (\ref{eq:H1}) satisfies $\Pi_{\text{r}}(\tilde{\bm{H}}_{11}^{(k)})=\Pi_{\text{r}}(\bm{U})\bm{\Sigma}\bm{V}^{\text{T}}$,
since the $(i,j)$th entry is given by
\begin{align*}
\Big[\Pi_{\text{r}}(\tilde{\bm{H}}_{11}^{(k)})\Big]_{i,j} & =\Big[\tilde{\bm{H}}_{11}^{(k)}\Big]_{J_{1}+1-i,j}\\
 & =\sum_{l=1}^{\min\{J_{1},J_{2}\}}U_{J_{1}+1-i,l}\Sigma_{l,l}V_{j,l}\\
 & =\sum_{l=1}^{\min\{J_{1},J_{2}\}}\big[\Pi_{\text{r}}(\bm{U})\big]_{i,l}\Sigma_{l,l}V_{j,l}\\
 & =\big[\Pi_{\text{r}}(\bm{U})\bm{\Sigma}\bm{V}^{\text{T}}\big]_{i,j}.
\end{align*}
Similarly, one can verify that other sub-blocks of $\tilde{\bm{H}}_{1}^{(k)}$
agree with the decomposition (\ref{eq:app-pf-symmetric-svd}). 

Second, to see the semi-orthogonality, one can compute 
\begin{align*}
\left[\begin{array}{c}
\bm{U}\\
\Pi_{\text{r}}(\bm{U})
\end{array}\right]^{\text{T}}\left[\begin{array}{c}
\bm{U}\\
\Pi_{\text{r}}(\bm{U})
\end{array}\right] & =\bm{U}^{\text{T}}\bm{U}+\Pi_{\text{r}}(\bm{U})^{\text{T}}\Pi_{\text{r}}(\bm{U})\\
 & =2\bm{U}^{\text{T}}\bm{U}=2\bm{I}.
\end{align*}
Similarly, one can verify the semi-orthogonality of $\big[\begin{array}{cc}
\bm{V}^{\text{T}} & \Pi_{\text{r}}(\bm{V})^{\text{T}}\end{array}\big]^{\text{T}}$.
\end{proof}

As a result, the dominant singular vectors of $\tilde{\bm{H}}_{1}^{(k)}$
are $\tilde{\bm{u}}_{k}^{\text{I}}=\frac{1}{\sqrt{2}}[\bm{u}_{1}^{\text{T}},\;\Pi_{\text{r}}(\bm{u}_{1})^{\text{T}}]^{\text{T}}$
and $\tilde{\bm{v}}_{k}^{\text{I}}=\frac{1}{\sqrt{2}}[\bm{v}_{1}^{\text{T}},\;\Pi_{\text{r}}(\bm{v}_{1})^{\text{T}}]^{\text{T}}$,
where $\bm{u}_{1}$ and $\bm{v}_{1}$ are the first columns of $\bm{U}$
and $\bm{V}$, respectively. It is clear that $\tilde{\bm{u}}_{k}^{\text{I}}$
and $\tilde{\bm{v}}_{k}^{\text{I}}$ are symmetric, respectively. 

\subsection{The Zero-padding Property}

Consider building an unbounded lattice by extending the rows and columns
of $\{\tilde{\bm{c}}_{i,j}:i,j=1,2,\dots,N\}$ to infinity with equal
spacing $\frac{L}{N}$ (\emph{i.e.}, $i,j$ take all the integer values).
 We construct a $2J\times2J$ matrix $\overline{\bm{H}}^{(k)}$,
$J=N-\min\{m,n\}$, using the $2J\times2J$ array of elements $\{\frac{L}{N}\alpha_{k}h(d(\tilde{\bm{c}}_{i,j},\bm{s}_{k})):1-J\leq i-m,j-n\leq J\}$.
Then, $\tilde{\bm{H}}_{1}^{(k)}$ is a sub-block of $\tilde{\bm{H}}^{(k)}$
which is a sub-block of $\overline{\bm{H}}^{(k)}$. Moreover, $\tilde{\bm{H}}_{1}^{(k)}$
is a sub-block at the center of $\overline{\bm{H}}^{(k)}$.
\begin{lem}
Except sub-block $\tilde{\bm{H}}_{1}^{(k)}$, the entries of $\overline{\bm{H}}^{(k)}$
are zero.
\end{lem}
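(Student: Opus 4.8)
The plan is to derive the statement from the bounded support of the propagation function $h$ together with an elementary one-coordinate distance estimate for the grid points lying outside the central block. \emph{Step 1 (support of $h$).} I would first pin down the support of $h$. Every admissible source $\bm{s}$ lies within radius $L/4$ of the center of $\mathcal{A}$, while $\mathcal{A}$ has radius $L/2$, so $\mathrm{dist}(\bm{s},\mathcal{A}^{\mathrm{c}})\geq L/2-L/4=L/4$. The concentration hypothesis $\int_{\mathbb{R}^{2}}h(d(\bm{z},\bm{s}))^{2}d\bm{z}=\int_{\mathcal{A}}h(d(\bm{z},\bm{s}))^{2}d\bm{z}$ forces $h(d(\bm{z},\bm{s}))=0$ for every $\bm{z}\notin\mathcal{A}$; since this holds for all admissible $\bm{s}$ and $h$ is continuous, $h(d)=0$ for all $d\geq L/4$.

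\emph{Step 2 (off-block entries vanish).} Any entry of $\overline{\bm{H}}^{(k)}$ not belonging to $\tilde{\bm{H}}_{1}^{(k)}$ has the form $\frac{L}{N}\alpha_{k}h(d(\tilde{\bm{c}}_{i,j},\bm{s}_{k}))$ with row index $i\notin\{m+1-J_{1},\dots,m+J_{1}\}$ or column index $j\notin\{n+1-J_{2},\dots,n+J_{2}\}$. Consider the row case; the column case is identical with $J_{2},n$ in place of $J_{1},m$. After the shift $\bm{\delta}$, $\bm{s}_{k}$ sits halfway between the $m$th and $(m+1)$th grid coordinates along that axis, so $i\leq m-J_{1}$ or $i\geq m+J_{1}+1$ forces the coordinate of $\tilde{\bm{c}}_{i,j}$ along that axis to differ from that of $\bm{s}_{k}$ by at least $\big(J_{1}+\tfrac12\big)\tfrac{L}{N}$, and hence $d(\tilde{\bm{c}}_{i,j},\bm{s}_{k})\geq\big(J_{1}+\tfrac12\big)\tfrac{L}{N}$. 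Because the source lies within radius $L/4$ of the center of the grid, its cell index satisfies $m\geq N/4$, so $J_{1}=\min\{m,N/2\}\geq N/4$ and therefore $d(\tilde{\bm{c}}_{i,j},\bm{s}_{k})>L/4$. By Step 1 this makes $h(d(\tilde{\bm{c}}_{i,j},\bm{s}_{k}))=0$, so the entry is zero; the identical bound with $J_{2}\geq N/4$ settles the column case, proving the lemma. Indices outside $\{1,\dots,N\}$, created when the grid is extended to the infinite lattice, require no separate treatment since they violate the row or column range a fortiori.

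\emph{Main obstacle.} The argument itself is elementary; the only point that deserves care is the discretization bookkeeping — carrying the half-cell offset $\bm{\delta}$, the floor operations hidden in $m$, $n$ and in $\min\{\cdot,N/2\}$, and the choice $J=N-\min\{m,n\}$ — so as to certify $\big(J_{1}+\tfrac12\big)\tfrac{L}{N}\geq L/4$ and its $J_{2}$ analogue with the correct constants. Intuitively this is just the statement that the half-width of the central block dominates the support radius $L/4$ of $h$, which is precisely the numerology built into the definitions of $J_{1}$ and $J_{2}$; the $\tfrac12$ buffer absorbs the rounding, and for the asymptotic ($N\to\infty$) use of the lemma in Proposition \ref{prop:symmetry-signature-vector} the slack is ample.
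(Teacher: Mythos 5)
Your proof is correct in substance, but it takes a genuinely different route from the paper's. The paper argues by reflection: for any grid point $\tilde{\bm{c}}_{i,j}$ outside $\tilde{\bm{H}}_{1}^{(k)}$, the choice $J_{1}=\min\{m,\cdot\}$, $J_{2}=\min\{n,\cdot\}$ guarantees that the point obtained by reflecting $\tilde{\bm{c}}_{i,j}$ through $\bm{s}_{k}$ falls off the original $N\times N$ grid and hence outside $\mathcal{A}$; since $h$ depends only on distance and the two points are equidistant from $\bm{s}_{k}$, the vanishing of $h$ at the mirror point (from $\int_{\mathbb{R}^{2}\backslash\mathcal{A}}h(d(\bm{z},\bm{s}_{k}))^{2}d\bm{z}=0$) transfers to $\tilde{\bm{c}}_{i,j}$ itself. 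You instead extract a \emph{uniform} support radius, $h(d)=0$ for $d\geq L/4$, and then lower-bound the distance of every off-block lattice point by a one-coordinate estimate. The two arguments rest on the same numerology ($J_{1},J_{2}$ are exactly the half-widths that make either the mirror point leave the grid or the coordinate gap exceed the support radius), but they differ in what they require of the concentration assumption. Your Step 1 needs the identity $\int_{\mathbb{R}^{2}}h(d(\bm{z},\bm{s}))^{2}d\bm{z}=\int_{\mathcal{A}}h(d(\bm{z},\bm{s}))^{2}d\bm{z}$ to hold for \emph{every} admissible $\bm{s}$ (in particular one at distance exactly $L/4$ from the center); for the single actual source at distance $r$ from the center it only yields $h(d)=0$ for $d\geq L/2-r$, which can be as large as $L/2$, and then your Step-2 bound of ``distance $>L/4$'' would not close the argument on its own. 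The paper's reflection sidesteps this entirely, needing the assumption only for the actual $\bm{s}_{k}$, because the mirror point's distance automatically matches whatever the true support radius $L/2-r$ is. Your reading of the assumption is defensible (it is stated for a generic $\bm{s}$ and the normalization must be position-independent for the model to be consistent), and your argument is also easily repaired without it: keep the distance bound in the form $(J_{1}+\tfrac12)\tfrac{L}{N}\geq L/2-|s_{k,1}|\geq L/2-\|\bm{s}_{k}\|$ and compare it with the per-source support radius rather than with $L/4$. So this is a legitimate alternative proof, with the paper's version being slightly more economical in its hypotheses and yours being more explicit about the geometry.
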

\begin{proof}
By the choices of $J_{1}$ and $J_{2}$, the sub-block $\tilde{\bm{H}}_{1}^{(k)}$
covers the whole area that observes non-zero energy from the source.
To see this, for any location $\tilde{\bm{c}}_{i,j}$ outside sub-block
$\tilde{\bm{H}}_{1}^{(k)}$, either $\tilde{\bm{c}}_{i,j}$ or $\tilde{\bm{c}}_{\tilde{i},\tilde{j}}$,
the grid point symmetric about the source location $\bm{s}_{k}$,
is outside the target area $\mathcal{A}$. We thus have $h(d(\tilde{\bm{c}}_{i,j},\bm{s}_{k}))=h(d(\tilde{\bm{c}}_{\tilde{i},\tilde{j}},\bm{s}_{k}))=0$
due to the assumption that $\int_{\mathbb{R}^{2}\backslash\mathcal{A}}h(d(\bm{z},\bm{s}))^{2}d\bm{z}=0$. 
\end{proof}

Thus, the singular vectors of $\overline{\bm{H}}^{(k)}$ can be obtained
from the singular vectors of $\tilde{\bm{H}}_{1}^{(k)}$ by padding
zeros with proper dimensions: $\tilde{\bm{u}}_{k}^{\text{II}}=[\bm{0}^{\text{T}},\,(\tilde{\bm{u}}_{k}^{\text{I}})^{\text{T}},\,\bm{0}^{\text{T}}]^{\text{T}}$
and $\tilde{\bm{v}}_{k}^{\text{II}}=[\bm{0}^{\text{T}},\,(\tilde{\bm{v}}_{k}^{\text{I}})^{\text{T}},\,\bm{0}^{\text{T}}]^{\text{T}}$.

In addition, $\overline{\bm{H}}^{(k)}$ is symmetric because the $(\tilde{i},\tilde{j})$th
entry is sampled at location $\tilde{\bm{c}}_{m-J+\tilde{i},n-J+\tilde{j}}$
which has the same distance to the source $\bm{s}_{k}$ as location
$\tilde{\bm{c}}_{m-J+\tilde{j},n-J+\tilde{i}}$ does. As a result,
the dominant left and right singular vectors are equal, $\tilde{\bm{u}}_{k}^{\text{II}}=\tilde{\bm{v}}_{k}^{\text{II}}$.
Since $\tilde{\bm{H}}_{1}^{(k)}$ is a sub-block of $\tilde{\bm{H}}^{(k)}$
which is a sub-block of $\overline{\bm{H}}^{(k)}$, the dominant singular
vectors of $\tilde{\bm{H}}^{(k)}$ equals to some portions of the
common vector $\tilde{\bm{u}}_{k}^{\text{II}}$, where the left singular
vector $\tilde{\bm{u}}_{k}$ is symmetric about the $n$th element,
and the right singular vector $\tilde{\bm{v}}_{k}$ the $m$th element.

\subsection{Convergence Analysis}

Finally, we study the perturbation or error $\tilde{\bm{H}}^{(k)}-\bm{H}^{(k)}$.
Denote $\tilde{u}_{k}^{N}(y)$ and $\tilde{v}_{k}^{N}(x)$ as linearly
interpolated functions from the singular vectors $\tilde{\bm{u}}_{k}$
and $\tilde{\bm{v}}_{k}$, respectively, \emph{i.e.}, $\tilde{u}_{k}^{N}(c_{\text{Y},i}+\delta_{1})=\sqrt{N/L}\tilde{u}_{k,i}$
and $\tilde{v}_{k}^{N}(c_{\text{X},j}+\delta_{2})=\sqrt{N/L}\tilde{v}_{k,j}$.
A perturbation analysis yields 
\begin{align}
\big|H_{ij}^{(k)}-\tilde{H}_{ij}^{(k)}\big| & =\frac{L}{N}\alpha_{k}\bigg|h(d_{ij})\nonumber \\
 & \qquad\quad\quad-\Big(h(d_{ij})+\int_{0}^{1}h'(d_{ij}+t\tilde{\delta}_{ij})dt\Big)\bigg|\nonumber \\
 & \leq\frac{\alpha_{k}K_{h}L}{N}\big|\tilde{\delta}_{ij}\big|\nonumber \\
 & \leq\frac{\alpha_{k}K_{h}L^{2}}{\sqrt{2}N^{2}}\label{eq:app-matrix-convergence}
\end{align}
where $d_{ij}=\|\bm{c}_{i,j}-\bm{s}_{k}\|$, $\tilde{\delta}_{ij}=\|\tilde{\bm{c}}_{i,j}-\bm{s}_{k}\|-d_{ij}$,
and $h'(d)=\lim_{t\downarrow0}\frac{1}{t}[h(d+t)-h(d)]$ denotes the
right derivative of $h(d)$. The upper bound (\ref{eq:app-matrix-convergence})
implies that $\bm{H}^{(k)}$ converges to $\tilde{\bm{H}}^{(k)}$,
in the sense that $\|\bm{H}^{(k)}-\tilde{\bm{H}}^{(k)}\|_{\text{F}}^{2}\leq N^{2}\big(\frac{\alpha_{k}K_{h}L^{2}}{\sqrt{2}N^{2}}\big)^{2}=\frac{C_{H}}{N^{2}}\to0$.
In addition, since $\bar{u}_{k}^{N}(y)$ and $\bar{v}_{k}^{N}(x)$
uniformly converge, we must have $\bar{u}_{k}^{N}(y)\to\tilde{u}_{k}^{N}(y)$
and $\bar{v}_{k}^{N}(x)\to\tilde{v}_{k}^{N}(x)$, uniformly as $N\to\infty$. 

Note that since $\tilde{\bm{u}}_{k}$ and $\tilde{\bm{v}}_{k}$ are
extracted from the common vector $\tilde{\bm{u}}_{k}^{\text{II}}$,
the interpolated functions $\tilde{u}_{k}^{N}(y)$ $\tilde{v}_{k}^{N}(x)$
has an identical shape. Therefore, there exists a unimodal symmetric
function $w(x)=w(-x)$, such that $\bar{u}_{k}^{N}(y)\to w(y-s_{k,2})$
and $\bar{v}_{k}^{N}(x)\to w(x-s_{k,1})$. The property $\int_{-\infty}^{\infty}w(x)^{2}dx=1$
is a direct consequence of the unit norm of singular vectors. \hfill\IEEEQED

\section{Proof of Theorem \ref{thm:squared-error-bound-full}}

\label{app:pf-thm-localization-error-bound}

We first compute the peak localization error bound given the signature
vector perturbations.

Let $\hat{\bm{\mathsf{v}}}_{1}$ be the dominant right singular vector
of $\bm{\mathsf{H}}$, the observation matrix in the case of conservative
construction, where $N=\sqrt{M}$. In the case of aggressive construction,
$N>\sqrt{M}$, let $\hat{\bm{\mathsf{v}}}_{1}$ be the dominant right
singular vector of $\hat{\bm{\mathsf{X}}}$, the solution to $\mathscr{P}2$.
Denote $\bm{\mathsf{e}}_{1}=\hat{\bm{\mathsf{v}}}_{1}-\bm{v}_{1}$.

From (\ref{eq:reflected-correlation}), it follows that 
\begin{align}
 & R(t;\hat{\bm{\mathsf{v}}}_{1})\nonumber \\
 & \quad=\int_{-\infty}^{\infty}\Big(w_{1}(x)+\bar{e}_{1}^{N}(x)\Big)\Big(w_{1}(-x+t)+\bar{e}_{1}^{N}(-x+t)\Big)dx\nonumber \\
 & \quad=\int_{-\infty}^{\infty}w_{1}(x)w_{1}(-x+t)dx+\int_{-\infty}^{\infty}w_{1}(x)\bar{e}_{1}^{N}(-x+t)dx\nonumber \\
 & \qquad+\int_{-\infty}^{\infty}\bar{e}_{1}^{N}(x)w_{1}(-x+t)dx+\int_{-\infty}^{\infty}\bar{e}_{1}^{N}(x)\bar{e}_{1}^{N}(-x+t)dx\nonumber \\
 & \quad=\tau(t-2s_{1,1})+\int_{-\infty}^{\infty}w_{1}(x)\bar{e}_{1}^{N}(-x+t)dx\nonumber \\
 & \qquad+\int_{+\infty}^{-\infty}\bar{e}_{1}^{N}(-y+t)w_{1}(y)(-dy)+E^{N}(t)\label{eq:app-approx-a}\\
 & \quad=\tau(t-2s_{1,1})+2\int_{-\infty}^{\infty}w_{1}(x)\bar{e}_{1}^{N}(-x+t)dx+E^{N}(t)\label{eq:app-pf-approx-R}
\end{align}
where Equation (\ref{eq:app-pf-approx-R}) uses the fact that $\int_{-\infty}^{\infty}w_{1}(x)w_{1}(-x+t)dx=\tau(t-2s_{1,1})$
from the integral (\ref{eq:reflected-correaltion-w}).

As $t=2\hat{\mathsf{s}}_{1,1}$ maximizes $R(t;\hat{\bm{\mathsf{v}}}_{1})$
in (\ref{eq:app-pf-approx-R}), we have 
\begin{align*}
 & \tau(2\hat{\mathsf{s}}_{1,1}-2s_{1,1})\\
 & +2\int_{-\infty}^{\infty}w_{1}(x)\bar{e}_{1}^{N}(-x+2\hat{\mathsf{s}}_{1,1})dx+E^{N}(2\hat{\mathsf{s}}_{1,1})-E^{N}(2s_{1,1})\\
 & \geq\tau(2s_{1,1}-2s_{1,1})+2\int_{-\infty}^{\infty}w_{1}(x)\bar{e}_{1}^{N}(-x+2s_{1,1})dx\\
 & =\tau(0)+2\int_{-\infty}^{\infty}w_{1}(x)\bar{e}_{1}^{N}(-x+2s_{1,1})dx
\end{align*}
where $\tau(0)=\int_{-\infty}^{\infty}w_{1}(x)^{2}dx=1$. As a result,
\begin{align*}
 & 1-\tau(2\hat{\mathsf{s}}_{1,1}-2s_{1,1})\\
 & \quad\leq2\int_{-\infty}^{\infty}w_{1}(x)\Big[e(-x+2\hat{\mathsf{s}}_{1,1})-e(-x+2s_{1,1})\Big]dx\\
 & \qquad+E^{N}(2s_{1,1})-E^{N}(2\hat{\mathsf{s}}_{1,1})\\
 & \quad\leq2C_{e}\big|(\bm{v}_{1}^{N})^{\text{T}}\bm{\mathsf{e}}_{1}^{N}\big|+o(\big|(\bm{v}_{1}^{N})^{\text{T}}\bm{\mathsf{e}}_{1}^{N}\big|)
\end{align*}
using conditions (\ref{eq:regularity-condition}) and (\ref{eq:regularity-condition-2})
for asymptotically large $N$. We obtain 
\begin{equation}
1-2C_{e}\big|(\bm{v}_{1}^{N})^{\text{T}}\bm{\mathsf{e}}_{1}^{N}\big|+o(\big|(\bm{v}_{1}^{N})^{\text{T}}\bm{\mathsf{e}}_{1}^{N}\big|)\leq\tau(2|\hat{\mathsf{s}}_{1,1}-s_{1,1}|)\label{eq:app-pf-error-bound-a0}
\end{equation}
which leads to $\big|\hat{\mathsf{s}}_{1,1}-s_{1,1}\big|\leq\frac{1}{2}\tau^{-1}\big(1-2C_{e}\big|(\bm{v}_{1}^{N})^{\text{T}}\bm{\mathsf{e}}_{1}^{N}\big|+o(\big|(\bm{v}_{1}^{N})^{\text{T}}\bm{\mathsf{e}}_{1}^{N}\big|\big)$,
where $t=\tau^{-1}(y)$ is the inverse function of $y=\tau(t)$, $t\geq0$.
Hence, 
\begin{equation}
\big(\hat{\mathsf{s}}_{1,1}-s_{1,1}\big)^{2}\leq\frac{1}{4}\Big[\tau^{-1}\big(1-2C_{e}\phi_{0}+o(\phi_{0})\big)\Big]^{2}\label{eq:app-pf-error-bound-a}
\end{equation}
where we denote $\phi_{0}\triangleq\big|(\bm{v}_{1}^{N})^{\text{T}}\bm{\mathsf{e}}_{1}^{N}\big|$. 

In addition, from the assumption that the first and the second-order
derivatives, $\tau'(t)$ and $\tau''(t)$, exist and are continuous
at $t=0$, we bound $\tau(t)$ by a quadratic function 
\begin{equation}
\tau(t)\leq\tau(0)+\tau'(0)t+\frac{1}{2}(\tau''(0)+\nu)t^{2}\label{eq:app-upper-bound-tau-t}
\end{equation}
for some small $\nu>0$. Note that for any $\nu$, there exists $a_{\nu}>0$,
such that the quadratic upper bound (\ref{eq:app-upper-bound-tau-t})
holds for all $t\in[0,a_{\nu}]$. In addition, from the definition
of $\tau(t)$ in (\ref{eq:autocorrelation-function}), we obtain $\tau(0)'=0$
and $\tau''(0)<0$. Then, from (\ref{eq:app-pf-error-bound-a0}),
we arrive at 
\begin{align}
 & 1-2C_{e}\phi_{0}+o(\phi_{0})\leq\tau(2|\hat{\mathsf{s}}_{1,1}-s_{1,1}|)\nonumber \\
 & \qquad\qquad\qquad\leq1+\frac{1}{2}(\tau''(0)+\nu)\times4\big(\hat{\mathsf{s}}_{1,1}-s_{1,1}\big)^{2}\label{eq:app-pf-error-bound-b}
\end{align}
if $t=2\big|\hat{\mathsf{s}}_{1,1}-s_{1,1}\big|$ is small enough,
\emph{i.e.}, $t\in[0,a_{\nu}]$. Note that $\tau''(0)+\nu<0$ as $\nu$
can be chosen arbitrarily small. As a result, 
\[
\big|\hat{\mathsf{s}}_{1,1}-s_{1,1}\big|^{2}\leq\frac{1}{2}\frac{\big(2C_{e}\phi_{0}+o(\phi_{0})\big)}{-\tau^{''}(0)-\nu}=\frac{1}{2}\frac{\big(2C_{e}+o(1)\big)\phi_{0}}{-\tau^{''}(0)-\nu}.
\]
For small enough $\phi_{0}$, we will have $|o(\phi_{0})|<C_{e}$,
which yields 
\begin{equation}
\big|\hat{\mathsf{s}}_{1,1}-s_{1,1}\big|^{2}\leq\frac{1}{2}\cdot\frac{3C_{e}\phi_{0}}{-\tau^{''}(0)-\nu}.\label{eq:app-pf-error-bound-c}
\end{equation}

A similar result can be derived for the error component $(\hat{\mathsf{s}}_{1,2}-s_{1,2})^{2}$
by analyzing the perturbation of $\bm{u}_{1}$, and the result is
statistically identical to (\ref{eq:app-pf-error-bound-a}) and (\ref{eq:app-pf-error-bound-c}).
Combining (\ref{eq:app-pf-error-bound-a}) and (\ref{eq:app-pf-error-bound-c})
yields 
\begin{align}
\|\hat{\bm{\mathsf{s}}}_{1}-\bm{s}_{1}\|_{2}^{2} & \leq\max\Big\{\frac{3C_{e}\phi_{0}}{-\tau^{''}(0)-\nu},\nonumber \\
 & \qquad\qquad\frac{1}{2}\Big[\tau^{-1}(1-2C_{e}\phi_{0}+o(\phi_{0}))\big]^{2}\Big\}.\label{eq:app-mse-long}
\end{align}
Moreover, for small enough $\phi_{0}=\big|(\bm{v}_{1}^{N})^{\text{T}}\bm{\mathsf{e}}_{1}^{N}\big|$,
(\ref{eq:app-mse-long}) simplifies to
\begin{equation}
\|\hat{\bm{\mathsf{s}}}_{1}-\bm{s}_{1}\|_{2}^{2}\leq\frac{3C_{e}\phi_{0}}{-\tau^{''}(0)-\nu}.\label{eq:app-mse}
\end{equation}

We hereafter drop the superscript $N$ from $\bm{v}_{1}^{N}$ and
$\bm{\mathsf{e}}_{1}^{N}$ for simplicity and derive the signature
perturbation $\phi_{0}=\big|\bm{v}_{1}^{\text{T}}\bm{\mathsf{e}}_{1}\big|$
by examining two cases. 

\subsection{The Case of Conservative Construction $N=\sqrt{M}$}

Consider that the sensing location $\bm{z}^{(m)}$ is inside the grid
cell centered at $\bm{c}_{i,j}$. Then, according to the matrix observation
model (\ref{eq:measurement-model})\textendash (\ref{eq:matrix-observation-model-true}),
we have 
\begin{align}
\mathsf{H}_{ij}-H_{ij} & =\frac{L}{N}\Big(\mathsf{h}^{(m)}-\alpha_{1}h(d_{ij})\Big)\nonumber \\
 & =\frac{\alpha_{1}L}{N}\Big(\int_{0}^{1}h'(d_{ij}+t\delta^{(m)})dt+\frac{\mathsf{n}^{(m)}}{\alpha_{1}}\Big)\nonumber \\
 & =\frac{\alpha_{1}L}{N}\Big(\Xi_{ij}+\frac{\mathsf{n}^{(m)}}{\alpha_{1}}\Big)\label{eq:app-pf-per-entry-error-0}
\end{align}
where $d_{ij}=\|\bm{c}_{i,j}-\bm{s}_{1}\|_{2}$, $\delta^{(m)}=\|\bm{z}^{(m)}-\bm{s}_{1}\|_{2}-d_{ij}$,
$h'(d)=\lim_{t\downarrow0}\frac{1}{t}[h(t+d)-h(d)]$ is the right
derivative of $h(d)$, and $\Xi_{ij}\triangleq\int_{0}^{1}h'(d_{ij}+t\delta^{(m)})dt$. 

Note that by the Lipschitz continuity of $h(d)$, we have 
\begin{equation}
\big|\Xi_{ij}\big|\leq K_{h}\delta^{(m)}\leq K_{h}L/(\sqrt{2}N)\label{eq:app-grid-sensitivity}
\end{equation}
and therefore, $\Xi_{ij}$ is a \emph{sub-Gaussian} random variable
with sub-Gaussian moment bounded by $K_{h}L/(\sqrt{2}N)$ \cite{buldygin1980sub,stromberg1994probability}.\footnote{Recall that a real-valued random variable $X$ is said to be sub-Gaussian
if it has the property that there is some $b>0$ such that for every
$t\in\mathbb{R}$ one has $\mathbb{E}\{e^{tX}\}\leq\exp(\frac{1}{2}b^{2}t^{2})$.
The sub-Gaussian moment of $X$ is defined as $s_{\text{G}}(X)=\inf\{b\geq0\big|\mathbb{E}\{e^{tX}\}\leq\exp(\frac{1}{2}b^{2}t^{2}),\forall t\in\mathbb{R}\}$,
which has property $s_{\text{G}}(aX)=|a|s_{\text{G}}(X)$ \cite{buldygin1980sub,stromberg1994probability}. } 

In addition, the bounded random variable $\mathsf{n}^{(m)}$ is also
sub-Gaussian with sub-Gaussian moment bounded by $\bar{\sigma}_{\text{n}}$.
As a result, the random quantity $\frac{\alpha_{1}L}{N}\Xi_{ij}+\frac{\alpha_{1}L}{N}\frac{\mathsf{n}^{(m)}}{\alpha_{1}}$
from (\ref{eq:app-pf-per-entry-error-0}) is sub-Gaussian with sub-Gaussian
moment bounded by\footnote{If $X$ and $Y$ are independent sub-Gaussian, the sub-Gaussian moment
of $X+Y$ satisfies $s_{\text{G}}(X+Y)=\sqrt{s_{\text{G}}(X)+s_{\text{G}}(Y)}$
\cite{buldygin1980sub,stromberg1994probability}. }
\begin{equation}
\bar{\omega}\triangleq\frac{\alpha_{1}L}{N}\sqrt{(\frac{K_{h}L}{\sqrt{2}N})^{2}+(\frac{\bar{\sigma}_{\text{n}}}{\alpha_{1}})^{2}}.\label{eq:app-pf-per-entry-error-2nd-moment}
\end{equation}

Consider the $N\times N$ matrix of the sample error $\bm{\mathsf{E}}\triangleq\bm{\mathsf{H}}-\bm{H}$,
where the entries $\mathsf{E}_{ij}$ have zero mean and sub-Gaussian
moment bounded by $\bar{\omega}$ in (\ref{eq:app-pf-per-entry-error-2nd-moment}).
This implies that $\tilde{\bm{\mathsf{E}}}=\frac{1}{\bar{\omega}}\bm{\mathsf{E}}$
have zero mean entries with sub-Gaussian moment bounded by $1$. A
bound of the spectral norm $\sigma(\tilde{\bm{\mathsf{E}}})$ can
be derived using the following result. 
\begin{lem}
[Spectral Norm \cite{rudelson2010non}]\label{lem:Spectral-Norm}
For an $N\times n$ random matrix $\bm{X}$ whose entries are independent
zero mean sub-Gaussian random variables with sub-Gaussian moments
bounded by 1, it holds that
\[
\mathbb{P}\big\{\sigma(\bm{X})>C(\sqrt{N}+\sqrt{n})+t\big\}\leq2e^{-ct^{2}},\quad t\geq0
\]
for some universal constants $C$ and $c$.
\end{lem}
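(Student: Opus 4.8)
The plan is to derive this as a standard $\varepsilon$-net discretization of the sphere combined with the scalar sub-Gaussian tail bound; one could of course just invoke \cite{rudelson2010non}, but here is the self-contained argument I would give. Recall that $\sigma(\bm{X})=\sup_{\bm{u}\in S^{N-1},\,\bm{v}\in S^{n-1}}\bm{u}^{\text{T}}\bm{X}\bm{v}$. First I would fix $\varepsilon=\tfrac14$ and choose $\varepsilon$-nets $\mathcal{N}\subset S^{N-1}$ and $\mathcal{M}\subset S^{n-1}$ whose cardinalities satisfy the standard volumetric estimate $|\mathcal{N}|\le(1+2/\varepsilon)^{N}=9^{N}$ and $|\mathcal{M}|\le 9^{n}$. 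Replacing the supremizing $\bm{u},\bm{v}$ by their nearest net points and bounding the residual by $2\varepsilon\,\sigma(\bm{X})$ yields $\sigma(\bm{X})\le(1-2\varepsilon)^{-1}\max_{\bm{u}\in\mathcal{N},\,\bm{v}\in\mathcal{M}}|\bm{u}^{\text{T}}\bm{X}\bm{v}|=2\max_{\bm{u},\bm{v}}|\bm{u}^{\text{T}}\bm{X}\bm{v}|$.

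Next I would establish concentration for a single fixed pair $(\bm{u},\bm{v})$ of unit vectors. Writing $\bm{u}^{\text{T}}\bm{X}\bm{v}=\sum_{i,j}u_{i}v_{j}X_{ij}$, this is a sum of independent, zero-mean, sub-Gaussian random variables. Using the two properties already quoted in the paper's footnotes, namely $s_{\text{G}}(aX)=|a|\,s_{\text{G}}(X)$ and $s_{\text{G}}(X+Y)=(s_{\text{G}}(X)^{2}+s_{\text{G}}(Y)^{2})^{1/2}$ for independent $X,Y$, the sum has sub-Gaussian moment at most $\big(\sum_{i,j}u_{i}^{2}v_{j}^{2}\big)^{1/2}=\|\bm{u}\|_{2}\|\bm{v}\|_{2}=1$. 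The scalar sub-Gaussian tail bound then gives $\mathbb{P}\{|\bm{u}^{\text{T}}\bm{X}\bm{v}|>s\}\le 2e^{-s^{2}/2}$ for every $s\ge 0$.

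Finally I would take a union bound over the two nets and calibrate the constants. Combining the previous two steps, $\mathbb{P}\{\sigma(\bm{X})>2s\}\le|\mathcal{N}|\,|\mathcal{M}|\cdot 2e^{-s^{2}/2}\le 2\exp\big((N+n)\ln 9-s^{2}/2\big)$. Substituting $2s=C(\sqrt{N}+\sqrt{n})+t$, so that $s^{2}\ge\tfrac{C^{2}}{4}(N+n)+\tfrac{t^{2}}{4}$ after discarding the nonnegative cross term, gives $\mathbb{P}\{\sigma(\bm{X})>C(\sqrt{N}+\sqrt{n})+t\}\le 2\exp\big((N+n)(\ln 9-\tfrac{C^{2}}{8})-\tfrac{t^{2}}{8}\big)$. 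Choosing $C$ large enough that $C^{2}/8\ge\ln 9$ makes the first exponent nonpositive, leaving $\mathbb{P}\{\cdots\}\le 2e^{-t^{2}/8}$, which is the claim with $c=1/8$ and this value of $C$.

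As for the main obstacle: there really is no deep difficulty here, since this is a textbook non-asymptotic random matrix estimate that can simply be cited. The only places that need a little care are the net-approximation inequality (keeping track of the factor $(1-2\varepsilon)^{-1}$ and the net cardinality bound) and the bookkeeping of universal constants so that one $C$ works simultaneously for the $\sqrt{N}+\sqrt{n}$ scale term and the $e^{-ct^{2}}$ tail; both are routine.
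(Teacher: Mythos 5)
Your proof is correct. The paper does not actually prove this lemma---it is quoted verbatim from Rudelson--Vershynin \cite{rudelson2010non}---and your $\varepsilon$-net argument (two $\tfrac14$-nets of cardinality $9^{N}$ and $9^{n}$, the $(1-2\varepsilon)^{-1}=2$ approximation factor, sub-Gaussian concentration of $\bm{u}^{\text{T}}\bm{X}\bm{v}$ for fixed net points, then a union bound with $C^{2}/8\ge\ln 9$ and $c=1/8$) is exactly the standard proof given in that reference, with the constants tracked correctly. One minor point in your favor: you use the correct additivity rule $s_{\text{G}}(X+Y)=(s_{\text{G}}(X)^{2}+s_{\text{G}}(Y)^{2})^{1/2}$ for independent summands, whereas the paper's footnote states it with the squares missing; your version is the right one and is what makes the bound $\big(\sum_{i,j}u_{i}^{2}v_{j}^{2}\big)^{1/2}=1$ go through.
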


By choosing $N=n$ and $t=C\sqrt{N}$ in Lemma \ref{lem:Spectral-Norm},
we obtain $\sigma(\tilde{\bm{\mathsf{E}}})\leq3C\sqrt{N}$ with probability
at least $1-2e^{-cC^{2}N}$. Therefore, 
\begin{equation}
\sigma(\bm{\mathsf{E}})^{2}\leq9C^{2}\bar{\omega}^{2}N=C_{0}\frac{\alpha_{1}^{2}L^{2}}{N}\Big(\frac{K_{h}^{2}L^{2}}{2N^{2}}+\frac{\bar{\sigma}_{\text{n}}^{2}}{\alpha_{1}^{2}}\Big)\label{eq:spectral-norm-full}
\end{equation}
holds with probability at least $1-2e^{-C_{3}N}$, where $C_{0}=9C^{2}$,
and $C_{3}=cC^{2}$. 


We now derive an upper bound of $\big|\bm{v}_{1}^{\text{T}}\bm{\mathsf{e}}_{1}\big|$.

Note that $\bm{v}_{1}^{\text{T}}\bm{\mathsf{e}}_{1}<0$ because $1=\|\hat{\bm{\mathsf{v}}}_{1}\|_{2}^{2}=\|\bm{v}_{1}+\bm{\mathsf{e}}_{1}\|_{2}^{2}=1+2\bm{v}_{1}^{\text{T}}\bm{\mathsf{e}}_{1}+\|\bm{\mathsf{e}}_{1}^{\text{T}}\bm{\mathsf{e}}_{1}\|_{2}^{2}$,
and hence, $2\bm{v}_{1}^{\text{T}}\bm{\mathsf{e}}_{1}=-\|\bm{\mathsf{e}}_{1}^{\text{T}}\bm{\mathsf{e}}_{1}\|_{2}^{2}<0$.
Then, the singular vector perturbation can be obtained as 
\begin{align}
\sin\angle(\bm{v}_{1},\hat{\bm{\mathsf{v}}}_{1}) & =\sqrt{1-\big|\bm{v}_{1}^{\text{T}}(\bm{v}_{1}+\bm{\mathsf{e}}_{1})\big|^{2}}\nonumber \\
 & =\sqrt{-2\bm{v}_{1}^{\text{T}}\bm{\mathsf{e}}_{1}+(\bm{v}_{1}^{\text{T}}\bm{\mathsf{e}}_{1})^{2}}\nonumber \\
 & \geq\sqrt{2\big|\bm{v}_{1}^{\text{T}}\bm{\mathsf{e}}_{1}\big|}\label{eq:app-pf-sin}
\end{align}
where $|\cdot|$ denotes the absolute value operator. 

On the other hand, using the singular vector perturbation results
in \cite{vu:J11singular}, we know that 
\begin{equation}
\sin\angle(\bm{v}_{1},\hat{\bm{\mathsf{v}}}_{1})\leq\frac{2\sigma(\bm{\mathsf{E}})}{\sigma_{1,1}-\sigma_{1,2}}\label{eq:app-pf-sin-upperbound}
\end{equation}
where $\sigma_{1}$ and $\sigma_{2}$ are the first and second dominant
singular values of $\bm{H}$. Recall that we have $\sigma_{1,1}-\sigma_{1,2}\to\kappa\alpha_{1}$,
as $N\to\infty$.

Therefore, we arrive at 
\begin{align*}
\big|\bm{v}_{1}^{\text{T}}\bm{\mathsf{e}}_{1}\big| & \leq\frac{1}{2}\big(\sin\angle(\bm{v}_{1},\hat{\bm{\mathsf{v}}}_{1})\big)^{2}\\
 & \leq\frac{1}{2}\Big(\frac{2\sigma(\bm{\mathsf{E}})}{\kappa\alpha_{1}}\Big)^{2}\\
 & \leq\frac{C_{0}K_{h}^{2}L^{4}}{\kappa^{2}N^{3}}+\frac{2C_{0}L^{2}}{\kappa^{2}N}\frac{\bar{\sigma}_{\text{n}}^{2}}{\alpha_{1}^{2}}
\end{align*}
for asymptotically large $N$, with probability at least $1-2e^{-C_{3}N}$.
As $M=N^{2}$, it further holds that 
\begin{equation}
3C_{e}\big|\bm{v}_{1}^{\text{T}}\bm{\mathsf{e}}_{1}\big|\leq\frac{3C_{0}C_{e}K_{h}^{2}L^{4}}{\kappa^{2}M^{1.5}}+\frac{6C_{0}C_{e}L^{2}}{\kappa^{2}\sqrt{M}}\frac{\bar{\sigma}_{\text{n}}^{2}}{\alpha_{1}^{2}}\triangleq\phi_{1}\label{eq:app-pf-singular-vector-perturbation}
\end{equation}
for large $M$ with probability at least $1-2e^{-C_{3}N}$. 

Note that for large enough $M$, the term $\phi_{1}$ in (\ref{eq:app-pf-singular-vector-perturbation})
becomes small enough to satisfy $t=\tau^{-1}(1-\phi_{1})\leq a_{\nu}$,
such that $\tau(t)\leq\tau(0)+\tau'(0)t+\frac{1}{2}(\tau''(0)+\nu)t^{2}$
holds, and therefore, (\ref{eq:app-mse-long}) simplifies to (\ref{eq:app-mse}).
Applying $3C_{e}\phi_{0}\leq\phi_{1}$ to (\ref{eq:app-mse}) and
using (\ref{eq:app-pf-singular-vector-perturbation}), we obtain (\ref{eq:squared-error-bound-full}). 


\subsection{The Case of Aggressive Construction $N>\sqrt{M}$}

We first compute the total measurement noise using property (\ref{eq:app-grid-sensitivity})
and the bounded noise assumption in Theorem \ref{thm:squared-error-bound-partial}.
From (\ref{eq:app-pf-per-entry-error-0}), we have 
\begin{align}
\big(\mathsf{H}_{ij}-H_{ij}\big)^{2} & =\frac{\alpha_{1}^{2}L^{2}}{N^{2}}\Big(\Xi^{2}+2\Xi\frac{\mathsf{n}^{(m)}}{\alpha_{1}}+\big(\frac{\mathsf{n}^{(m)}}{\alpha_{1}}\big)^{2}\Big)\nonumber \\
 & \leq\frac{\alpha_{1}^{2}L^{2}}{N^{2}}\Big(\frac{K_{h}^{2}L^{2}}{2N^{2}}+\frac{\sqrt{2}K_{h}L}{N}\frac{\bar{\sigma}_{\text{n}}}{\alpha_{1}}+\frac{\bar{\sigma}_{\text{n}}^{2}}{\alpha_{1}^{2}}\Big)\triangleq\bar{\varepsilon}^{2}.\label{eq:app-bar-epsilon}
\end{align}
As a result, 
\begin{align*}
\|\mathcal{P}_{\Omega}(\bm{\mathsf{H}}-\bm{H})\|_{\text{F}}^{2} & =\sum_{(i,j)\in\Omega}|\mathsf{H}_{ij}-H_{ij}|^{2}\leq M\bar{\varepsilon}^{2}.
\end{align*}

We then characterize the completed matrix $\hat{\bm{\mathsf{X}}}$
as the solution to $\mathscr{P}2$. 
\begin{lem}
[Matrix completion with noise \cite{CanPla:J10}]\label{prop:matrix-completion-noise}
Suppose that the parameter $\epsilon$ in $\mathscr{P}2$ is chosen
to satisfy $\epsilon\geq\|\mathcal{P}_{\Omega}(\bm{\mathsf{H}}-\bm{H})\|_{\text{F}}$.
In addition, assume that $M\geq CN(\log N)^{2}$ for some constant
$C=C'\beta$. Then, with probability at least $1-N^{-\beta}$, 
\begin{equation}
\|\hat{\bm{\mathsf{X}}}-\bm{H}\|_{\text{F}}\leq4\sqrt{\frac{(2+p)N}{p}}\epsilon+2\epsilon\label{eq:matrix-completion-noise-bound}
\end{equation}
where $p=M/N^{2}$.
\end{lem}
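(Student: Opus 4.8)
The plan is to invoke the noisy matrix-completion guarantee of Candès and Plan \cite{CanPla:J10} essentially verbatim, after verifying that the present setting meets its hypotheses. That result needs: (i) the ground truth $\bm{H}$ to be low rank (here its rank is at most $K$, up to the negligible tail components in (\ref{eq:matrix-decomposition-model})) and incoherent with an absolute constant $\mu_0$, which holds for the smooth, concentrated energy fields $h(d)$ assumed in Section \ref{sec:system-model}; (ii) $\Omega$ to be a uniformly random support of cardinality $M$; and (iii) the observed entries to form a feasible perturbation, which is exactly the standing assumption $\epsilon \ge \|\mathcal{P}_{\Omega}(\bm{\mathsf{H}}-\bm{H})\|_{\text{F}}$ -- it makes $\bm{H}$ feasible for $\mathscr{P}2$. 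Step one of the write-up is therefore just this hypothesis check.

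For the argument itself I would reproduce the Candès--Plan chain of inequalities. Write $\bm{Z} = \hat{\bm{\mathsf{X}}} - \bm{H}$ and let $T$ be the tangent space at $\bm{H}$ to the rank-$K$ variety, i.e. the span of matrices sharing a row or column space with $\bm{H}$; decompose $\bm{Z} = \mathcal{P}_T\bm{Z} + \mathcal{P}_{T^\perp}\bm{Z}$. Since both $\bm{H}$ and $\hat{\bm{\mathsf{X}}}$ satisfy the constraint of $\mathscr{P}2$, the triangle inequality gives the deterministic bound $\|\mathcal{P}_{\Omega}\bm{Z}\|_{\text{F}} \le 2\epsilon$. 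The optimality $\|\bm{H}+\bm{Z}\|_* \le \|\bm{H}\|_*$, combined with the existence of an approximate dual certificate for $\bm{H}$ -- which is precisely what the incoherence hypotheses and the sampling rate $M \ge C N(\log N)^2$ buy, via \cite{CanRec:J12,CanPla:J10} -- forces the error to be essentially concentrated on $T$: $\|\mathcal{P}_{T^\perp}\bm{Z}\|_*$ is bounded by a multiple of $\|\mathcal{P}_{\Omega}\bm{Z}\|_{\text{F}}$. Finally, the probabilistic heart of the proof -- that, with probability at least $1-N^{-\beta}$, the rescaled operator $p^{-1/2}\mathcal{P}_{\Omega}$ is a near-isometry on $T$ and $\mathcal{P}_{\Omega}$ does not nearly annihilate $\mathcal{P}_{T^\perp}\bm{Z}$ -- lets one pass from control of $\mathcal{P}_{\Omega}\bm{Z}$ back to control of $\bm{Z}$ itself. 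Tracking the constants (the $\sqrt{p}$ from the near-isometry and the dimension of $T$, of order $N$) then yields $\|\bm{Z}\|_{\text{F}} \le 4\sqrt{(2+p)N/p}\,\epsilon + 2\epsilon$ with $p = M/N^2$.

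So the write-up structure is: (1) the hypothesis check (incoherence, uniform sampling, feasibility of $\bm{H}$); (2) the two deterministic facts $\|\mathcal{P}_{\Omega}\bm{Z}\|_{\text{F}}\le 2\epsilon$ and $\|\mathcal{P}_{T^\perp}\bm{Z}\|_* \lesssim \|\mathcal{P}_{\Omega}\bm{Z}\|_{\text{F}}$; (3) cite the near-isometry-on-$T$ and dual-certificate estimates of \cite{CanRec:J12,CanPla:J10}, which hold on an event of probability $1-N^{-\beta}$ once $M\ge CN(\log N)^2$; (4) chain the bounds and collect constants. The only genuinely hard ingredient is step (3) -- the restricted near-isometry of $\mathcal{P}_{\Omega}$ on the tangent space and the dual-certificate construction -- and I would cite it rather than reprove it, since it is exactly the technical core of \cite{CanPla:J10} and is standard in the matrix-completion literature. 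Everything else is elementary: triangle inequalities, the nuclear-norm subgradient inequality, and the bookkeeping with the tangent-space dimension that produces the $\sqrt{(2+p)N/p}$ prefactor.
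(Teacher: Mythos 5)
Your proposal is correct and matches the paper's treatment: the paper states this lemma as a direct import of the noisy matrix-completion theorem of Cand\`es and Plan \cite{CanPla:J10} and gives no proof of its own, exactly as you propose to do by citing the dual-certificate and restricted-near-isometry machinery. Your additional sketch of the internals (feasibility of $\bm{H}$ giving $\|\mathcal{P}_{\Omega}\bm{Z}\|_{\text{F}}\le 2\epsilon$, the tangent-space decomposition, and the bookkeeping yielding the $\sqrt{(2+p)N/p}$ prefactor) faithfully reproduces the argument in the cited source, and your observation that the incoherence hypothesis is implicit in the lemma statement is a fair and worthwhile hypothesis check.
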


According to Lemma \ref{prop:matrix-completion-noise}, we choose
the parameter $\epsilon=\sqrt{M}\bar{\varepsilon}$ in $\mathscr{P}2$
for (\ref{eq:matrix-completion-noise-bound}) to hold. In addition,
the bound in (\ref{eq:matrix-completion-noise-bound}) can be simplified
as 
\begin{align}
\|\hat{\bm{\mathsf{X}}}-\bm{H}\|_{\text{F}} & \leq4\sqrt{\frac{(2+p)N}{p}}\epsilon+2\epsilon\nonumber \\
 & =\bigg(4\sqrt{\frac{2N^{3}}{M}+N}+2\bigg)\epsilon\label{eq:bar-eta-eq1}\\
 & \leq\bigg(4\sqrt{\frac{2N^{3}}{CN(\log N)^{2}}+N}+2\bigg)\epsilon\label{eq:bar-eta-eq2}\\
 & =\bigg(\sqrt{\frac{32}{C}}\frac{N}{\log N}+o(\sqrt{N})\bigg)\sqrt{M}\bar{\varepsilon}\label{eq:bar-eta-eq3}\\
 & \qquad\qquad\qquad\qquad\qquad\qquad\qquad\triangleq\bar{\eta}\nonumber 
\end{align}
where equality (\ref{eq:bar-eta-eq1}) is based on the relation $p=M/N^{2}$,
inequality (\ref{eq:bar-eta-eq2}) is due to $M\geq CN(\log N)^{2}$,
and equality (\ref{eq:bar-eta-eq3}) can be shown using the following
lemma. 
\begin{lem}
\label{lem:app-square-root-approximation}Suppose that non-negative
functions $f$ and $g$ satisfy $f(x)\to\infty$, $g(x)\to\infty$,
and $h(x)=f(x)/g(x)\to\infty$, as $x\to\infty$. Then, $\sqrt{f(x)+g(x)}=\sqrt{f(x)}+o(\sqrt{g(x)})$,
where $\lim_{x\to\infty}o(\sqrt{g(x)})/\sqrt{g(x)}=0$. 
\end{lem}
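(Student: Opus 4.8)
The plan is to rationalize the difference $\sqrt{f(x)+g(x)}-\sqrt{f(x)}$ and then divide by $\sqrt{g(x)}$. First I would note that since $f(x)\to\infty$ and $g(x)\to\infty$, there is some $x_0$ such that $f(x)>0$ and $g(x)>0$ for all $x\ge x_0$; throughout the argument we restrict to $x\ge x_0$ so that all square roots and quotients below are well defined. Using the elementary identity $\sqrt{a}-\sqrt{b}=(a-b)/(\sqrt{a}+\sqrt{b})$ for $a,b>0$, I would write
\[
\sqrt{f(x)+g(x)}-\sqrt{f(x)}=\frac{g(x)}{\sqrt{f(x)+g(x)}+\sqrt{f(x)}}.
\]

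Next I would divide both sides by $\sqrt{g(x)}$ and bound the denominator from below by discarding the nonnegative term $\sqrt{f(x)+g(x)}$:
\[
0\le\frac{\sqrt{f(x)+g(x)}-\sqrt{f(x)}}{\sqrt{g(x)}}=\frac{\sqrt{g(x)}}{\sqrt{f(x)+g(x)}+\sqrt{f(x)}}\le\frac{\sqrt{g(x)}}{\sqrt{f(x)}}=\frac{1}{\sqrt{h(x)}},
\]
where the last equality uses the definition $h(x)=f(x)/g(x)$. The left-most inequality holds because $f(x)+g(x)\ge f(x)$ implies $\sqrt{f(x)+g(x)}\ge\sqrt{f(x)}$, so the numerator is nonnegative.

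Finally, since $h(x)\to\infty$ by hypothesis, we have $1/\sqrt{h(x)}\to0$ as $x\to\infty$, and the squeeze above forces
\[
\lim_{x\to\infty}\frac{\sqrt{f(x)+g(x)}-\sqrt{f(x)}}{\sqrt{g(x)}}=0,
\]
which is precisely the assertion $\sqrt{f(x)+g(x)}=\sqrt{f(x)}+o\big(\sqrt{g(x)}\big)$. This is exactly the form used to pass from $4\sqrt{2N^{3}/M+N}$ to $\sqrt{32/C}\,N/\log N+o(\sqrt{N})$ in \eqref{eq:bar-eta-eq3}, taking $f(x)=2N^{3}/M$ and $g(x)=N$ (with $x=N$), where $h=f/g=2N^{2}/M\to\infty$ since $M\le$ a constant times $N(\log N)^2$. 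There is no real obstacle in this lemma; the only point requiring a word of care is the legitimacy of the rationalization step, which is why I fix $x_0$ at the outset so that the denominators are strictly positive.
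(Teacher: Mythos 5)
Your proof is correct and follows essentially the same route as the paper's: both hinge on the conjugate identity $\sqrt{a}-\sqrt{b}=(a-b)/(\sqrt{a}+\sqrt{b})$ together with the hypothesis $h(x)=f(x)/g(x)\to\infty$; the paper first divides by $\sqrt{g(x)}$ and then rationalizes $\sqrt{h(x)+1}-\sqrt{h(x)}$ to get the exact limit $1/(\sqrt{h(x)+1}+\sqrt{h(x)})\to 0$, whereas you rationalize first and finish with a squeeze against $1/\sqrt{h(x)}$. The difference is purely cosmetic, so no further commentary is needed.
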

\begin{proof}
We can show that the residual error $\sqrt{f(x)+g(x)}-\sqrt{f(x)}$
grows slower than $\sqrt{g(x)}$ as follows
\begin{align*}
 & \lim_{x\to\infty}\frac{\sqrt{f(x)+g(x)}-\sqrt{f(x)}}{\sqrt{g(x)}}\\
 & \quad=\lim_{x\to\infty}\sqrt{\frac{f(x)}{g(x)}+1}-\sqrt{\frac{f(x)}{g(x)}}\\
 & \quad=\lim_{x\to\infty}\frac{(\sqrt{h(x)+1}-\sqrt{h(x)})(\sqrt{h(x)+1}+\sqrt{h(x)})}{\sqrt{h(x)+1}+\sqrt{h(x)}}\\
 & \quad=\lim_{x\to\infty}\frac{1}{\sqrt{h(x)+1}+\sqrt{h(x)}}\\
 & \quad=0.
\end{align*}
\end{proof}

Let 
\[
f(N)=\frac{2N^{3}}{CN(\log N)^{2}}=\frac{2N^{2}}{C(\log N)^{2}}
\]
 and $g(N)=N$. We have $f,g\to\infty$ and $f/g\to\infty$ for $N\to\infty$.
Then, equality (\ref{eq:bar-eta-eq3}) follows from Lemma \ref{lem:app-square-root-approximation}. 

Consider $\hat{\bm{\mathsf{E}}}$ as a random matrix whose entries
are independent zero mean with sub-Gaussian moments bounded by $\tilde{\omega}\triangleq\bar{\eta}/N$.
Since $\frac{1}{N^{2}}\sum_{i,j}\big|\hat{X}_{ij}-H_{ij}\big|^{2}\leq\bar{\eta}^{2}/N^{2}$
with probability at least $1-N^{-\beta}$, we know $\sigma(\hat{\bm{\mathsf{X}}}-\bm{H})\leq\sigma(\hat{\bm{\mathsf{E}}})$
with probability at least $1-N^{-\beta}$. Following the same derivation
as for (\ref{eq:spectral-norm-full}), we obtain 
\begin{align*}
\sigma(\hat{\bm{\mathsf{E}}})^{2}\leq C_{0}\tilde{\omega}^{2}N & =\frac{C_{0}}{N}\Big(\frac{32N^{2}}{C(\log N)^{2}}+o(N)\Big)CN(\log N)^{2}\bar{\varepsilon}^{2}\\
 & =\bar{\varepsilon}^{2}\Big(32C_{0}N^{2}+o\big(N(\log N)^{2}\big)\Big)
\end{align*}
with probability at least $(1-2e^{-C_{3}N})(1-N^{-\beta})$, \emph{i.e.},
$1-\mathcal{O}(N^{-\beta})$.

Using the singular vector under perturbation results in \cite{vu:J11singular}
and the expression for $\bar{\epsilon}$ in (\ref{eq:app-bar-epsilon}),
and following similar calculations in (\ref{eq:app-pf-sin})\textendash (\ref{eq:app-pf-singular-vector-perturbation}),
we obtain 
\begin{align}
\big|\bm{v}_{1}^{\text{T}}\bm{\mathsf{e}}_{1}\big| & \leq\frac{1}{2}\Big(\frac{2\sigma(\hat{\bm{\mathsf{E}}})}{\kappa\alpha_{1}}\Big)^{2}\nonumber \\
 & =\frac{2L^{2}}{\kappa^{2}}\Big(\frac{K_{h}^{2}L^{2}}{2N^{2}}+\frac{\sqrt{2}K_{h}L}{N}\frac{\bar{\sigma}_{\text{n}}}{\alpha_{1}}+\frac{\bar{\sigma}_{\text{n}}^{2}}{\alpha_{1}^{2}}\Big)\label{eq:app-pf-singular-perturbation-partial}\\
 & \qquad\qquad\qquad\qquad\times\Big(32C_{0}+o\big(\frac{(\log N)^{2}}{N}\big)\Big)\nonumber 
\end{align}
with probability $1-\mathcal{O}(N^{-\beta})$.

Finally, if one choose parameters $M,N$ such that $M$ is the smallest
integer satisfying $M\geq CN(\log N)^{2}$, then we can write $M=CN(\log N)^{2}+\epsilon_{0}$,
where $\epsilon_{0}\in[0,1)$. As a result, from $M^{1-\alpha}=[CN(\log N)^{2}+\epsilon_{0}]^{1-\alpha}$,
we arrive at 
\begin{equation}
\frac{M^{1-\alpha}}{N}=C^{1-\alpha}\frac{(\log N)^{2(1-\alpha)}}{N^{\alpha}}\bigg[1+\frac{\epsilon_{0}}{CN(\log N)^{2}}\bigg]^{1-\alpha}.\label{eq:app-M-N-relation}
\end{equation}
It is clear that for $0<\alpha<\frac{1}{2}$, the term $\frac{(\log N)^{2(1-\alpha)}}{N^{\alpha}}$
converges to $0$ as $N\to\infty$. As a result, there exists a finite
integer $N_{1}<\infty$ such that for any $N>N_{1}$, the right hand
side of (\ref{eq:app-M-N-relation}) is less than $1$, \emph{i.e.},
$\frac{M^{1-\alpha}}{N}\leq1$. This implies that $\frac{1}{N}\leq\frac{1}{M^{1-\alpha}}$.
Similarly, using $M^{2-\alpha}=[CN(\log N)^{2}+\epsilon_{0}]^{2-\alpha}$,
one can obtain $\frac{1}{N^{2}}\leq\frac{1}{M^{2-\alpha}}$ for asymptotically
large $M,N$. Substituting $\frac{1}{N}\leq\frac{1}{M^{1-\alpha}}$
and $\frac{1}{N^{2}}\leq\frac{1}{M^{2-\alpha}}$ to (\ref{eq:app-pf-singular-perturbation-partial})
and omitting the $o(\cdot)$ term, one can obtain 
\begin{equation}
3C_{e}\big|\bm{v}_{1}^{\text{T}}\bm{\mathsf{e}}_{1}\big|\leq\frac{C_{0}'C_{e}L^{2}}{\kappa^{2}}\Big(\frac{K_{h}^{2}L^{2}}{2M^{2-\alpha}}+\frac{\sqrt{2}K_{h}L}{M^{1-\alpha}}\frac{\bar{\sigma}_{\text{n}}}{\alpha_{1}}+\frac{\bar{\sigma}_{\text{n}}^{2}}{\alpha_{1}^{2}}\Big)\triangleq\phi_{2}\label{eq:app-pf-singular-perturbation-partial-phi2}
\end{equation}
with probability $1-\mathcal{O}(N^{-\beta})$, where $C_{0}'=192C_{0}$. 

Note that for large enough $M$ and high SNR $\alpha_{1}/\bar{\sigma}_{\text{n}}$,
the term $\phi_{2}$ in (\ref{eq:app-pf-singular-perturbation-partial-phi2})
becomes small enough to satisfy $t=\tau^{-1}(1-\phi_{2})\leq a_{\nu}$,
such that $\tau(t)\leq\tau(0)+\tau'(0)t+\frac{1}{2}(\tau''(0)+\nu)t^{2}$
holds, and therefore, (\ref{eq:app-mse-long}) simplifies to (\ref{eq:app-mse}).
Applying $3C_{e}\phi_{0}\leq\phi_{2}$ to (\ref{eq:app-mse}) and
using (\ref{eq:app-pf-singular-perturbation-partial-phi2}), we obtain
(\ref{eq:squared-error-bound-partial}).


\section{Proof of Theorem \ref{thm:Unique-local-maximum}}

\label{app:pf-thm-unimodal-local-maximum}

\Ac{wlog}, assume that the two sources are located at $\bm{s}_{1}=(0,0)$
and $\bm{s}_{2}=(D\cos\theta,D\sin\theta)$. Define $w_{\text{c}}(x,\theta)=w(x-D\cos\theta)$
and $w_{\text{s}}(x,\theta)=w(x-D\sin\theta)$, where $w(x)$ is the
unimodal and symmetric function defined in Proposition \ref{prop:symmetry-signature-vector}. 

Using Proposition \ref{prop:symmetry-signature-vector}, we have the
following approximation under large $N$ 
\begin{align*}
\|\bm{u}_{1}+\bm{u}_{2}\|_{2}^{2} & \approx\int_{-\infty}^{\infty}\big(w(x)+w_{\text{c}}(x,\theta)\big)^{2}dx\\
 & \triangleq\langle(w+w_{\text{c}})^{2}\rangle
\end{align*}
where we have defined an integration operator $\left\langle \cdot\right\rangle $
as 
\[
\left\langle f\right\rangle \triangleq\int_{-\infty}^{\infty}f(x,\theta)dx
\]
for a function $f(x,\theta)$. Note that the operator $\left\langle \cdot\right\rangle $
is linear and satisfies the additive property, \emph{i.e.}, $\langle af\rangle=a\langle f\rangle$
and $\langle f+g\rangle=\langle f\rangle+\langle g\rangle$, for a
constant $a$ and a function $g(x,\theta)$.

Similarly, $\|\bm{v}_{1}+\bm{v}_{2}\|_{2}^{2}\approx\langle(w+w_{\text{s}})^{2}\rangle$,
$\|\bm{u}_{1}-\bm{u}_{2}\|_{2}^{2}\approx\langle(w-w_{\text{c}})^{2}\rangle$,
and $\|\bm{v}_{1}-\bm{v}_{2}\|_{2}^{2}\approx\langle(w-w_{\text{s}})^{2}\rangle$.

In the case of $\alpha_{1}=\alpha_{2}$, it can be shown that the
\ac{svd} of $\bm{H}$ is given by 
\begin{equation}
\bm{H}=\sigma_{1}\bm{p}_{1}\bm{q}_{1}^{\text{T}}+\sigma_{2}\bm{p}_{2}\bm{q}_{2}^{\text{T}}\label{eq:SVD-H-two-source}
\end{equation}
where $\sigma_{1}(\theta)=\frac{1}{2}\alpha_{1}\|\bm{u}_{1}+\bm{u}_{2}\|_{2}\|\bm{v}_{1}+\bm{v}_{2}\|_{2}$
and $\sigma_{2}(\theta)=\frac{1}{2}\alpha_{1}\|\bm{u}_{1}-\bm{u}_{2}\|_{2}\|\bm{v}_{1}-\bm{v}_{2}\|_{2}$
are the singular values, and 
\[
\bm{p}_{1}=\frac{\bm{u}_{1}+\bm{u}_{2}}{\|\bm{u}_{1}+\bm{u}_{2}\|_{2}},\quad\bm{q}_{1}=\frac{\bm{v}_{1}+\bm{v}_{2}}{\|\bm{v}_{1}+\bm{v}_{2}\|_{2}}
\]
\[
\bm{p}_{2}=\frac{\bm{u}_{1}-\bm{u}_{2}}{\|\bm{u}_{1}-\bm{u}_{2}\|_{2}},\quad\bm{q}_{2}=\frac{\bm{v}_{1}-\bm{v}_{2}}{\|\bm{v}_{1}-\bm{v}_{2}\|_{2}}
\]
are the corresponding singular vectors. Here, all the components are
functions of $\theta$. 

As a result, 
\begin{align}
\mu(\theta) & \triangleq\frac{\sigma_{2}(\theta)^{2}}{\sigma_{1}(\theta)^{2}}\nonumber \\
 & \approx\frac{\langle(w-w_{\text{c}})^{2}\rangle\langle(w-w_{\text{s}})^{2}\rangle}{\langle(w+w_{\text{c}})^{2}\rangle\langle(w+w_{\text{s}})^{2}\rangle}\nonumber \\
 & =\frac{\big(1-\langle w\cdot w_{\text{c}}\rangle\big)\big(1-\langle w\cdot w_{\text{s}}\rangle\big)}{\big(1+\langle w\cdot w_{\text{c}}\rangle\big)\big(1+\langle w\cdot w_{\text{s}}\rangle\big)}\label{eq:mu-function}
\end{align}
where we have used the fact that $\langle(w-w_{\text{c}})^{2}\rangle=\langle w^{2}\rangle+\langle w_{\text{c}}^{2}\rangle-2\langle w\cdot w_{\text{c}}\rangle=2\big(1-\langle w\cdot w_{\text{c}}\rangle\big)$.

In addition, from properties of calculus, if $f(x,\theta)$ and $\frac{\partial}{\partial\theta}f(x,\theta)$
are continuous in $\theta,$ then 
\begin{align*}
\frac{d}{d\theta}\left\langle f\right\rangle  & =\frac{d}{d\theta}\int_{-\infty}^{\infty}f(x,\theta)dx\\
 & =\int_{-\infty}^{\infty}\frac{\partial}{\partial\theta}f(x,\theta)dx=\Big\langle\frac{\partial}{\partial\theta}f\Big\rangle.
\end{align*}
Therefore, defining 
\begin{align*}
w_{\text{c}}'(x,\theta) & \triangleq\frac{d}{dx}w(x)\big|_{x=x-D\cos\theta}\\
w_{\text{s}}'(x,\theta) & \triangleq\frac{d}{dx}w(x)\big|_{x=x-D\sin\theta}
\end{align*}
we have
\begin{align*}
\frac{d}{d\theta}\langle w\cdot w_{\text{c}}\rangle & =\langle w\cdot\frac{\partial}{\partial\theta}w_{\text{c}}(x,\theta)\rangle=\langle w\cdot w_{\text{c}}'\rangle D\sin\theta\\
\frac{d}{d\theta}\langle w\cdot w_{\text{s}}\rangle & =\langle w\cdot\frac{\partial}{\partial\theta}w_{\text{s}}(x,\theta)\rangle=-\langle w\cdot w_{\text{s}}'\rangle D\cos\theta.
\end{align*}

With some algebra, the derivative of $\mu(\theta)$ can be obtained
as 
\begin{align*}
\frac{d}{d\theta}\mu(\theta) & =\eta\Big[D\cos\theta\langle w\cdot w_{\text{s}}'\rangle\big(1-\langle w\cdot w_{\text{c}}\rangle^{2}\big)\\
 & \qquad\qquad-D\sin\theta\langle w\cdot w_{\text{c}}'\rangle\big(1-\langle w\cdot w_{\text{s}}\rangle^{2}\big)\Big]\\
 & =\eta\Big[-t\cdot\tau'(s)\big(1-\tau(t)^{2}\big)+s\cdot\tau'(t)\big(1-\tau(s)^{2}\big)\Big]
\end{align*}
where $\eta=2\big(1+\langle w\cdot w_{\text{c}}\rangle\big)^{-2}\big(1+\langle w\cdot w_{\text{s}}\rangle\big)^{-2}$,
$t=D\cos\theta$, and $s=D\sin\theta$. 

Note that $0<s<t$ for $0<\theta<\frac{\pi}{4}$. Applying condition
(\ref{eq:correlation-condition}), we have 
\begin{align*}
\frac{d}{d\theta}\mu(\theta) & >\eta\cdot t\cdot\tau'(s)\Big[\big(1-\tau(s)^{2}\big)-\big(1-\tau(t)^{2}\big)\Big]\\
 & =\eta\cdot t\cdot\tau'(s)\big(\tau(t)^{2}-\tau(s)^{2}\big)\\
 & >0
\end{align*}
since $\tau'(s)<0$ and $\tau(t)<\tau(s)$ for $0<s<t$.

This confirms that $\mu(\theta)$ is a strictly increasing function,
and hence $\rho(\theta)$ is a strictly decreasing function in $\theta\in(0,\frac{\pi}{4})$.
The result is thus proved.\hfill\IEEEQED

\section{Proof of Proposition \ref{prop:Partial-convergence}}

\label{app:pf-partial-convergence}

For notation brevity, we drop the symbol $t$ for the variables related
to the continuous-time algorithm dynamic $\bm{X}(t)$ wherever the
meaning is clear.

Denote the Hessian function of $f(\bm{X})$ along the direction $\bm{\xi}\in\mathbb{R}^{N\times N}$
as 
\[
h(\bm{\xi},\bm{X})=\lim_{\gamma\to0}\frac{1}{\gamma}\Big[g(\bm{X}+\gamma\bm{\xi})-g(\bm{X})\Big].
\]
Then, a Taylor's expansion of the gradient function $g(\bm{X})$ yields
\[
g(\bm{X})=g(\hat{\bm{\mathsf{X}}})+h(s\bm{\xi},\hat{\bm{\mathsf{X}}})+o(s)
\]
where $\bm{\xi}=\frac{1}{\gamma}(\bm{X}-\hat{\bm{\mathsf{X}}})$ and
$\gamma=\|\bm{X}-\hat{\bm{\mathsf{X}}}\|_{\text{F}}$. Therefore,
as $g(\hat{\bm{\mathsf{X}}})=\mathbf{0}$, it holds that $g(\bm{X})\approx h(\bm{\mathsf{X}}_{e},\hat{\bm{\mathsf{X}}})$
for small $s$. 

In addition, it holds that 
\begin{align*}
\frac{d}{dt}\mathcal{E}(\bm{\mathsf{X}}_{e}(t)) & =\text{tr}\Big\{\big(\bm{X}(t)-\hat{\bm{\mathsf{X}}}\big)^{\text{T}}\frac{d}{dt}\bm{X}(t)\Big\}\\
 & =-\text{tr}\Big\{\big(\bm{X}(t)-\hat{\bm{\mathsf{X}}}\big)^{\text{T}}g(\bm{X}(t))\Big\}.
\end{align*}
As a result, $\frac{d}{dt}\mathcal{E}(\bm{\mathsf{X}}_{e})=-\text{tr}\Big\{\bm{\mathsf{X}}_{e}^{\text{T}}h(\bm{\mathsf{X}}_{e}^{\text{T}},\hat{\bm{\mathsf{X}}})\Big\}+o(\|\bm{\mathsf{X}}_{e}\|_{\text{F}}^{2})$. 

Using (\ref{eq:gradient-iteration-U}) \textendash{} (\ref{eq:gradient-iteration-V})
and the fact that $\bm{\mathsf{H}}=\hat{\bm{\mathsf{U}}}\hat{\bm{\mathsf{V}}}^{\text{T}}$,
it can be shown that 
\begin{align*}
\frac{1}{2}\text{tr}\Big\{\bm{\mathsf{X}}_{e}^{\text{T}}h(\bm{\mathsf{X}}_{e},\bm{X})\Big\} & =\text{tr}\bigg\{\bm{\mathsf{U}}_{e}^{\text{T}}\Big(\bm{W}\varodot\big(\bm{\mathsf{U}}_{e}\bm{V}^{\text{T}}\big)\Big)\bm{V}\\
 & \qquad+\bm{\mathsf{V}}_{e}^{\text{T}}\Big(\bm{W}^{\text{T}}\varodot\big(\bm{\mathsf{V}}_{e}\bm{U}^{\text{T}}\big)\Big)\bm{U}\bigg\}\\
 & \qquad\quad+\text{tr}\bigg\{\bm{\mathsf{U}}_{e}^{\text{T}}\Big(\bm{W}\varodot\big(\bm{U}\bm{\mathsf{V}}_{e}^{\text{T}}\big)\Big)\bm{V}\bigg\}\\
 & \qquad\qquad+\text{tr}\bigg\{\bm{\mathsf{V}}_{e}^{\text{T}}\Big(\bm{W}^{\text{T}}\varodot\big(\bm{V}\bm{\mathsf{U}}_{e}^{\text{T}}\big)\Big)\bm{U}\bigg\}.
\end{align*}

Note that under $\bm{\mathsf{H}}=\bm{H}$, we have $\bm{W}=\mathbf{1}_{N\times N}$.
In addition, if $\|\bm{\mathsf{V}}_{e}\|_{\text{F}}=o\big(\|\bm{\mathsf{U}}_{e}\|_{\text{F}}\big)$,
\emph{i.e.}, $\|\bm{\mathsf{V}}_{e}\|_{\text{F}}\ll\|\bm{\mathsf{U}}_{e}\|_{\text{F}}$,
then 
\begin{align*}
\frac{1}{2}\text{tr}\Big\{\bm{\mathsf{X}}_{e}^{\text{T}}h(\bm{\mathsf{X}}_{e},\bm{X})\Big\} & =\text{tr}\Big\{\bm{\mathsf{U}}_{e}^{\text{T}}\bm{\mathsf{U}}_{e}\bm{V}^{\text{T}}\bm{V}\Big\}+o\Big(\|\bm{\mathsf{U}}_{e}\|_{\text{F}}^{2}\Big)\\
 & =\text{tr}\Big\{\bm{\mathsf{U}}_{e}\bm{V}^{\text{T}}\bm{V}\bm{\mathsf{U}}_{e}\Big\}+o\Big(\|\bm{\mathsf{U}}_{e}\|_{\text{F}}^{2}\Big)\\
 & =\sum_{j=1}^{N}\bm{\mathsf{u}}_{j}^{(e)}\bm{V}^{\text{T}}\bm{V}\bm{\mathsf{u}}_{j}^{(e)\text{T}}+o\Big(\|\bm{\mathsf{U}}_{e}\|_{\text{F}}^{2}\Big)\\
 & \geq\sum_{j=1}^{N}\lambda_{K}\big(\bm{V}^{\text{T}}\bm{V}\big)\|\bm{\mathsf{u}}_{j}^{(e)}\|^{2}+o\Big(\|\bm{\mathsf{U}}_{e}\|_{\text{F}}^{2}\Big)
\end{align*}
where $\bm{\mathsf{u}}_{j}^{(e)}$ is the $j$th row vector of the
matrix $\bm{\mathsf{U}}_{e}$. As a result, we have 
\[
\frac{d}{dt}\mathcal{E}(\bm{\mathsf{X}}_{e})\leq-2\lambda_{K}(\hat{\bm{\mathsf{V}}}\hat{\bm{\mathsf{V}}}^{\text{T}})\|\bm{\mathsf{U}}_{e}\|_{\text{F}}^{2}+o\Big(\|\bm{\mathsf{U}}_{e}\|_{\text{F}}^{2}\Big)
\]
proving (\ref{eq:error-convergence-rate-U}).

For the case of $\|\bm{\mathsf{U}}_{e}\|_{\text{F}}=o\big(\|\bm{\mathsf{V}}_{e}\|_{\text{F}}\big)$,
the derivation to show (\ref{eq:error-convergence-rate-V}) is similar.
\hfill\IEEEQED

\bibliographystyle{IEEEtran}

\begin{thebibliography}{10}
\providecommand{\url}[1]{#1}
\csname url@samestyle\endcsname
\providecommand{\newblock}{\relax}
\providecommand{\bibinfo}[2]{#2}
\providecommand{\BIBentrySTDinterwordspacing}{\spaceskip=0pt\relax}
\providecommand{\BIBentryALTinterwordstretchfactor}{4}
\providecommand{\BIBentryALTinterwordspacing}{\spaceskip=\fontdimen2\font plus
\BIBentryALTinterwordstretchfactor\fontdimen3\font minus
  \fontdimen4\font\relax}
\providecommand{\BIBforeignlanguage}[2]{{%
\expandafter\ifx\csname l@#1\endcsname\relax
\typeout{** WARNING: IEEEtran.bst: No hyphenation pattern has been}%
\typeout{** loaded for the language `#1'. Using the pattern for}%
\typeout{** the default language instead.}%
\else
\language=\csname l@#1\endcsname
\fi
#2}}
\providecommand{\BIBdecl}{\relax}
\BIBdecl

\bibitem{BecStoLi:J08}
A.~Beck, P.~Stoica, and J.~Li, ``Exact and approximate solutions of source
  localization problems,'' \emph{{IEEE} Trans. Signal Process.}, vol.~56,
  no.~5, pp. 1770--1778, 2008.

\bibitem{SheHu:J05}
X.~Sheng and Y.-H. Hu, ``Maximum likelihood multiple-source localization using
  acoustic energy measurements with wireless sensor networks,'' \emph{{IEEE}
  Trans. Signal Process.}, vol.~53, no.~1, pp. 44--53, 2005.

\bibitem{MeeMitNar:J08}
C.~Meesookho, U.~Mitra, and S.~Narayanan, ``On energy-based acoustic source
  localization for sensor networks,'' \emph{{IEEE} Trans. Signal Process.},
  vol.~56, no.~1, pp. 365--377, 2008.

\bibitem{LiuZakChe:J12}
C.~Liu, Y.~V. Zakharov, and T.~Chen, ``Broadband underwater localization of
  multiple sources using basis pursuit de-noising,'' \emph{{IEEE} Trans. Signal
  Process.}, vol.~60, no.~4, pp. 1708--1717, 2012.

\bibitem{chen2010mobile}
H.~Chen, Q.~Shi, R.~Tan, H.~V. Poor, and K.~Sezaki, ``Mobile element assisted
  cooperative localization for wireless sensor networks with obstacles,''
  \emph{{IEEE} Trans. Wireless Commun.}, vol.~9, no.~3, 2010.

\bibitem{he2003range}
T.~He, C.~Huang, B.~M. Blum, J.~A. Stankovic, and T.~Abdelzaher, ``Range-free
  localization schemes for large scale sensor networks,'' in \emph{Proc. Int.
  Conf. Mobile Computing and Networking}, 2003, pp. 81--95.

\bibitem{wang2011weighted}
J.~Wang, P.~Urriza, Y.~Han, and D.~Cabric, ``Weighted centroid localization
  algorithm: theoretical analysis and distributed implementation,''
  \emph{{IEEE} Trans. Wireless Commun.}, vol.~10, no.~10, pp. 3403--3413, 2011.

\bibitem{ChoMit:C15}
S.~Choudhary and U.~Mitra, ``Analysis of target detection via matrix
  completion,'' in \emph{Proc. IEEE Int. Conf. Acoustics, Speech, and Signal
  Processing}, 2015, pp. 3771--3775.

\bibitem{CheMit:C17a}
J.~Chen and U.~Mitra, ``Rotated eigenstructure analysis for source localization
  without energy-decay models,'' in \emph{Proc. Int. Conf. Digital Signal
  Process.}, London, UK, Aug. 2017.

\bibitem{JinSohWon:J10}
Y.~Jin, W.-S. Soh, and W.-C. Wong, ``Indoor localization with channel impulse
  response based fingerprint and nonparametric regression,'' \emph{{IEEE}
  Trans. Wireless Commun.}, vol.~9, no.~3, pp. 1120--1127, 2010.

\bibitem{KimParYooKimPar:J13}
W.~Kim, J.~Park, J.~Yoo, H.~J. Kim, and C.~G. Park, ``Target localization using
  ensemble support vector regression in wireless sensor networks,'' \emph{IEEE
  Trans. on Cybernetics}, vol.~43, no.~4, pp. 1189--1198, 2013.

\bibitem{abadi2012blind}
S.~H. Abadi, D.~Rouseff, and D.~R. Dowling, ``Blind deconvolution for robust
  signal estimation and approximate source localization,'' \emph{J. of
  Acoustical Society of America}, vol. 131, no.~4, pp. 2599--2610, 2012.

\bibitem{ahmed2014blind}
A.~Ahmed, B.~Recht, and J.~Romberg, ``Blind deconvolution using convex
  programming,'' \emph{{IEEE} Trans. Inf. Theory}, vol.~60, no.~3, pp.
  1711--1732, 2014.

\bibitem{li2016off}
Y.~Li and Y.~Chi, ``Off-the-grid line spectrum denoising and estimation with
  multiple measurement vectors,'' \emph{{IEEE} Trans. Signal Process.},
  vol.~64, no.~5, pp. 1257--1269, 2016.

\bibitem{tang2013compressed}
G.~Tang, B.~N. Bhaskar, P.~Shah, and B.~Recht, ``Compressed sensing off the
  grid,'' \emph{{IEEE} Trans. Inf. Theory}, vol.~59, no.~11, pp. 7465--7490,
  2013.

\bibitem{yang2016super}
D.~Yang, G.~Tang, and M.~B. Wakin, ``Super-resolution of complex exponentials
  from modulations with unknown waveforms,'' \emph{{IEEE} Trans. Inf. Theory},
  vol.~62, no.~10, pp. 5809--5830, 2016.

\bibitem{cichocki2006new}
A.~Cichocki, R.~Zdunek, and S.-i. Amari, ``New algorithms for non-negative
  matrix factorization in applications to blind source separation,'' in
  \emph{Proc. IEEE Int. Conf. Acoustics, Speech, and Signal Processing},
  vol.~5, 2006, pp. 621--624.

\bibitem{ChoKumNarMit:J16}
S.~Choudhary, N.~Kumar, S.~Narayanan, and U.~Mitra, ``Active target
  localization using low-rank matrix completion and unimodal regression,''
  \emph{arXiv preprint arXiv:1601.07254}, 2016.

\bibitem{CheMit:C17b}
J.~Chen and U.~Mitra, ``Underwater acoustic source localization using
  unimodal-constrained matrix factorization,'' in \emph{Proc. Asilomar Conf.
  Signals, Systems and Computers}, Pacific Grove, CA, USA, Nov. 2017.

\bibitem{CanRec:J12}
E.~Candes and B.~Recht, ``Exact matrix completion via convex optimization,''
  \emph{Commun. of the ACM}, vol.~55, no.~6, pp. 111--119, 2012.

\bibitem{CanPla:J10}
E.~J. Candes and Y.~Plan, ``Matrix completion with noise,'' \emph{Proceedings
  of the IEEE}, vol.~98, no.~6, pp. 925--936, 2010.

\bibitem{Bertsekas:1999bs}
D.~Bertsekas, \emph{Nonlinear Programming}.\hskip 1em plus 0.5em minus
  0.4em\relax Athena Scientific, 1999.

\bibitem{Boyd:2004kx}
S.~Boyd and L.~Vandenberghe, \emph{Convex Optimization}.\hskip 1em plus 0.5em
  minus 0.4em\relax Cambridge University Press, 2004.

\bibitem{nemeth:J10}
A.~N{\'e}meth and S.~N{\'e}meth, ``How to project onto an isotone projection
  cone,'' \emph{Linear Algebra and its Applications}, vol. 433, no.~1, pp.
  41--51, 2010.

\bibitem{beygi2015multi}
S.~Beygi and U.~Mitra, ``Multi-scale multi-lag channel estimation using low
  rank approximation for {OFDM},'' \emph{{IEEE} Trans. Signal Process.},
  vol.~63, no.~18, pp. 4744--4755, 2015.

\bibitem{brekhovskikh2003fundamentals}
L.~M. Brekhovskikh, \emph{Fundamentals of ocean acoustics}.\hskip 1em plus
  0.5em minus 0.4em\relax Springer Science \& Business Media, 2003.

\bibitem{buldygin1980sub}
V.~V. Buldygin and Y.~V. Kozachenko, ``Sub-{G}aussian random variables,''
  \emph{Ukrainian Mathematical Journal}, vol.~32, no.~6, pp. 483--489, 1980.

\bibitem{stromberg1994probability}
K.~Stromberg, \emph{Probability for analysts}.\hskip 1em plus 0.5em minus
  0.4em\relax CRC Press, 1994.

\bibitem{rudelson2010non}
M.~Rudelson and R.~Vershynin, ``Non-asymptotic theory of random matrices:
  extreme singular values,'' \emph{Proceedings of the Int. Congress of
  Mathematicians, preprint arXiv:1003.2990}, pp. 1576--1602, 2010.

\bibitem{vu:J11singular}
V.~Vu, ``Singular vectors under random perturbation,'' \emph{Random Structures
  \& Algorithms}, vol.~39, no.~4, pp. 526--538, 2011.

\end{thebibliography}

\end{document}